\documentclass{amsart}
\usepackage{amssymb}
\usepackage{amsmath}
\usepackage{graphicx} % Required for inserting images
\usepackage{color}
\usepackage{xcolor}
\usepackage{appendix}
\usepackage[sep=5pt, offset=1em]{simpler-wick}
\usepackage{verbatim}
\usepackage[breaklinks]{hyperref}
\usepackage{enumerate}
\usepackage{stmaryrd} % for \mapsfrom
\theoremstyle{plain} % 'default
\newtheorem{thm}{Theorem}[section]
\newtheorem{lem}[thm]{Lemma}
\newtheorem{cor}[thm]{Corollary}
\newtheorem{prop}[thm]{Proposition}

\newtheorem{introthm}{Theorem}

\theoremstyle{definition}
\newtheorem{defn}[thm]{Definition}
\newtheorem{expl}[thm]{Example}

\newtheorem*{defn*}{Definition}
\theoremstyle{remark}
\newtheorem{rem}[thm]{Remark}

\newcommand{\g}{\mathfrak{g}}
\newcommand{\mc}{\mathcal}
\newcommand{\mr}{\mathrm}
\newcommand{\bt}{\bullet}
\newcommand{\ra}{\rightarrow}

\newcommand{\wh}{\widehat}
\newcommand{\til}{\widetilde}

\newcommand{\ad}{\mathrm{ad}}
\newcommand{\bl}{\color{blue}}

\newcommand{\F}{\mathcal{F}}
\newcommand{\LL}{\mathcal{L}}
\newcommand{\RR}{\mathbb{R}}
\newcommand{\CC}{\mathbb{C}}
\newcommand{\ZZ}{\mathbb{Z}}
\newcommand{\HD}{\mathrm{Dens}^{\frac12}}
\newcommand{\xra}{\xrightarrow}
\newcommand{\dd}{\partial}
\newcommand{\ii}{\mathrm{int}}
\newcommand{\iii}{\mathrm{i}}
\newcommand{\fun}{\widehat{S}\mathcal{F}^*[[\hbar]]}
\newcommand{\funprime}{\widehat{S}\mathcal{F'}^*[[\hbar]]}
\newcommand{\SDR}[5]{#5\;\;\raisebox{1.3pt}{\rotatebox[origin=c]{90}{$\curvearrowright$}}\;\; #1 \underset{#4}{\stackrel{#3}{\leftrightarrows}} #2}  %arguments: big cx, small cx, i, p, K
\newcommand{\ddd}{\mathrm{d}}
\newcommand{\cl}{\mathrm{cl}}
\newcommand{\formal}{\mathrm{formal}}
\newcommand{\Cr}{\mathrm{cr}}
\newcommand{\ul}{\underline}
\newcommand{\A}{\mathcal{A}}
\newcommand{\B}{\mathcal{B}}
\newcommand{\zm}{\mathrm{zm}}
\newcommand{\N}{\mathcal{N}}
\newcommand{\ol}{\overline}
\newcommand{\HH}{\mathbb{H}}
\newcommand{\xx}{\underline{x}}
\newcommand{\xla}{\xleftarrow}
\newcommand{\dr}{\mathrm{d}}

\title{BV pushforward as a quasi-isomorphism}
\author[A. S. Cattaneo]{Alberto S. Cattaneo}
\address{Institut f\"ur Mathematik, Universit\"at Z\"urich, Winterthurerstrasse 190, CH-8057 Z\"urich,
Switzerland}
\email{cattaneo@math.uzh.ch}

\author[P. Mnev]{Pavel Mnev}
\address{University of Notre Dame, Notre Dame, IN 46556, USA}
\address{Institut f\"ur Mathematik, Universit\"at Z\"urich, Winterthurerstrasse 190, CH-8057 Z\"urich,
Switzerland}
\email{pmnev@nd.edu}

\date{\today}

\begin{document}

\begin{abstract}
    Given a BV theory on a space of fields split into two subspaces (``infrared'' and ``ultraviolet''), one has the BV pushforward map $P_*$, sending observables to observables of the effective theory on the infrared space. This note proves that $P_*$ is a quasi-isomorphism of BV complexes, by realizing it as a part of a strong deformation retraction constructed using the homological perturbation lemma. Two proofs are given: (i) comparing Feynman diagrams for $P_*$ with ``cable diagrams'' arising from homological perturbation theory and (ii) using topological quantum mechanics. This construction gives a formula for the quasi-inverse $i_\ii$ of $P_*$---the map lifting observables of the effective theory to the full theory. The topological quantum mechanics perspective---and its realization as an AKSZ theory---allows one to write $i_\ii$ as a path integral (realizing cable diagrams for $i_\ii$ as Feynman diagrams) and to study its classical limit.
\end{abstract}

\maketitle

\setcounter{tocdepth}{3}
\tableofcontents

\section{Introduction}
Consider a gauge theory in the Batalin--Vilkovisky formalism with the space of fields $\F$ (a graded vector space) equipped with a symplectic structure $\omega$ of degree $-1$ and an action $S$ -- a function satisfying the quantum master equation (QME) ${\frac12 \{S,S\}-\iii\hbar\Delta S=0}$. 

One has the ``BV complex'' of the theory -- the space of functions on $\F$ equipped with differential $Q_\hbar=\{S,-\}-\iii\hbar\Delta$. Its cocycles are the gauge-invariant observables.

Assume that $\F$ is split as a sum of two symplectic subspaces 
\begin{equation}\label{intro F=F'+F''}
\F=\F'\oplus \F''
\end{equation}
-- the ``infrared'' and ``ultraviolet'' subspaces. One has:
\begin{enumerate}[(i)]
    \item The effective action $S'$ on $\F'$ defined by the fiber BV integral
    \begin{equation}\label{intro S'}
        e^{\frac\iii\hbar S'}= \int_{\LL\subset \F''} e^{\frac\iii\hbar S}.
    \end{equation}
    Here $\LL$ is some fixed Lagrangian subspace of $\F''$ and the integral in the r.h.s. is considered as a family over $\F'$. The effective action automatically satisfies the QME on $\F'$.
    \item The BV pushforward map between BV complexes of the full and infrared (effective) theory\footnote{
    Here $\mr{Fun}(\F)$ stands for $\wh{S}\F^*[[\hbar]]$ -- the algebra of polynomials on $\F$ completed to formal power series, with coefficients in formal power series in $\hbar$.
    }
    %\marginpar{comment on what Fun(F) means}
    \begin{equation}
    \begin{array}{cccc}
     P_*\colon &(\mr{Fun}(\F),Q_\hbar) &\ra& (\mr{Fun}(\F'),Q'_\hbar) \\
    & O &\mapsto  & O'=e^{-\frac\iii\hbar S'} \int_\LL e^{\frac\iii\hbar S} O
    \end{array}
    \end{equation}
    It is automatically a chain map and in particular sends gauge-invariant observables of the full theory to those of the effective theory. 
\end{enumerate}

Assume that $S=S_0+S_\mr{int}$ is a perturbation of a free theory $S_0$, associated to a differential $d$ on $\F$ compatible with $\omega$ and with the splitting (\ref{intro F=F'+F''}), so that $d$ is acyclic on $\F''$. Also, assume that $\F$ is finite-dimensional. Under these assumptions we prove the following.
\begin{introthm}\!\!\!\footnote{Corollary \ref{cor: P_* is q-iso} in the main text.}
\label{intro thm A}
    The map $P_*$ is a quasi-isomorphism.
\end{introthm}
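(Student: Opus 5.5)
The proof goes in two stages: first a homotopy retraction for the free theory, then perturbation, and the aim is to exhibit $P_*$ as the projection of a strong deformation retraction.

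\emph{Free theory.} Write $d=d'\oplus d''$ according to the splitting (\ref{intro F=F'+F''}). Since $d''$ is acyclic on $\F''$ and compatible with $\omega''$, we can choose a chain contraction $K''$ of $\F''$ (so $d''K''+K''d''=\mathrm{id}_{\F''}$) whose image is the Lagrangian $\LL$. This is the standard Hodge-type choice. Extend it to an operator on $\F=\F'\oplus\F''$ by $K=0\oplus K''$. Together with the inclusion $\F'\hookrightarrow\F$ and the projection $\F\to\F'$, this gives a strong deformation retraction of $(\F,d)$ onto $(\F',d')$.

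Now pass to functions. On $\mr{Fun}(\F)=\wh S\F^*[[\hbar]]$, consider the free BV differential $Q_0=\{S_0,-\}-\iii\hbar\Delta$, with $\Delta=\Delta'+\Delta''$. Apply the homological perturbation lemma in two steps.
\begin{enumerate}[(a)]
\item Symmetrize $K$ to the symmetric algebra (the usual "$t$-averaged" homotopy $\mathbb{K}$). This gives an SDR between $(\mr{Fun}(\F),\{S_0,-\})$ and $(\mr{Fun}(\F'),\{S_0',-\})$, with the obvious pullback and restriction-to-$\F'$ maps.
\item Treat $-\iii\hbar\Delta''$ as a perturbation. Because $\mathbb K$ lowers polynomial degree in $\F''$ variables and $\Delta''$ lowers it by two, the perturbation series is finite in each polynomial degree. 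The resulting projection is $O\mapsto \int_\LL e^{\frac\iii\hbar S_0''}O$, i.e.\ Wick contraction with propagator $K''$. This is the free BV pushforward.
\end{enumerate}

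\emph{Interaction.} Next treat $\delta=\{S_\ii,-\}$ as a further perturbation. Here convergence is the issue. I would either assume $S_\ii$ is at least cubic or $\hbar$-weighted, or introduce a formal coupling $g$ with $S_\ii\mapsto gS_\ii$, so that the sum $\sum(\delta\mathbb K)^n$ converges in the $\wh S$/$\hbar$-adic (or $g$-adic) topology. Because $\F$ is finite-dimensional, all contractions are finite sums, so no analytic issues arise. The perturbation lemma then produces an SDR
$$\SDR{(\mr{Fun}(\F),Q_\hbar)}{(\mr{Fun}(\F'),\tilde Q')}{i_\ii}{p_\ii}{K_\ii}.$$
In particular $p_\ii$ is a quasi-isomorphism.

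\emph{Identification, the main obstacle.} It remains to show that $\tilde Q'=Q'_\hbar$ for the effective action $S'$ of (\ref{intro S'}), and that $p_\ii=P_*$. Expand $p_\ii=p\sum_n(\delta\mathbb K)^n$ as a sum over "cable diagrams": iterated insertions of $S_\ii$ joined by $K''$-propagators. On the other side, $e^{-\frac\iii\hbar S'}\int_\LL e^{\frac\iii\hbar S}O$ is a sum of Feynman graphs connected to $O$, with vertices from $S_\ii$ and edges $K''$. The task is a combinatorial matching. Each Feynman graph is built from an ordered sequence of vertex insertions, and the symmetrized homotopy $\mathbb K$ (with its $1/(\text{degree})$ weights) supplies exactly the ordering weights that sum to the automorphism factor $1/|\mathrm{Aut}|$. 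The factor $e^{-\frac\iii\hbar S'}$ removes the vacuum components, since $p$ only sees graphs connected through $O$. The same comparison applied to the induced differential gives $\tilde Q'=\{S',-\}-\iii\hbar\Delta'$.

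I expect this bookkeeping to be the hard step. To handle it, I would first verify it on monomials of $O$ using the $t$-integral representation of $\mathbb K$ (ordered time integrals $\int_{0<t_1<\dots<t_n<1}$), which turns the ordering weights into simplex volumes. As a fallback, I would prove directly that both $p_\ii$ and $P_*$ satisfy the same recursion in polynomial degree together with the same initial term, namely the free pushforward. Once the identification is made, $P_*=p_\ii$, and being the projection of an SDR it is a quasi-isomorphism with quasi-inverse $i_\ii$.
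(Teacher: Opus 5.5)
\textbf{Gap: the identification $p_\ii=P_*$ is not proved.} This is the step you yourself call the main obstacle, and it is the whole content of the theorem. Building the HPL retraction only shows that its projection $p_\ii$ is a quasi-isomorphism, which is automatic. What must be proved is $p_\ii=P_*$ and $\tilde Q'=Q'_\hbar$.

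Your primary route asserts that the $1/N$ weights of the symmetrized homotopy, together with the $\tfrac12$'s in $\Delta$ and the placements of $\kappa^\vee$, add up to $1/|\mr{Aut}(\Gamma)|$ over all cable diagrams producing a given Feynman graph. That coefficient check is exactly what the paper's own diagrammatic sketch leaves unverified (Remark~\ref{rem: two proofs of Thm 5.1 imply correct coeffs}). A single graph arises from many cable diagrams, with different vertex orderings and different black-dot positions, and nothing in your sketch controls that sum. The simplex-volume idea is essentially the Duhamel expansion the paper uses to get $p_\ii=\lim_{T\to\infty}p\,e^{-TH}$. By itself, though, it still leaves the graph-by-graph comparison undone. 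Your fallback is not yet an argument either: you don't say which recursion $P_*$ satisfies or why. Polynomial degree is also not the right parameter; the HPL series is organized by the number of $Q_\ii$ and $\hbar\Delta$ insertions.

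\textbf{How to close it.} The fallback can be made rigorous in a few lines. HPL gives $p_\ii=p_\hbar-p_\ii\,Q_\ii K_\hbar$. Let $w=(\text{polynomial degree})+2(\text{power of }\hbar)$. Since $K_\hbar$ does not lower $w$ and $Q_\ii$ raises it by at least one, this equation has at most one solution among maps that do not lower $w$. Feynman power counting shows that $P_*$ raises $w$ by $\sum_v(k_v-2)\ge 0$, so it is in that class. Now use four facts:
\begin{enumerate}[(i)]
\item $P_*$ is a chain map from $(\mr{Fun}(\F),Q_\hbar)$ to $(\mr{Fun}(\F'),Q'_\hbar)$.
\item $P_*\circ i=\mr{id}$.
\item $P_*\circ K_\hbar=0$. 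Indeed, $K_\hbar$ takes values in $\mr{im}\,\wh\kappa$, and every element of $\mr{im}\,\wh\kappa$ has a linear factor $x\mapsto\alpha(\kappa x)$, which vanishes on $\F'\oplus\LL$ because $\kappa\iota=0$ and $\kappa^2=0$.
\item $Q_{0,\hbar}K_\hbar+K_\hbar Q_{0,\hbar}=\mr{id}-i\,p_\hbar$.
\end{enumerate}
Then
\[
P_*Q_\ii K_\hbar=P_*Q_\hbar K_\hbar-P_*Q_{0,\hbar}K_\hbar=Q'_\hbar P_*K_\hbar-P_*\bigl(\mr{id}-i\,p_\hbar-K_\hbar Q_{0,\hbar}\bigr)=p_\hbar-P_*,
\]
so $P_*=p_\ii$. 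Then $\tilde Q'=Q'_\hbar$ follows from $\tilde Q'p_\ii=p_\ii Q_\hbar=Q'_\hbar p_\ii$ and surjectivity of $p_\ii$; no second diagram comparison is needed.

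For comparison, the paper closes the gap via topological quantum mechanics. There $H=[Q_\hbar,\wh\kappa]$ is self-adjoint for the pairing $\ll O_1,O_2\gg=e^{-\frac\iii\hbar S'}\int_\LL e^{\frac\iii\hbar S}O_1O_2$, and $H(1)=0$. Hence $\ll O,1\gg=\ll i\,p_\ii O,1\gg=p_\ii O$.

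\textbf{Minor slips.}
\begin{itemize}
\item $\mathbb K=\nu\wh\kappa$ preserves polynomial degree; finiteness in step (b) comes from $\Delta''$ alone.
\item The interaction-step series is $p_\hbar\sum_n(-Q_\ii K_\hbar)^n$, not $p\sum_n(\delta\mathbb K)^n$ with the classical $p$ and $\mathbb K$.
\end{itemize}
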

The proof is by realizing $P_*$ as the map $p_\ii$ in the strong deformation retraction (SDR) of BV complexes
%\footnote{Theorem \ref{thm: Q' and p_int via BV pushforward} (ii).}
\begin{equation}\label{intro SDR for interacting theory}
    \SDR{(\F,Q_\hbar)}{(\F',Q'_\hbar)}{i_\ii}{p_\ii}{K_\ii}.
\end{equation}
This SDR is constructed by the homological perturbation lemma\footnote{Lemma \ref{lemma: HPL}.} (HPL), starting from the SDR for the free classical BV theory, and then deforming the differential $Q_0=\{S_0,-\}$ of the classical free BV complex to $Q_\hbar=Q_0+\{S_\ii,-\}-\iii\hbar\Delta$ -- the differential of the quantum interacting theory. 
Then, Theorem \ref{intro thm A} is an immediate corollary of (ii) of the following.
\begin{introthm}\!\!\!\footnote{Theorem \ref{thm: Q' and p_int via BV pushforward}.}
\label{intro thm B}
    \begin{enumerate}[(i)]
        \item The induced differential $Q'_\hbar$ in (\ref{intro SDR for interacting theory}), defined by HPL formula, has the form $Q'_\hbar=\{S',-\}-\iii\hbar \Delta'$, i.e., is generated by the effective action $S'$ as defined by the fiber BV integral \eqref{intro S'}.
        \item One has $p_\ii=P_*$.
    \end{enumerate}
\end{introthm}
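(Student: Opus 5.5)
We set up the free SDR first and then perturb it. Choose coordinates adapted to $\F=\F'\oplus\F''$ and write $d=d'\oplus d''$, with $d''$ acyclic on $\F''$. Fix a chain contraction $\kappa$ of $(\F'',d'')$ with $\mathrm{im}\,\kappa\subset\LL$ (gauge fixing), so that $d''\kappa+\kappa d''=\mathrm{id}$; compatibility with $\omega$ lets us take $\kappa$ skew-adjoint. This induces an SDR of classical free BV complexes
\[
\SDR{(\mr{Fun}(\F),Q_0)}{(\mr{Fun}(\F'),Q_0')}{i}{p}{K}.
\]
Here $i$ is pullback along the projection $\F\to\F'$, and $p$ restricts to $\F'\times\{0\}$; in the Gaussian-integral form, $p$ equals the free fiber BV integral $O\mapsto\int_\LL e^{\frac\iii\hbar S_0''}O$ evaluated at $\hbar$-order $0$. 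The homotopy $K$ is the symmetrized homotopy on $\wh S\F^*$ built from $\kappa$. We then perturb by $\delta=\{S_\ii,-\}-\iii\hbar\Delta$ and apply Lemma~\ref{lemma: HPL}. Convergence holds because $K\delta$ either raises polynomial degree in $\F''$-fields or powers of $\hbar$ appropriately; more precisely, the usual filtration by $\hbar$ together with degree in the interaction makes $(1-\delta K)^{-1}$ a well-defined formal series. The lemma yields $Q'_\hbar=Q_0'+p\,\delta(1-K\delta)^{-1}i$ and $p_\ii=p(1-\delta K)^{-1}$.

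For (ii), we expand $p_\ii(O)=\sum_n p(\delta K)^n O$ and show that this is a sum of diagrams. Each application of $K$ followed by $-\iii\hbar\Delta$ contracts two fields with the propagator $\kappa$, i.e.\ it produces an edge. Each application of $K$ followed by $\{S_\ii,-\}$ attaches an interaction vertex through an edge. Finally, $p$ sets the $\F''$-fields to zero. These are the ``cable diagrams''. On the other side, Wick's theorem expresses $e^{-\frac\iii\hbar S'}\int_\LL e^{\frac\iii\hbar S}O$ as a sum of Feynman graphs with propagator $\kappa$, vertices from $S_\ii$, and one marked vertex $O$. The factor $e^{-\frac\iii\hbar S'}$ removes exactly the components that do not touch $O$, leaving the connected-to-$O$ graphs (some of which may be disconnected vacuum-free pieces attached through $O$). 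The key step is a combinatorial identity: every such Feynman graph arises from cable diagrams, and the symmetric homotopy $K$ (which averages over orderings via the factors $1/(\text{degree})$ in the homotopy on symmetric powers) reproduces exactly the Feynman weight $1/|\mr{Aut}|$. I would prove this generatingly rather than graph-by-graph. Introduce a parameter $t$ and define $O_t=$ the BV pushforward with propagator $t\kappa$ (equivalently, integrating over the scaled fiber). Differentiating in $t$ gives the flow $\frac{d}{dt}O_t$, which is generated by $\Delta_\kappa$ and $\{S_\ii,\cdot\}_\kappa$-type operators. Then check that $p(1-\delta K)^{-1}$ solves the same ODE with the same initial condition. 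This is where I expect the main obstacle: matching the degree weights in $K$ with the $t$-integration, since $K$ on a monomial of $\F''$-degree $k$ carries $\int_0^1 t^{k-1}dt$-type factors.

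Part (i) follows similarly. $Q'_\hbar-Q_0'=p\delta(1-K\delta)^{-1}i$ is a derivation-plus-BV-Laplacian of $\Delta'$-type; evaluating on linear and quadratic functions identifies it as $\{S'-S'_0,-\}'-\iii\hbar(\Delta'-\Delta'_0)$ with $S'$ given by the connected vacuum graphs, i.e.\ \eqref{intro S'}. Alternatively, (i) follows from (ii). Since $p_\ii=P_*$ is surjective ($P_*\circ i$ is invertible, being a unipotent perturbation of the identity) and both $Q'_\hbar$ and $\{S',-\}-\iii\hbar\Delta'$ satisfy $Q'\circ P_*=P_*\circ Q_\hbar$ (the latter by the standard chain-map property of BV pushforward), they coincide. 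Theorem~\ref{intro thm A} then follows, since $p_\ii$ is part of an SDR.
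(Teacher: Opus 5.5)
Your derivation of (i) from (ii) is exactly the paper's argument: surjectivity of $p_\ii$ together with the chain-map property of the BV pushforward. Your dictionary between HPL cable diagrams and Feynman graphs is the paper's first proof, which the paper itself calls a sketch precisely because it does not check coefficients.

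The genuine gap is the step you flag as the main obstacle. The coefficient-level identity between the HPL series for $p_\ii$ and the Feynman expansion of $e^{-\frac\iii\hbar S'}\int_\LL e^{\frac\iii\hbar S}O$ is never proved. There is also a sign slip: with the convention $dK+Kd=\mr{id}-ip$ of Lemma~\ref{lemma: HPL}, the series is $p(1+\delta K)^{-1}$, not $p(1-\delta K)^{-1}$. Your sign would give $e^{-\frac{\iii\hbar}{2}\eta}$, the wrong Gaussian, already in the free case.

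The proposed ODE in a propagator scale $t$ does not close the gap as stated, for three reasons. First, the HPL side gives you no $t$-family to differentiate: $t\kappa$ is not a contraction for $t\neq 1$, so Lemma~\ref{lemma: HPL} does not apply to it. Second, the weights you must reproduce are the nested factors $1/N$ (with $N$ the $\F''$-degree at each stage) in $K=\wh\kappa\nu$. These arise as $\int_0^\infty e^{-s\LL_E}\,ds$, i.e.\ from a flow generated by the Euler vector field, not from rescaling the propagator. Third, many cable terms must be shown to vanish or cancel (via $h\kappa^\vee=0$ and sign cancellations), and none of this bookkeeping is done.

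The paper replaces the counting with a structural argument. Write $H=[Q_\hbar,\wh\kappa]=\LL_E+[Q_\ii-\iii\hbar\Delta,\wh\kappa]$.
\begin{enumerate}[(1)]
\item Expanding $e^{-TH}$ around $e^{-T\LL_E}$ reproduces the HPL series term by term. This gives $p_\ii=\lim_{T\to\infty}p\circ e^{-TH}$ and $\lim_{T\to\infty}e^{-TH}=i_\ii p_\ii$.
\item An integration by parts on $\LL$ shows that $H$ is self-adjoint for the pairing $\ll O_1,O_2\gg=e^{-\frac\iii\hbar S'}\int_\LL e^{\frac\iii\hbar S}O_1O_2$.
\item Since $H1=0$, this yields $P_*O=\ll O,1\gg=\ll i\,p_\ii O,1\gg=p_\ii O$.
\end{enumerate}

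A shortcut with the same content: $\wh\kappa O$ vanishes on $\F'\oplus\LL$ because $\kappa\iota=0$ and $\kappa^2=0$, so $P_*\circ\wh\kappa=0$. Every term of $K_\ii$ and of $i_\ii-i$ starts with $\wh\kappa$, hence $P_*K_\ii=0$ and $P_*i_\ii=P_*i=\mr{id}$. Applying the chain map $P_*$ to $Q_\hbar K_\ii+K_\ii Q_\hbar=\mr{id}-i_\ii p_\ii$ then gives $P_*=p_\ii$.

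Your proposal needs some structural input of this kind in place of the unfinished ODE comparison.
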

We give two proofs: %of the fact that $P_*$ coincides with $p_\ii$:
\begin{enumerate}[(1)]
    \item By comparing ``cable diagrams'' of homological perturbation theory on one side with Feynman graphs for the BV integrals defining $P_*$ and $S'$ on the other side, see Section \ref{sec: SDR of interacting theory}.
    \item By realizing the ingredients of the SDR (\ref{intro SDR for interacting theory}) in terms of topological quantum mechanics (TQM) -- an auxiliary 1d quantum field theory $\tau$ constructed out of the BV theory $(\F,\omega,S)$, see Section \ref{ss: TQM proof of thm on Q'_hbar and p_int}.\footnote{In fact, if $(\F,\omega,S)$ is itself an $n$-dimensional theory, then $\tau$ is an $(n+1)$-dimensional theory on a cylinder, cf. Section \ref{ss: tau for F an AKSZ theory}.
    One can think of $\tau$ as a ``bulk theory'' inducing $(\F,\omega,S)$ on the boundary, cf. Remark \ref{rem: S^tau as bulk theory for S}.}
\end{enumerate}
%Simultaneously,\footnote{Theorem \ref{thm: Q' and p_int via BV pushforward} (i).} we prove that the induced differential $Q'_\hbar$ in (\ref{intro SDR for interacting theory}), defined by HPL formula, has the form $Q'_\hbar=\{S',-\}-\iii\hbar \Delta'$, i.e., is generated by the effective action $S'$ as defined by fiber BV integral \eqref{intro S'}.

\textbf{The observable-lifting map $i_\ii$.}
Having $P_*=p_\ii$ as a part of the SDR package (\ref{intro SDR for interacting theory}) guarantees that it is a quasi-isomorphism and gives $i_\ii$ as its quasi-inverse -- a map lifting gauge-invariant observables of the infrared theory to gauge-invariant observables of the full theory.

Having such an observable-lifting map is of particular interest in certain situations. For instance, in \cite[Section 7.1]{C_surface_obs}, \cite[Section 5.2]{BCZ}, 4d Yang--Mills theory is realized as an effective theory for the topological $BF+B^2$ theory. Then, the computation of a correlation function of observables in Yang--Mills (e.g. Wilson loops) can be done by lifting those observables to the topological theory and calculating the correlator there. Cf. Section \ref{ss: example - lifting ab Wilson loop} for the abelian case.

%\marginpar{maybe use ``complicated'' instead of ``hard''?}
\textbf{Two ``easy'' and two ``hard'' maps.}
In the SDR (\ref{intro SDR for interacting theory}), the deformed differential $Q_\hbar$ is given as input and four other objects $Q'_\hbar,i_\ii,p_\ii,K_\ii$ are computed by HPL. By complexity of the result, they split into two pairs:
\begin{itemize}
    \item Two ``easy'' objects, $p_\ii$ and $Q'_\hbar$, which can be expressed in terms of BV integrals -- cable diagrams for them simplify to Feynman graphs computing perturbative BV integrals.
    \item Two ``hard'' objects, $i_\ii$ and $K_\ii$. The combinatorics of cable diagrams for them is more complicated and they do not simplify to Feynman diagrams (for BV integrals associated to theory $(\F,\omega,S)$).
However -- and this is one of the results of the paper -- $i_\ii,K_\ii$ can be recognized as BV integrals (and cable  diagrams can be recognized as Feynman graphs) for an auxiliary AKSZ theory -- a Lagrangian description of the TQM $\tau$.\footnote{
See Theorem \ref{thm: i_int,p_int,K_int from 1d AKSZ} and Section \ref{sss: specialization to i_int}.
}
\end{itemize}

\subsection{Topological quantum mechanics $\tau$}
Topological quantum mechanics $\tau$ is a 1d functorial QFT assigning to a point the space of states $\mc{H}_\mr{pt}=\mr{Fun}(\F)$ equipped with the differential $Q_\hbar$.\footnote{This is an example of topological quantum mechanics in the sense of A. Losev \cite{Losev}; $\tau$ is the name we use for our special example of topological quantum mechanics.
}
Thus, the space of states of $\tau$ is the BV complex of theory $(\F,\omega,S)$. Additionally, $\mc{H}_\mr{pt}$ is equipped with a second differential of degree $-1$ -- the gauge-fixing operator $\wh\kappa$ -- the extension of a chain contraction $\kappa$ of $\F$ (defining the gauge-fixing Lagrangian $\LL=\mr{im}(\kappa)$ in (\ref{intro S'})) to a derivation of $\mr{Fun}(\F)$. Then, the Hamiltonian of the TQM $\tau$ is $H=[Q_\hbar,\wh\kappa]$ and the partition function for an interval of length $T$ is
\begin{equation}\label{intro Z_T,dT}
    Z_{T,\dr T}= e^{-T H+\dr T\, \wh\kappa} \in \Omega^\bt(\RR_+)\otimes \mr{End}(\mc{H}_\mr{pt})
\end{equation}
-- a nonhomogeneous form on the space $\RR_+$ of lengths of an interval valued in operators on $\mc{H}_\mr{pt}$; we will write $Z_T$ for its 0-form component $e^{-TH}$. By construction this partition function satisfies the ``topologicity'' equation ${(\dr_T+[Q_\hbar,-])Z_{T,\dr T}=0}$, which implies that $\int_0^\infty Z_{T,dT}$ is a chain homotopy between identity and $\lim_{T\ra\infty}Z_{T}=P_{\ker H}$ -- the projection onto $\ker H$ along $\mr{im}(H)$. Moreover, from the theorem below we have
$\ker H \simeq \mr{Fun}(\F')$, where the isomorphism is given by $p_\ii|_{\ker H}$ and its inverse is $i_\ii$).

One can recover the data of the SDR (\ref{intro SDR for interacting theory}) from the TQM $\tau$:
\begin{introthm}\!\!\!\footnote{Theorem \ref{thm: interacting SDR via TQM}.}
\label{intro thm C}
    One has 
    \begin{equation}
    \begin{aligned}
        i_\ii&=&\lim_{T\ra\infty} Z_T\circ i,\\
        p_\ii &=&\lim_{T\ra\infty} p\circ Z_T,\\
        K_\ii&=&\int_0^T Z_{T,\dr T},\\
        i_\ii\circ p_\ii&=&\lim_{T\ra\infty} Z_T.
    \end{aligned}
    \end{equation}
\end{introthm}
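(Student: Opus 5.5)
We compare the HPL formulas for $(i_\ii,p_\ii,K_\ii)$ with the expansions of $Z_T$ and $\int Z_{T,\dr T}$, and then use the side conditions of the SDR to identify $\lim_{T\to\infty}Z_T$ with a projector. The underlying structure is simple. Write $\delta=Q_\hbar-Q_0=\{S_\ii,-\}-\iii\hbar\Delta$ for the perturbation. Write $\wh\kappa$ for the extension of $\kappa$ to $\mr{Fun}(\F)$, so that the unperturbed SDR has homotopy $K$ with $[Q_0,K]=\mr{id}-ip$. Then the Hamiltonian splits as
\[
H=[Q_\hbar,\wh\kappa]=H_0+[\delta,\wh\kappa],\qquad H_0=[Q_0,\wh\kappa].
\]
Here $H_0$ is diagonalizable with nonnegative integer eigenvalues (it counts polynomial degree in the $\F''$ variables), and $\ker H_0=\mr{im}(ip)$. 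The free homotopy is $K=\int_0^\infty e^{-TH_0}\wh\kappa\,\dr T=\wh\kappa H_0^{-1}$ on $\mr{im}H_0$. I would first check this free statement: $\int_0^\infty Z^{(0)}_{T,\dr T}=K$ and $\lim_{T\to\infty} e^{-TH_0}=ip$. It follows from the spectral decomposition of $H_0$.

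\textbf{Step 1: topologicity and the homotopy.} From $(\dr_T+[Q_\hbar,-])Z_{T,\dr T}=0$, integrating over $[0,T]$ gives
\[
[Q_\hbar,\textstyle\int_0^T Z_{T,\dr T}]=\mr{id}-Z_T .
\]
I would then show that $\lim_{T\to\infty}Z_T$ exists and equals the spectral projector onto $\ker H$ along $\mr{im}H$. The argument is a grading argument. Filter by degree in $\hbar$ and in polynomial degree, using the completed topology. The perturbation $[\delta,\wh\kappa]$ strictly raises the filtration relative to the $H_0$-grading, in the sense needed for a convergent perturbative expansion. Consequently $H$ is a filtered perturbation of $H_0$ with the same spectrum, and a Dyson/Duhamel expansion of $e^{-TH}$ converges term by term as $T\to\infty$.

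\textbf{Step 2: matching with HPL.} The HPL gives
\[
K_\ii=K\sum_n(-\delta K)^n,\qquad i_\ii=\sum_n(-K\delta)^n i,\qquad p_\ii=p\sum_n(-\delta K)^n.
\]
I would expand the Duhamel series
\[
e^{-TH}=\sum_n(-1)^n\int_{0<t_1<\dots<t_n<T}e^{-(T-t_n)H_0}V\cdots V e^{-t_1H_0},\qquad V=[\delta,\wh\kappa]=\delta\wh\kappa+\wh\kappa\delta .
\]
Then apply it to $i$, whose image lies in $\ker H_0$ and is killed by $\wh\kappa$. In the limit $T\to\infty$, each time integral produces a factor $H_0^{-1}$ on $\mr{im}H_0$, or the projector $ip$ on $\ker H_0$ in the final slot. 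The terms should then reorganize, using $\wh\kappa^2=0$, $\wh\kappa i=0$ and $p\wh\kappa=0$, into $\sum(-K\delta)^n i$. The same reorganization gives $p\circ Z_\infty=p_\ii$ and identifies $\int_0^\infty Z_{T,\dr T}$ with $K_\ii$; in the $\int_0^T Z_{T,\dr T}$ formula of the statement, $T$ is understood as $\infty$. I expect this combinatorial resummation, with the $\wh\kappa$'s landing correctly between the $H_0^{-1}$'s, to be the main obstacle. As a fallback I would avoid it with a uniqueness argument, given next.

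\textbf{Step 3: uniqueness route and the last identity.} Set $\Pi=\lim Z_T$, $\tilde K=\int_0^\infty Z_{T,\dr T}$, $\tilde i=\Pi i$ and $\tilde p=p\Pi$. From Step 1, $\Pi$ is an idempotent chain map homotopic to the identity via $\tilde K$. Moreover, $\Pi\wh\kappa=\wh\kappa\Pi=0$, since $\wh\kappa$ commutes with $H$ and acts on $\ker H$ by zero, a fact checked perturbatively. Hence
\[
\tilde K\Pi=\Pi\tilde K=0,\qquad \tilde K^2=0 .
\]
I would show $\tilde K$ satisfies the recursion $\tilde K=K-K\delta\tilde K$, which characterizes $K_\ii$ uniquely by filtration. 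This follows by differentiating $Z$ with respect to the coupling, or from the Duhamel identity
\[
\tilde K-K=\int\!\!\int e^{-(T-t)H_0}(-V)\,e^{-tH}\cdots,
\]
combined with $\wh\kappa^2=0$. Given $K_\ii=\tilde K$, HPL's identities $i_\ii=i-K_\ii\delta i$ and $p_\ii=p-p\delta K_\ii$, compared with $\Pi=\mr{id}-[Q_\hbar,\tilde K]$, yield the first two formulas. Finally,
\[
i_\ii p_\ii=\mr{id}-[Q_\hbar,K_\ii]=\mr{id}-[Q_\hbar,\tilde K]=\Pi ,
\]
which is the last formula.
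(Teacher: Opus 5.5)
\textbf{Gap: existence of the $T\to\infty$ limits.} The weak point is Step~1, and Steps~2 and~3 both rely on it. A filtered perturbation of $H_0$ with the same spectrum need \emph{not} have $e^{-TH}$ converging as $T\to\infty$. If $V$ acts nontrivially inside $\ker H_0$, the Dyson term $\int_0^T P_0 V P_0\,\dr t = T\,P_0VP_0$ grows linearly in $T$. For instance, $H_0=0$ with $V$ a nilpotent matrix strictly raising the filtration gives $e^{-TV}=\mr{id}-TV$. So ``converges term by term as $T\to\infty$'' does not follow from the grading. The same applies to $\fun=\ker H\oplus\mr{im}H$ and to $\wh\kappa|_{\ker H}=0$.

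What actually makes the limits exist is the side-condition structure. First, $\wh\kappa^2=0$ gives $[\wh\kappa,H_0]=0$ with $H_0=\LL_E$. Second, $\wh\kappa i=0$ and $p\wh\kappa=0$, and $\mr{im}\,\wh\kappa$ lies in positive $\F''$-degree, where $\LL_E\ge 1$. Hence $\wh\kappa e^{-sH_0}V=e^{-sH_0}\wh\kappa\delta\wh\kappa$ and $Ve^{-sH_0}\wh\kappa=\wh\kappa\delta e^{-sH_0}\wh\kappa$. Consequently:
\begin{itemize}
\item In $e^{-TH}i$ every $V=[\delta,\wh\kappa]$ may be replaced by $\wh\kappa\delta$; in $p\,e^{-TH}$ and in $\wh\kappa e^{-TH}$, by $\delta\wh\kappa$.
\item Every free factor except the one adjacent to $i$ (resp.\ $p$) then acts on $\mr{im}\,\wh\kappa$ and decays like $e^{-s}$.
\item $P_0VR_1V\cdots R_{n-1}VP_0=0$ for any functions $R_j$ of $H_0$, so at most one interval of a Dyson term can be long.
\end{itemize}
This is exactly the paper's proof. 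It also removes what you call the main obstacle of Step~2: each $\int_0^\infty e^{-sH_0}\wh\kappa\,\dr s$ equals $K$, and the HPL series appear at once.

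\textbf{Step~3 as written, and how to repair it.} Step~3 does not escape the gap. It presupposes that $\Pi$ exists, that $\tilde K$ converges, and that $\wh\kappa|_{\ker H}=0$; ``checked perturbatively'' is not an argument. It can be made self-contained, though, and then it is a genuinely nice alternative. Your identity $\wh\kappa V=\wh\kappa\delta\wh\kappa$ gives, for $f(t)=\wh\kappa e^{-tH}$, the Volterra equation $f(t)=e^{-tH_0}\wh\kappa-\int_0^t e^{-(t-u)H_0}\wh\kappa\,\delta f(u)\,\dr u$. Since $\delta$ strictly raises the filtration and $e^{-sH_0}\wh\kappa$ decays like $e^{-s}$, induction on the filtration shows every component of $f$ decays exponentially. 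Hence $\tilde K_T=\int_0^T f\to\tilde K$ with $\tilde K=K-K\delta\tilde K$, i.e.\ $\tilde K=K_\ii$.

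The rest is then algebra. From $Z_T=\mr{id}-[Q_\hbar,\tilde K_T]$ you get $\lim_{T\to\infty} Z_T=\mr{id}-[Q_\hbar,K_\ii]=i_\ii p_\ii$. Then $p_\ii i=\mr{id}=p\,i_\ii$ give $\lim Z_T\circ i=i_\ii$ and $p\circ\lim Z_T=p_\ii$. This route yields existence of $\lim Z_T$ and the last identity without spectral projectors or the properties $\tilde K\Pi=\Pi\tilde K=\tilde K^2=0$. It is cleaner than the paper's ``one $t_i$ must be very large'' argument. But its analytic input is precisely the side-condition cancellation above, not the grading you invoked in Step~1.
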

Here $i$ is the pullback by the projection $\pi\colon\F\ra \F'$ and $p$ is the pullback by the inclusion $\iota\colon \F'\ra \F$ in the splitting (\ref{intro F=F'+F''}). 
%$Z_T$ stands for the zero-form component of the TQM partition function (\ref{intro Z_T,dT}). 

The proof of Theorem \ref{intro thm C} is based on the perturbation series for the exponential of a perturbed operator, where $H=\LL_E+\cdots$ is seen as a perturbation of the Lie derivative along the Euler vector field $E$ on $\F$ assigning degree 1 to ultraviolet fields and degree 0 to infrared ones.

\textbf{TQM application 1: second proof of Theorem \ref{intro thm B}.}
The TQM perspective and Theorem \ref{intro thm C} have two immediate applications.
First, they lead to the second proof of Theorem \ref{intro thm B}, not relying on combinatorics of cable diagrams and Feynman graphs, see Section \ref{ss: TQM proof of thm on Q'_hbar and p_int}. 
To give the idea of the proof, let us specialize to (ii) of Theorem \ref{intro thm B} in the case $\F'=0$. Fixing an observable $O\in\mr{Fun}(\F)$, we have
\begin{equation}\label{intro p_int(O) = <<O,1>>}
    p_\ii(O)=P_{\ker H}(O)=\ll O,1 \gg,
\end{equation}
with $\ll,\gg$ a pairing on $\mr{Fun}(\F)$ such that $H$ is self-adjoint and $1$ has norm $1$. It turns out that the pairing 
\begin{equation}
\ll O_1,O_2 \gg=c\int_{\LL\subset \F} e^{\frac\iii\hbar S} O_1 O_2,
\end{equation}
with $c$ a normalization constant, does the job. This immediately implies that the r.h.s. of (\ref{intro p_int(O) = <<O,1>>}) is the BV pushforward $P_*(O)$.

% \begin{enumerate}[(a)]
%     \item A second proof of Theorem \ref{intro thm B}, not relying on combinatorics of cable diagrams and Feynman graphs.
%     \item Formulae for classical limits $\hbar\ra 0$ of the maps in (\ref{intro SDR for interacting theory}).
% \end{enumerate}

\textbf{TQM application 2: classical limit of maps $i_\ii,p_\ii,K_\ii$.}
Let  $H^\cl = H \bmod \hbar$ be the classical TQM Hamiltonian viewed as a vector field on $\F$. For $X$ an object valued in formal power series in $\hbar$, we will denote $X^\cl$ its constant term in $\hbar$ (i.e., $X^\cl$ is the ``classical limit'' $\hbar\ra 0$ of $X$).
%We will use the superscript $\cl$ for $\bmod\hbar$ reduction of various objects.

The differential $Q^\cl=\{S^\cl,-\}$ is a cohomological vector field on $\F$ vanishing at zero. It equips $\F[-1]$ with the structure of an $L_\infty$ algebra. Likewise, $Q'^\cl=\{S'^\cl,-\}'$ equips $\F'[-1]$ with the structure of an $L_\infty$ algebra, obtained by homotopy transfer from $(\F[-1],Q^\cl)$.\footnote{This statement is the classical limit of (i) of Theorem \ref{intro thm B}.}

Let $\Phi_T\colon \F\ra \F$ be the flow of the vector field $-H^\cl$ in time $T$. 
\begin{introthm}\!\!\!\footnote{Corollary \ref{cor: class limit of interacting ipK via flow of H^cl} and Remark \ref{rem: L_infty language}. }
\label{intro thm D}
One has the following:
    \begin{itemize}
        \item The classical limit of $i_\ii$ is $i_\ii^\cl=\lim_{T\ra\infty} \Phi_T^*\circ i \colon \mr{Fun}(\F')\ra \mr{Fun}(\F)$. It is the pullback by a nonlinear map $\pi^\cl_\ii=\lim_{T\ra\infty} \pi\circ \Phi_T \colon \F\ra \F'$ which is an $L_\infty$ morphism of $L_\infty$ algebras. 
%        $i_\ii^\cl\colon (\mr{Fun}(\F'),Q'^\cl,\cdot)\ra (\mr{Fun}(\F),Q^\cl,\cdot)$ is a morphism of 
    \item The classical limit of $p_\ii$ is $p_\ii^\cl=\lim_{T\ra\infty} p\circ \Phi_T^* \colon \mr{Fun}(\F)\ra \mr{Fun}(\F') $. It is the pullback by a nonlinear map $\iota_\ii^\cl =\lim_{T\ra\infty} \Phi_T\circ\iota \colon \F'\ra \F$ which is an $L_\infty$ morphism of $L_\infty$ algebras. $\iota_\ii^\cl$ maps  a point $x'\in \F'$ to the critical point of $S$ restricted to $x'+\LL$.
    \item The classical limit of $K_\ii$ is
    \begin{equation}
        K_\ii^\cl(O)(x)=\int_0^\infty O\big((\mr{id}+\dr T\,\kappa)\circ\Phi_T(x)\big),
    \end{equation}
    for $O\in \mr{Fun}(\F)$.
    \end{itemize}
\end{introthm}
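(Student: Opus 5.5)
The plan is to take the classical limit $\hbar\to 0$ of the formulas in Theorem \ref{intro thm C} and to use that, modulo $\hbar$, the operator $H=[Q_\hbar,\wh\kappa]$ becomes a first-order operator. Modulo $\hbar$ we have $Q_\hbar\equiv Q^\cl=\{S^\cl,-\}$, a vector field, and $\wh\kappa$ is the derivation extending $\kappa$, also a vector field (linear, of degree $-1$). Their graded commutator $H^\cl=[Q^\cl,\wh\kappa]$ is therefore an even vector field on $\F$. Moreover $H^\cl=\LL_E+(\text{nonlinear terms})$, where $E$ is the Euler-type field from the proof of Theorem \ref{intro thm C} and the nonlinear terms come from $S_\ii$. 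Since $H^\cl$ is a derivation of the (completed) algebra $\mr{Fun}(\F)$, $e^{-TH^\cl}$ is an algebra automorphism, namely the pullback $\Phi_T^*$ by the flow of $-H^\cl$. This holds order by order in the polynomial degree, which makes sense in formal power series because the linear part $\LL_E$ preserves degree and the interaction raises it. Consequently $Z_T^\cl=\Phi_T^*$.

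Substituting into Theorem \ref{intro thm C} gives $i_\ii^\cl=\lim_T\Phi_T^*\circ\pi^*=(\lim_T\pi\circ\Phi_T)^*$ and $p_\ii^\cl=\lim_T\iota^*\circ\Phi_T^*=(\lim_T\Phi_T\circ\iota)^*$. For this to be legitimate I need the limits to exist coefficientwise. I would argue this by perturbation in the interaction: the linear flow $e^{-TE}$ exponentially damps $\F''$ and fixes $\F'$, and each higher Taylor coefficient of $\Phi_T$ is a finite sum of iterated integrals $\int_{0<t_1<\dots<T}$ of exponentially decaying factors, which converge. This is where I expect the main obstacle to lie: controlling convergence, and checking that the limits are compatible with composing maps.

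The $L_\infty$ claims then come for free. $i_\ii$ and $p_\ii$ are chain maps, so $i_\ii^\cl$ intertwines $Q'^\cl$ and $Q^\cl$. An algebra morphism between Chevalley--Eilenberg-type algebras that intertwines the cohomological vector fields is exactly an $L_\infty$ morphism $\pi^\cl_\ii\colon\F[-1]\to\F'[-1]$, and the same argument applies to $\iota^\cl_\ii$.

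For the critical-point characterization, set $y=\lim_T\Phi_T(\iota x')$. Then $y$ is a fixed point of the flow, i.e.\ $H^\cl(y)=0$. Two further facts are needed. First, $\pi\circ\Phi_T\circ\iota=\mr{id}$ on $\F'$, which holds because the flow only moves in the directions $\mr{im}\,\kappa=\LL$. Second, $H^\cl(y)=\kappa(Q^\cl(y))$, since $\wh\kappa$ applied to coordinates gives linear functions and the $\wh\kappa Q$ term vanishes on the relevant coordinates. Together these show $y\in x'+\LL$ and $\kappa\,Q^\cl(y)=0$. The condition $\kappa\,Q^\cl(y)=0$ is precisely that $dS$ vanishes on $\LL$ at $y$, because $\omega(\kappa\,\cdot,\cdot)$ pairs $\LL$ with the complement.

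For $K_\ii^\cl$, take the mod-$\hbar$ part of $Z_{T,\dr T}=e^{-TH+\dr T\,\wh\kappa}$. Since $H^\cl$ and $\wh\kappa$ commute modulo $\hbar$ (by $[\wh\kappa,\wh\kappa]=0$ and the Jacobi identity), this equals $e^{\dr T\,\wh\kappa}\Phi_T^*$. Next, $e^{\dr T\,\wh\kappa}$ is the pullback by the map $\mr{id}+\dr T\,\kappa$; it is the algebra automorphism with odd parameter $\dr T$. Integrating over $T\in(0,\infty)$ then gives the stated formula.
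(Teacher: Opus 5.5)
Your proposal is correct and follows the paper's route. The paper obtains this result as an immediate consequence of Theorem \ref{thm: interacting SDR via TQM} reduced modulo $\hbar$, with $e^{-TH^\cl}=\Phi_T^*$, and derives the critical-point description from the same two properties of $H^\cl$ you use: $H^\cl$ is tangent to $x'+\LL$, where only $\kappa(Q^\cl(x))$ survives, and its zero locus is the set of conditional critical points.

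Two small remarks. The convergence you flag as the main obstacle is already supplied by Theorem \ref{thm: interacting SDR via TQM}: the limits exist because they equal the HPL series, and formal composition is continuous in the Taylor coefficients. Also, calling $y$ \emph{the} critical point requires the formal uniqueness of the solution $x_\kappa$ of $\partial S^\cl(x'+x_\kappa)/\partial x_\kappa=0$, which follows from the nondegeneracy of $S_0$ on $\LL$.
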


Here the mutually quasi-inverse $L_\infty$ morphisms $\iota_\ii^\cl$, $\pi_\ii^\cl$ are, respectively ``hard'' and ``easy'' maps (also, the homotopy transferred $L_\infty$ structure $Q'^\cl$ is ``easy'' and the chain homotopy $K_\ii^\cl$ is ``hard'').\footnote{This difference in complexity is related to the behavior of the vector field $-H^\cl$: it is tangent to submanifolds $x'+\LL$ and its flow converges to the critical point of $S$ on $x'+\LL$.
So, in the formula for $\iota_\ii^\cl$, the flow starts at $x'$ and proceeds on the submanifold  $x'+\LL$, converging to the conditional critical point of $S$. In the formulae for $\pi_\ii^\cl$ or $K_\ii^\cl$, the initial condition is generic and the flow is more complicated.
}

\textbf{AKSZ description of the theory $\tau$.}
One can describe the TQM above by a path integral, starting from its %description 
presentation
in the Lagrangian formalism, as a 1d AKSZ sigma model \cite{AKSZ} on the interval $[0,T]$. Its space of fields is $\F^\tau=\Omega^\bt([0,T], T^*\F)$, where $T^*\F$ is the target (with $\F$ the space of fields of the BV theory we started with), equipped with canonical cotangent symplectic structure and the degree $+1$ ``cohomological function'' 
$\Theta = \omega^{-1}(p,\frac{\partial S}{\partial x})+\frac12 \omega^{-1}(p,p)$, satisfying $\{\Theta,\Theta\}=0$. Here $x,p$ are the base and fiber coordinates on $T^*\F$. The action is:
\begin{equation}
    S^\tau=\int_0^T \langle \til{p},\dr_t \til{x} \rangle+ \Theta(\til{x},\til{p}),
\end{equation}
where $\til{x}\in \Omega^\bt([0,T],\F)$, $\til{p} \in \Omega^\bt([0,T],\F^*)$ are the ``AKSZ superfields'' (nonhomogeneous forms on $[0,T]$).

We find a gauge-fixing Lagrangian $\LL^\tau\subset \F^\tau$ (\ref{L^tau}) for the AKSZ theory $\tau$, such that one has the following path integral formulae for the SDR data (\ref{intro SDR for interacting theory}):
\begin{introthm}\!\!\!\footnote{Theorem \ref{thm: i_int,p_int,K_int from 1d AKSZ}.}
\label{intro thm E}
    One has:
    \begin{align}
        i_\ii(O')(x_\mr{out}) &= \int_{\til{p}|_{t=0}=0,\; \til{x}|_{t=T}=x_\mr{out}} O'(\pi(\til{x}|_{t=0}))\; e^{\frac\iii\hbar S^\tau} \Big|_{\LL^\tau}, \\
        p_\ii(O)(x'_\mr{out}) &= \int_{\til{p}|_{t=0}=0,\; \til{x}|_{t=T}=x'_\mr{out}} O(\til{x}|_{t=0})\; e^{\frac\iii\hbar S^\tau} \Big|_{\LL^\tau}, \\
        K_\ii(O)(x_\mr{out}) &= \int_0^\infty\int_{\til{p}|_{t=0}=0,\; \til{x}|_{t=T}=x_\mr{out}} O(\til{x}|_{t=0,\, \dr t=\dr T})\; e^{\frac\iii\hbar S^\tau} \Big|_{\LL^\tau},
    \end{align}
    for any $O\in \mr{Fun}(\F)$, $O'\in \mr{Fun}(\F')$, $x_\mr{out}\in \F$, $x'_\mr{out}\in \F'$.
\end{introthm}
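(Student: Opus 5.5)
The plan is to derive Theorem \ref{intro thm E} from Theorem \ref{intro thm C}: we identify the operator $Z_T=e^{-TH}$ (and its form-valued extension $Z_{T,\dr T}$) with the path integral of the 1d AKSZ theory on $[0,T]$. This is the usual correspondence between the Hamiltonian and Lagrangian formalisms of quantum mechanics, here made exact because the theory is perturbative and finite-dimensional. Once we have the identity
\begin{equation*}
Z_{T,\dr T}(O)(x_\mr{out})=\int_{\til p|_{t=0}=0,\;\til x|_{t=T}=x_\mr{out}} O(\til x|_{t=0,\,\dr t=\dr T})\,e^{\frac\iii\hbar S^\tau}\Big|_{\LL^\tau},
\end{equation*}
the three formulae follow. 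For $i_\ii$ we take $O=i(O')=O'\circ\pi$ and pass to the limit $T\to\infty$. For $p_\ii$ we compose with $p$, i.e.\ restrict $x_\mr{out}$ to $\F'$, and again take $T\to\infty$. For $K_\ii$ we integrate over $T$, keeping the $\dr T$ component.

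The first step is to write the action concretely. We have $S^\tau=\int\langle\til p,\dr_t\til x\rangle+\Theta$, with $\Theta$ the symbol of $Q_\hbar$: the term $\omega^{-1}(p,\partial S)$ gives $\{S,-\}$, and the term $\frac12\omega^{-1}(p,p)$ gives $-\iii\hbar\Delta$ after quantization, with $p\mapsto -\iii\hbar\,\partial_x$. The key idea is that the gauge-fixing Lagrangian $\LL^\tau$ should be built from $\kappa$. We split the superfields into $0$-form and $1$-form components, $\til x=x(t)+\dr t\,x^{(1)}(t)$ and similarly for $\til p$. Then $\LL^\tau$ is chosen so that the $1$-form component of $\til x$ is expressed through $\kappa$ applied to $p$; concretely we take something like $x^{(1)}=\kappa\,\omega^{-1}p^{(0)}$ plus constant-mode corrections fixed by the boundary conditions. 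After this choice, the gauge-fixed action becomes a first-order action $\int p\,\dot x+\mc{H}(x,p)$ whose Weyl/normal symbol is $H=[Q_\hbar,\wh\kappa]$. This is the step I would check first, using the fact that the commutator of a BV differential with a derivation is a second-order operator with computable symbol.

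The second step is the evaluation of the path integral by Feynman diagrams. We expand around the free part. By Theorem \ref{intro thm C}, the free Hamiltonian is $\LL_E$ plus the quadratic part of $[Q_0,\wh\kappa]$. With the boundary conditions $\til p(0)=0$ and $\til x(T)=x_\mr{out}$, the propagator $\langle x(t)p(s)\rangle$ is the retarded kernel $\theta(s-t)e^{-(s-t)H_0}$. Its value at coinciding points is handled by the convention $\theta(0)=0$, which is compatible with normal ordering placing $p$ to the right. The Feynman expansion then reproduces term by term the Dyson series for $e^{-T(H_0+V)}$, namely iterated integrals over ordered times $0<t_1<\dots<t_n<T$. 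Because the propagator is retarded, the graphs are trees with loop corrections only from the $\hbar\,\omega^{-1}(p,p)$ vertex, and these match the $\Delta$ terms. This is precisely the perturbation series for the exponential of a perturbed operator that was used in the proof of Theorem \ref{intro thm C}, so the two expansions can be compared termwise. For the $\dr T$ part, the dependence on the modulus $T$ is handled by giving $\dr t$ a constant-mode component. Reparametrizing $[0,T]$ to $[0,1]$ turns $\dr T$ into an insertion of $\wh\kappa$, matching $e^{-TH+\dr T\wh\kappa}$.

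The third step is taking limits. For $i_\ii$ and $p_\ii$ we use the convergence of $Z_T$ as $T\to\infty$, which was established for Theorem \ref{intro thm C}. The integral over $T\in(0,\infty)$ for $K_\ii$ converges in each order in perturbation theory, again because the free flow $e^{-T\LL_E}$ is exponentially decaying on ultraviolet modes.

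The main obstacle will be the second step, and specifically making the gauge fixing produce exactly $[Q_\hbar,\wh\kappa]$ together with the correct operator-ordering and coinciding-point conventions. The difficulty is that the $\hbar\Delta$ and $\kappa$ terms interact at coinciding times. I would first verify the statement on the free theory, where everything is Gaussian and the answer is $e^{-T\LL_E}$. I would then check the first-order perturbation against $-\int_0^T e^{-(T-t)H_0}Ve^{-tH_0}\,dt$, and finally argue inductively using the retarded structure of the propagator.
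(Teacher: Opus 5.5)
Your overall plan is the paper's. You identify the gauge-fixed path integral on $[0,T]$ with boundary conditions $p(0)=0$, $x(T)=x_{\mr{out}}$ with $(e^{-TH}O)(x_{\mr{out}})$; this is Lemma~\ref{lemma 5.20}. Then you put $O=i(O')$, resp.\ $x_{\mr{out}}=\iota(x'_{\mr{out}})$, resp.\ extract the $\dr T$-component, and invoke Theorem~\ref{thm: interacting SDR via TQM}.

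\textbf{The gap: your gauge-fixing Lagrangian is wrong.} This is exactly the step you flag as the main obstacle, and the concrete choice you make fails in two ways.
\begin{enumerate}
\item \emph{Degree.} Setting the $1$-form component of $\til x$ equal to $\kappa\omega^{-1}p$ is not degree-homogeneous. The map $\kappa\omega^{-1}p$ lands in the degree of $x$, whereas the $1$-form component of $\til x$ must have degree one less. Up to sign, your prescription is the graph of $\delta$ of $\tfrac12\langle p,hp\rangle$, a function of degree $0$ rather than $-1$.
\item \emph{Wrong Hamiltonian.} Test it on the toy scalar of Example~\ref{example: toy interacting scalar: 1d AKSZ theory tau}. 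Your prescription sets $p^+$ (ghost degree $-1$) equal to $\pm p$ (ghost degree $0$), with the other $1$-form components zero. The gauge-fixed action is then $\int\dr t\,(p\dot x-\pi\dot\xi\pm S''(x)\,\pi p)$. Its free part quantizes to a multiple of $\hbar^2\partial_x\partial_\xi$, not to $\LL_E=x\partial_x+\xi\partial_\xi$. The free evolution is then polynomial in $T$, ultraviolet modes do not decay, and the $T\to\infty$ limits do not exist.
\end{enumerate}
In any case, the statement concerns the specific $\LL^\tau$ of~\eqref{L^tau}, so the proof must use that Lagrangian.

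\textbf{The missing idea.} The gauge-fixing fermion must be the classical symbol of the second TQM differential $G=\wh\kappa$, namely $\Psi=\int_0^T\dr t\,\langle p,\kappa(x)\rangle$.
\begin{itemize}
\item Its graph is $\til x=x+\dr t\,\kappa(x)$, $\til p=p-\dr t\,\kappa^\vee(p)$. All $0$-form components stay free, so the boundary conditions are imposed on $x,p$ alone and no ``constant-mode corrections'' arise.
\item On it, $S^\tau|_{\LL^\tau}=\int_0^T\dr t\,(\langle p,\partial_t x\rangle-\HH)$ with $\HH=\{\Theta_\N,\langle p,\kappa(x)\rangle\}_\N$. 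This is the classical counterpart of $H=[Q_\hbar,G]$.
\item Since $\langle p,\kappa(x)\rangle$ is bilinear, commutators with its quantization $-\iii\hbar\wh\kappa$ have normal-ordered symbols equal to the Poisson bracket up to the overall factor, with no higher $\hbar$-corrections.
\item $\Theta_\N$ quantizes to $-\iii\hbar Q_\hbar$ independently of ordering, because $\Delta S=0$.
\item Hence normal-ordered $\HH$ quantizes to $-\iii\hbar H$ (Remark~\ref{rem: 1d AKSZ can quantization}).
\end{itemize}
Only with this in hand does your Wick/Dyson expansion reproduce $(e^{-TH}O)(x_{\mr{out}})$. That expansion uses the propagator $\langle p(t)\otimes x(t')\rangle_0=-\iii\hbar\,\theta(t-t')$, excludes self-contractions, and enters the boundary value through $\langle p(T),x_{\mr{out}}\rangle$.

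\textbf{Two smaller points.}
\begin{itemize}
\item $[Q_0,\wh\kappa]=\LL_E$ exactly, so ``$\LL_E$ plus the quadratic part of $[Q_0,\wh\kappa]$'' double-counts.
\item The $\dr T$-part needs no reparametrization of the interval and no constant mode of $\dr t$. On $\LL^\tau$ one has $O(\til x|_{t=0,\,\dr t=\dr T})=O(x(0))+\dr T\,(\wh\kappa O)(x(0))$. The $0$-form identity applied to $\wh\kappa O$ gives $\dr T\,e^{-TH}\wh\kappa O$. This equals the $1$-form component $\dr T\,\wh\kappa\,e^{-TH}O$ of $Z_{T,\dr T}$, because $[\wh\kappa,H]=\wh\kappa^2Q_\hbar-Q_\hbar\wh\kappa^2=0$. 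As sketched, your reparametrization picture does not obviously yield the stated formula, in which $\dr T$ enters only through the observable at $t=0$.
\end{itemize}
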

The Feynman diagram expansions of these path integrals recover the HPL cable diagrams for $i_\ii,p_\ii,K_\ii$. In particular, the ``hard'' objects $i_\ii,K_\ii$ are now interpreted as BV integrals for the theory $\tau$.

If $(\F,\omega,S)$ is itself an AKSZ theory on an $n$-manifold $M$, then $\tau$ is an AKSZ theory on an $(n+1)$-dimensional cylinder $[0,T]\times M$, cf. Section \ref{ss: tau for F an AKSZ theory}.

\subsection{Plan of the paper}
We start with reminders on the BV pushforward (Section \ref{sec: BV pushforward}) and on strong deformation retractions and homological perturbation lemma (Section \ref{sec: HPL}). We pay extra attention to the construction of SDR for algebras of polynomial functions (the symmetrized tensor power construction) and show different presentations of the chain homotopy, Section \ref{ss: SDR on polynomial functions}. 

In Section \ref{sec: SDR data for free BV theory} we construct the SDR for a free BV theory. Starting from a retraction of the complex of fields onto infrared fields, 
$\SDR{(\F,d)}{(\F',d')}{\iota}{\pi}{\kappa}$, we apply the polynomial algebra construction to obtain SDR for the classical BV complex of the free BV theory $\SDR{(\mr{Fun}(\F),Q_0)}{(\mr{Fun}(\F'),Q'_0)}{i}{p}{K}$. Then we deform the differential on the BV complex to the quantum one, $Q_0\ra Q_{0,\hbar}=Q_0-\iii\hbar\Delta$, and construct the associated deformed SDR using homological perturbation lemma: 
$\SDR{(\mr{Fun}(\F),Q_{0,\hbar})}{(\mr{Fun}(\F'),Q'_{0,\hbar})}{i}{p_\hbar}{K_\hbar}$. We prove that $p_\hbar$ is the expectation value map (BV pushforward) of the free theory.

In Section \ref{sec: SDR of interacting theory} we consider an interacting BV theory with action $S=S_0+S_\mr{int}$. Accordingly, we deform the differential in the BV complex to the interacting one $Q_{0,\hbar}\ra Q_\hbar=Q_{0,\hbar}+Q_\ii$ and consider the associated deformed SDR constructed by homological perturbation lemma, $\SDR{(\mr{Fun}(\F),Q_\hbar)}{(\mr{Fun}(\F'),Q'_\hbar)}{i_\ii}{p_\ii}{K_\ii}$. We prove Theorem \ref{intro thm B} %(Theorem \ref{thm: Q' and p_int via BV pushforward}) 
by a diagrammatic argument. As a corollary we have Theorem \ref{intro thm A}.

In Section \ref{ss: TQM perspective} we introduce the topological quantum mechanics $\tau$ and prove Theorem \ref{intro thm C}. Then we use it to provide the second, non-diagrammatic proof of Theorem \ref{intro thm B}.
In Section \ref{sec: classical limit of i_int, p_int, K_int} we use the TQM perspective to study the classical limit of the SDR of interacting BV theory (\ref{intro SDR for interacting theory}) and have Theorem \ref{intro thm D} as an immediate corollary of Theorem \ref{intro thm C}.
In Section \ref{ss: TQM as 1d AKSZ} we introduce the Lagrangian description of the TQM $\tau$ as an AKSZ theory and present the corresponding path integral formulae for the maps of the SDR package of the interacting theory (\ref{intro SDR for interacting theory}) -- Theorem \ref{intro thm E}. %Feynman diagram expansions of these path integrals reproduce the HPL cable diagrams.

As a running toy example throughout the paper (alongside more involved examples) we use the ``toy scalar field'' -- a model with $\F=\RR\oplus \RR[1]$ and action $S_0=\frac{x^2}{2}$ or $S=\frac{x^2}{2}+P(x)$ with $P$ a polynomial, with $\F'=0$. This toy example is revisited several times as the machinery is being developed in the paper.

\subsection{Comparison to literature}
Connections between BV formalism, homological perturbation lemma, $L_\infty$ algebras and homotopy transfer of those were 
%\marginpar{\bl edited May 18}
known and discussed for a while, %discussed since (at least) 2005, 
see e.g. \cite{HS}, \cite{Losev_GAP}, \cite{KL}, \cite{simpBF}, \cite{discrBF}, \cite{Losev}, \cite{DJP}, \cite{MSW}, \cite{JMSW}.

The fact that the expectation value of an observable in a free BV theory is given by homological perturbation lemma (the map $p_\hbar$ in our notations) was pointed out in \cite{Gwilliam}, \cite{GJF}. The observation that the BV pushforward of a class of observables in $BF$ theory can be understood as an $L_\infty$ quasi-isomorphism was made in \cite[Theorem 2]{simpBF}.
BV pushforward as a variant of Hochschild--Kostant--Rosenberg map was discussed in \cite{GLX}, \cite{GL2}, \cite{SiLi}.

The fact that the effective BV action can be described in terms of homological perturbation theory was known (a) classically, via comparison to homotopy transfer of $L_\infty$ algebras -- see references above, and (b) at a quantum level, %(in the example of interacting scalar field), 
see \cite{Jurco}, \cite{SS}, \cite{DJP}, \cite{JMSW}.

In that sense, Theorem \ref{intro thm B} is %in part 
covered by existing literature. However, the approach via topological quantum mechanics is, to our knowledge, new. We should mention that formulae for the classical limits of maps of the SDR (\ref{intro SDR for interacting theory}) arising from the classical limit of topological quantum mechanics appeared recently, independently of this work, in \cite{Getzler_talk}.

\subsection{Acknowledgements} 
%\marginpar{\bl should the acknowledgements be merged where the same sources are mentioned for ASC and PM? (SCGCS and SwissMAP)}
A.S.C. acknowledges partial support of the SNF Grant No. 200021\_227719.
%and of the Simons Collaboration on Global Categorical Symmetries.
P.M. acknowledges partial support 
%of the Simons Collaboration on Global Categorical Symmetries, 
of the Simons Foundation Travel Support Grant and of the FIM at ETH Zurich. 
%P.M.'s research was (partly) supported by the NCCR SwissMAP, funded by the Swiss National Science Foundation.
This research was (partly) supported by the Simons Collaboration on Global Categorical Symmetries and by
the NCCR SwissMAP, funded
by the Swiss National Science Foundation. This article is based upon
work from COST Action 21109 CaLISTA, supported by COST (European Cooperation in Science and Technology) (www.cost.eu), MSCA-2021-SE-01-101086123 CaLIGOLA, and
MSCA-DN CaLiForNIA-101119552.

\section{Reminder: BV pushforward}\label{sec: BV pushforward}
%\marginpar{it could be an appendix instead}
%{\color{gray} cite as reference: \cite[Sections 2.2.1, 2.2.2]{CMRpert}}
In this section we give a brief reminder on the BV integral and BV pushforward, following \cite[Sections 2.2.1, 2.2.2]{CMRpert}.

\subsection{BV integral}\label{ss: BV integral}
Let $\F$ (the ``space of fields'') be a %finite-dimensional 
$\mathbb{Z}$-graded manifold equipped with a degree $-1$ symplectic form $\omega$. 

%\marginpar{\bl Edited May 16. might need more editing.}
Throughout the paper we will assume by default that $\F$ is finite-dimensional.\footnote{ 
Extension to field theory and infinite-dimensional spaces requires extra care (in particular, with regularization/renormalization of the BV Laplacian and with perturbative BV integrals) and generally goes beyond the scope of this paper. 
We will, however, have some field-theoretic examples with infinite-dimensional $\F$, see Sections \ref{ss: example - lifting ab Wilson loop}, \ref{sss: 1d BF on an interval}, \ref{ss: 1d BF on S^1}, \ref{ss: tau for F an AKSZ theory} and Examples \ref{example: Chern-Simons H^cl in Lorenz gauge}, \ref{example: H^cl for ab BF+B^2 to ab YM}.
}
% \footnote{\bl
% Ideas and constructions of this paper can be extended to infinite-dimensional spaces needed for local field theory. However, such extension requires extra care, in particular with regularization/renormalization of the BV Laplacian and with perturbative BV integrals, and goes beyond the scope of this paper.
% We will, however, have some field-theoretic examples with infinite-dimensional $\F$.
% }

On the space of $\CC$-valued half-densities on $\F$, $\mr{Dens}^{\frac12}(\F)$, one has a canonical differential operator $\Delta$ (the BV Laplacian) of degree $1$, satisfying $\Delta^2=0$, see \cite{Khudaverdian}, \cite{Severa}. In local Darboux coordinates $(x^i,\xi_i)$ on $\F$, one has $\Delta=\frac{\partial^2}{\partial x^i\partial \xi_i}$. 
%We consider half-densities to be valued in $\CC$ by default.

Given a Lagrangian submanifold $\LL\subset \F$, one has a map
\begin{equation}\label{BV integral}
    \int_\LL\colon \HD(\F) \ra \mathbb{C},\quad \rho\mapsto \int_\LL \rho|_\LL
\end{equation}
-- the BV integral.
It takes a half-density on $\F$, restricts it to $\LL$, resulting in a 1-density on $\LL$, which is subsequently integrated over $\LL$.

Writing integrals, we assume that the input is such that they converge. In the cases relevant to us, the integrals are perturbed Gaussian integrals, nondegenerate on $\LL$, and computed perturbatively, via Feynman diagrams.

The two key properties of the BV integral are:
\begin{enumerate}[(i)]
    \item The integral of a $\Delta$-exact half-density vanishes: 
    \begin{equation}
    \int_\LL \Delta \sigma =0.
    \end{equation}
    \item Given a smooth family of Lagrangians $\LL_t$ and a half-density satisfying $\Delta \rho=0$, the BV integral of $\rho$ is $t$-independent:
    \begin{equation}
        \frac{\ddd}{\ddd t} \int_{\LL_t} \rho =0.
    \end{equation}
\end{enumerate}

\subsection{BV pushforward}\label{ss: BV pushforward}
Let $(\F,\omega)$ be the direct product of $(-1)$-symplectic graded manifolds $(\F',\omega')$ and $(\F'',\omega'')$:
\begin{equation}\label{F=F'xF'', omega=omega'+omega''}
    \F=\F'\times \F'',\quad \omega=\omega'+\omega''
\end{equation}
-- one thinks of this as splitting fields into ``slow'' and ``fast'' fields, or ``infrared'' and ``ultraviolet,'' or ``residual fields'' and ``fluctuations.'' %the complement.

%We think of $\F$ as being fibered over $\F'$ with fiber $\F''$. 
Fix a Lagrangian submanifold $\LL\subset \F''$. One has a map from half-densities on $\F$ to half-densities on $\F'$ given by the BV integral over $\LL\subset \F''$
\begin{equation}\label{BV pushforward}
    \int_\LL\colon \HD(\F)=\HD(\F')\wh\otimes \HD(\F'')\xra{\mr{id}\otimes \int_\LL} \HD(\F').
\end{equation}
This map is the \emph{fiber BV integral} or \emph{BV pushforward} associated to the $(-1)$-symplectic  fibration $P\colon\F\ra \F'$. We will also denote the map (\ref{BV pushforward}) by $P_*^\LL$.

The key properties of the BV pushforward are:
\begin{enumerate}[(i)]
    \item The BV pushforward is a chain map with respect to BV Laplacians $\Delta$, $\Delta'$ on half-densities on $\F$, $\F'$:
    \begin{equation}\label{BV pushforward chain map property}
        %P_*^\LL (\Delta \rho) = \Delta' P_*^\LL(\rho),\quad \rho\in \HD(\F).
        \int_\LL \Delta \rho= \Delta' \int_\LL \rho,\quad \rho\in \HD(\F).
    \end{equation}
    \item Let $\LL_t$ be a family of Lagrangians in $\F''$ 
    given as an image of $\LL_0$ by the Hamiltonian flow generated by a time-dependent Hamiltonian $\Psi_t\in C^\infty(\F)_{-1}$.\footnote{Any smooth family of Lagrangians in an odd-symplectic manifold can be described in such a way, see \cite{Schwarz}.}
    %such that for $\epsilon\ra 0$, $\LL_{t+\epsilon}$ is the image of $\LL_t$ under the  Hamiltonian flow  on $\F''$ in time $\epsilon$, generated by a function $\Psi_t\in C^\infty(\LL_t)_{-1}$.
    Let $\rho$ be a half-density on $\F$ satisfying $\Delta\rho=0$. Then one has\footnote{
    A more general formula, if we do not assume $\Delta\rho=0$, is:
    $\frac{\ddd}{\ddd t}\int_{\LL_t}\rho = \Delta'\int_{\LL_t} \Psi_t \rho+\int_{\LL_t} \Psi_t \Delta \rho$.
    }
    \begin{equation}\label{BV pushforward L dependence}
        \frac{\ddd}{\ddd t}\int_{\LL_t}\rho = \Delta'\int_{\LL_t} \Psi_t \rho.
    \end{equation}
    I.e., the BV pushforward of a $\Delta$-closed half-density changes with deformation of $\LL$ by a $\Delta'$-exact term.
\end{enumerate}

\subsection{\texorpdfstring{Exponential $\Delta$-cocycles}{Exponential Delta-cocycles}%elements
}
\label{ss: exponential cocycles}
\subsubsection{Quantum master equation} Of particular interest are $\Delta$-closed half-densities of the form\footnote{In this section we are thinking of $\hbar$ as a finite positive number, with an eye toward considering $\hbar\ra 0$ asymptotics.} 
\begin{equation}
\rho=\mu\, e^{\frac{\iii}{\hbar}S}
\end{equation}
with $\mu$ some fixed $\Delta$-closed reference half-density on $\F$ and $S\in C^\infty(\F)$ a \emph{BV action}. 
The condition $\Delta\rho=0$ is equivalent to the \emph{quantum master equation} (QME) for $S$:
%Such $\rho$ is $\Delta$-closed if and only if $S$ satisfies the 
\begin{equation}\label{QME}
    \frac12\{S,S\} - \iii\hbar \Delta_\mu S=0.
\end{equation}
Here $\{-,-\}$ is the degree $1$ Poisson bracket on $C^\infty(\F)$ associated with the symplectic form $\omega$ and $\Delta_\mu\colon f\mapsto \mu^{-1} \Delta(\mu f)$ is the BV Laplacian on functions on $\F$.

\subsubsection{Canonical transformations} A family of $\Delta$-closed exponential half-densities $\rho_t=\mu\, e^{\frac{\iii}{\hbar}S_t}$ satisfies the property that $\frac{\ddd}{\ddd t}\rho_t= \Delta(\rho_t R_t)$ for some $R_t\in C^\infty(\F)_{-1}$ (in particular, the variation of $\rho_t$ is $\Delta$-exact) if and only if
\begin{equation}\label{infinitesimal canonical transformation}
    \frac{\ddd}{\ddd t} S_t=\{S_t,R_t\}-\iii\hbar \Delta_\mu R_t
\end{equation}
-- then one says that $S_{t}$ for different $t$ are related by canonical transformations (or ``quantum canonical BV transformations'').

\subsubsection{Effective action}
Assume the splitting (\ref{F=F'xF'', omega=omega'+omega''}) and let $\mu,\mu',\mu''$ be reference $\Delta$-closed half-densities on $\F,\F',\F''$ satisfying $\mu=\mu'\otimes\mu''$.
Fix a solution $S$ of the QME on $\F$. Then we have a $\Delta$-closed element $\rho=\mu \, e^{\frac{\iii}{\hbar}S}$. Its BV pushforward has the form
\begin{equation}\label{S' definition via BV pushforward}
    \mu' e^{\frac{\iii}{\hbar}S'}=\int_\LL \mu\, e^{\frac{\iii}{\hbar}S}
\end{equation}
with $S'\in C^\infty(\F')$ the \emph{effective action}, induced on infrared fields. 
%\marginpar{can switch to ``residual fields'' terminology}
% $\F'$ (i.e., on ``infrared fields''). 
%seen as ``infrared fields'' (or ``residual fields'').

The effective action satisfies the following properties due to (\ref{BV pushforward chain map property}), (\ref{BV pushforward L dependence}):
\begin{enumerate}[(i)]
    \item The QME for $S$ on $\F$ implies the QME for $S'$ on $\F'$.
    \item A canonical transformation of $S$ induces, via BV pushforward, a canonical transformation of $S'$.
    \item A variation of the Lagrangian $\LL$ induces a canonical transformation of $S'$.
\end{enumerate}

\subsubsection{Observables, BV complex}\label{sss: observables, BV complex}
Given a solution $S$ of QME on $\F$, one can study half-densities of the form
\begin{equation}
    \rho_O=\mu\, e^{\frac{\iii}{\hbar}S} O
\end{equation}
with $O\in C^\infty(\F)$. One has
\begin{equation}
    -\iii\hbar\Delta \rho_O= \rho_{Q_\hbar O},
\end{equation}
where
\begin{equation}
    Q_\hbar=\{S,-\}-\iii\hbar \Delta_\mu
\end{equation}
is the BV differential on $C^\infty(\F)$. We call $(C^\infty(\F),Q_\hbar)$ the \emph{quantum BV complex}. 
%\marginpar{Maybe call it ``quantum BV complex?''}
Its cocycles -- elements $O$ satisfying $Q_\hbar O=0$ -- are the \emph{quantum observables}.\footnote{In the following we will often use the word ``observable'' more broadly -- for any element of the BV complex. In that terminology, a cocycle is a ``gauge-invariant observable.''}
%\marginpar{In a broader sense, sometimes we call any element of $C^\infty(\F)$ an observable}

The full BV integral (\ref{BV integral}) maps $\rho_O$ to the vacuum expectation value (or ``correlator'') $\langle O \rangle$ of $O$ in the theory determined by the action $S$.

The BV pushforward maps $\rho_O$ to a half-density on $\F'$ of the form $\rho_{O'}=\mu'\, e^{\frac{\iii}{\hbar}S'}O'$ with $O'$ the ``effective'' (or ``induced'') observable on the infrared fields. By construction, $O'$ has the same expectation value as $O$. 

The map
\begin{equation}\label{map O to O'}
O\mapsto O'=\mu'^{-1}e^{-\frac{\iii}{\hbar}S'}\int_\LL \mu\,e^{\frac{\iii}{\hbar}S} O
\end{equation} 
determined by the BV pushforward is a chain map of BV complexes
\begin{equation}
    (C^\infty(\F), Q_\hbar) \ra (C^\infty(\F'),Q'_\hbar)
\end{equation}
with $Q'_\hbar=\{S',-\}'-\iii\hbar\Delta'$ the BV differential on functions on infrared fields.

\section{Reminder: homological perturbation lemma}
\label{sec: HPL}
%{\color{gray} def: SDR data. differential perturbation (HPL). constructions of SDR: dual, tensor algebra, symmetric algebra. (Maybe: interpretation via TQM).}

In this section we review the definition of strong deformation retraction between cochain complexes and state the result on the effect of a deformation of the differential -- the homological perturbation lemma. These are well known in the literature, see \cite{GL}, \cite{Crainic} for details. 

By default, we consider cochain complexes of finite-dimensional vector spaces over $\mathbb{R}$.

\begin{defn}
A \emph{strong deformation retraction} (SDR) from a cochain complex $V^\bt$ with differential $d$ onto a cochain complex $V'^{\bt}$ with differential $d'$ is a triple of (i) a chain inclusion $i\colon V'^{\bt}\ra V^\bt$, (ii) a chain projection $p\colon V^\bt\ra V'^{\bt}$ and (iii) a chain homotopy $V^\bt\ra V^{\bt-1}$ satisfying
\begin{equation}\label{dK+Kd=id-ip}
    dK+Kd=\mr{id}-ip,
\end{equation}
\begin{equation}
    pi =\mr{id}
\end{equation}
and ``side conditions''
\begin{equation}\label{side conditions}
    K^2=0,\; Ki=0,\; pK=0.
\end{equation}
\end{defn}
We call the triple $(i,p,K)$ ``SDR data'' and denote it by %write it as a diagram
\begin{equation}\label{ipK squiggly notation}
    (V,d)\stackrel{(i,p,K)}{\rightsquigarrow} (V',d')
\end{equation}
or equivalently
\begin{equation}\label{ipK}
    % K\;\;\raisebox{1.3pt}{\rotatebox[origin=c]{90}{$\curvearrowright$}}\;\; (V,d) \underset{p}{\stackrel{i}{\leftrightarrows}} (V',d').
    \SDR{(V,d)}{(V',d')}{i}{p}{K}.
\end{equation}

Note that, given an SDR, $i$ and $p$ are automatically \emph{quasi-isomorphisms}, i.e., they induce mutually inverse isomorphisms $i_*$, $p_*$ between the cohomology of $(V,d)$ and cohomology of $(V',d')$. We say that $i$, $p$ are mutually quasi-inverse.

\begin{lem}[Homological perturbation lemma]\label{lemma: HPL}
    Fix SDR data (\ref{ipK}) and consider a deformation of the differential on $V$ from $d$ to $\til{d}=d+\delta$ for some $\delta\colon V\ra V$ satisfying $(d+\delta)^2=0$. Then the deformed complex
    $(V,\til{d})$ is quasi-isomorphic to $V'$ equipped with
    deformed differential
    \begin{equation}\label{HPL tilde d'}
        \til{d}'=d'+\sum_{k\geq 0}p(-\delta K)^k\delta i
    \end{equation}
    and one has the deformed SDR data
    \begin{equation}
%        (V,\til{d}) \stackrel{(\til{i},\til{p},\til{K})}{\rightsquigarrow} (V',\til{d}')
  % \til{K}\;\;\rotatebox[origin=c]{90}{$\curvearrowright$}\;\; (V,\til{d}) \underset{\til{p}}{\stackrel{\til{i}}{\leftrightarrows}} (V',\til{d}').
  \SDR{(V,\til{d})}{(V',\til{d}')}{\til{i}}{\til{p}}{\til{K}}.
    \end{equation}
    with
    \begin{equation}\label{HPL tilde ipK}
        \til{i}= \sum_{k\geq 0}(-K\delta)^k i,\;\;
        \til{p}= \sum_{k\geq 0}p(-\delta K)^k,\;\;
        \til{K}= \sum_{k\geq 0} K(-\delta K)^k.
    \end{equation}
    Here we are assuming that $\delta$ is ``sufficiently small,'' so that the sums over $k$ above are convergent.\footnote{Typically, in examples, $\delta$ shifts some auxiliary grading on $V$ (e.g. power in a formal parameter $\hbar$ or polynomial degree if $V$ is the algebra of polynomials) -- and $K$ does not -- which makes the convergence of sums above automatic. Also, it happens in some examples that sums terminate at finite $k$ for a degree reason.}
    %\marginpar{Mar 3: added a footnote}
\end{lem}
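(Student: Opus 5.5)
The plan is to set $A=\sum_{k\ge0}(-\delta K)^k=(1+\delta K)^{-1}$ and $B=\sum_{k\ge0}(-K\delta)^k=(1+K\delta)^{-1}$; both series converge by the smallness assumption. Then the formulas (\ref{HPL tilde d'}), (\ref{HPL tilde ipK}) read
\[
\til{i}=Bi,\quad \til{p}=pA,\quad \til{K}=KA=BK,\quad \til{d}'=d'+pA\delta i.
\]
The identities $KA=BK$ and $A\delta=\delta B$ follow by expanding the series term by term. The whole proof then consists of algebraic manipulations with $A$ and $B$, using the relations $A=1-\delta K A=1-A\delta K$ and $B=1-K\delta B=1-BK\delta$, together with the SDR identities (\ref{dK+Kd=id-ip}), $pi=\mr{id}$ and the side conditions (\ref{side conditions}).

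I will check the easy identities first.
\begin{itemize}
\item The side conditions: $\til{K}^2=KA\,BK=KA K\cdots$ reduces to terms containing $K\cdot K$, so it vanishes. Likewise $\til{K}\til{i}=BKBi$ reduces to terms containing $Ki$, and $\til{p}\til{K}=pAKA$ reduces to terms containing $pK$. Each is killed because every term of the expansion contains $K^2$, $Ki$ or $pK$.
\item The retraction identity: $\til{p}\til{i}=pABi$. Since every higher term of $B i$ begins $K\delta\cdots$ and $pK=0$, expanding gives $pi=\mr{id}$.
\end{itemize}

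Next I prove the homotopy identity $\til{d}\til{K}+\til{K}\til{d}=\mr{id}-\til{i}\til{p}$. I write $\til{d}KA+BK\til{d}$ and use $dK=\mr{id}-ip-Kd$. The key sub-step is to express $KA$ with $d$ moved through. Concretely, I compute $(d+\delta)BK+BK(d+\delta)$ by multiplying the target identity on the left by $(1+K\delta)$ and on the right by $(1+\delta K)$, then verifying the resulting polynomial identity. That identity involves only $d$, $\delta$, $K$, $i$, $p$ with finitely many terms, using $d\delta+\delta d+\delta^2=0$ (from $(d+\delta)^2=0$ and $d^2=0$). This is a standard computation, and I expect it to be the main bookkeeping step.

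The chain-map properties come next. I show $\til{d}\til{i}=\til{i}\til{d}'$ and $\til{p}\til{d}=\til{d}'\til{p}$.
\begin{itemize}
\item For $\til{i}$: I compute $(d+\delta)Bi$ by writing $B=1-K\delta B$, so that
\[
dBi=di-dK\delta Bi=id'-(\mr{id}-ip-Kd)\delta Bi .
\]
Combining with $\delta Bi$ and using $d\delta=-\delta d-\delta^2$ to push $d$ rightward, this yields $Bi\,d'+ip\,\delta Bi$ plus terms that telescope. Finally $ip\delta Bi$ equals $\til{i}(\til{d}'-d')$ modulo terms killed by side conditions, using $Ki=0$.
\item For $\til{p}$: the argument is dual.
\end{itemize}

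The remaining claim is $(\til{d}')^2=0$. I will derive it from the chain-map property and $\til{p}\til{i}=\mr{id}$: indeed $\til{d}'=\til{p}\til{d}\til{i}$, which also agrees with the formula (\ref{HPL tilde d'}) after expansion. Then $(\til{d}')^2=\til{p}\til{d}\til{i}\til{p}\til{d}\til{i}$. Substituting $\til{i}\til{p}=\mr{id}-[\til{d},\til{K}]$ and using $\til{d}^2=0$, $\til{p}\til{K}=0$ and $\til{K}\til{i}=0$, every term vanishes. Finally, quasi-isomorphism of $(V,\til{d})$ and $(V',\til{d}')$ follows from the SDR, as noted after the definition.

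The main obstacle is keeping signs and telescoping straight in the chain-map and homotopy verifications. To handle it, I will organize the computation around the resolvent identities above rather than expanding term by term in $k$.
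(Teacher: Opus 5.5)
The paper does not prove this lemma; it defers to \cite{GL}, \cite{Crainic}. Your route is the standard argument from that literature and it works: you package the series as $A=(1+\delta K)^{-1}$ and $B=(1+K\delta)^{-1}$ and check each identity algebraically.

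For the homotopy identity, write $\til d\til K$ as $(d+\delta)KA$ and $\til K\til d$ as $BK(d+\delta)$. Multiplying on the left by $B^{-1}=1+K\delta$ and on the right by $A^{-1}=1+\delta K$ then gives exactly the finite identity
\[
(1+K\delta)(d+\delta)K+K(d+\delta)(1+\delta K)=(1+K\delta)(1+\delta K)-ip .
\]
This follows from $dK+Kd=\mathrm{id}-ip$ and $d\delta+\delta d=-\delta^2$ alone.

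Two steps need repair.

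\textbf{The retraction identity.} In $\til p\til i=pABi$, the terms coming from $A$ begin with $p\delta$, not $pK$, so $pK=0$ does not kill them. You also need $K^2=0$ and $Ki=0$. Cleanly: $KB=K$ because $K^2=0$, so $KBi=0$. Then $pA=p-pA\delta K$ gives $pABi=pBi=pi=\mathrm{id}$.

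\textbf{The chain-map step.} Your claim that $ip\delta Bi$ equals $\til i(\til d'-d')$ modulo side-condition terms is false. In fact $\til i(\til d'-d')=Bi\,pA\delta i=B\,ip\delta Bi$. The discrepancy is $-K\delta B\,ip\delta Bi$, whose leading term is $-K\delta i\,p\delta i$, and this does not vanish in general.

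The correct bookkeeping is to solve the recursion your computation actually produces. With $X=(d+\delta)Bi$ you get $X=id'+ip\delta Bi-K\delta X$. Hence, using $\delta B=A\delta$,
\[
X=B\,(id'+ip\delta Bi)=\til i\, d'+\til i\,pA\delta i=\til i\,\til d' .
\]
The factor $B$ lands on both summands, and no side condition is used. Dually, $Y=pA(d+\delta)$ satisfies $Y=d'p+pA\delta\, ip-Y\delta K$, so $Y=(d'+pA\delta i)\,pA=\til d'\til p$.

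With these fixes, your derivation of $\til d'=\til p\til d\til i$ and $(\til d')^2=0$ goes through as written. Note that the side conditions enter only in $\til p\til i=\mathrm{id}$ and in the deformed side conditions. The homotopy and chain-map identities hold without them.
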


%\marginpar{add a bit on composition of SDRs? (is it needed?)}

\subsection{SDR on polynomial functions}\label{ss: SDR on polynomial functions}
Given SDR (\ref{ipK}), one has a dual SDR between the dual complexes,
\begin{equation}\label{SDR dual}
    % K^\vee\;\;\rotatebox[origin=c]{90}{$\curvearrowright$}\;\; (V^*,d^\vee) \underset{i^\vee}{\stackrel{p^\vee}{\leftrightarrows}} (V'^*,d'^\vee).
    \SDR{(V^*,d^\vee)}{(V'^*,d'^\vee)}{i^\vee}{p^\vee}{ K^\vee}.
\end{equation}
Passing to the tensor algebras $TV^*=\bigoplus_{n\geq 0} (V^*)^{\otimes n}$ and $TV'^*$, with differentials given by extension of $d^\vee, d'^\vee$ by Leibniz identity, one has the SDR data
\begin{equation}
    % K^\otimes\;\;\rotatebox[origin=c]{90}{$\curvearrowright$}\;\; (TV^*,d^\vee) \underset{i^*}{\stackrel{p^*}{\leftrightarrows}} (TV'^*,d'^\vee),
    \SDR{(TV^*,d^\vee)}{(TV'^*,d'^\vee)}{p^*}{i^*}{K^\otimes},
\end{equation}
where $p^*$, $i^*$ are pullbacks by $p,i$ (or, equivalently, extensions of $p^\vee,i^\vee$ to algebra maps). The chain homotopy on $(V^*)^{\otimes n}$ is given by
\begin{equation}\label{Kotimes}
    K^\otimes= \sum_{k=1}^{n} %(p^\vee i^\vee)
    P'^{\otimes k-1}\otimes K^\vee \otimes (\mr{id})^{\otimes n-k},
\end{equation}
where we denoted $P'=p^\vee i^\vee$.
This formula corresponds to the composition of SDRs
\begin{equation}
V\otimes V\otimes\cdots \otimes V \rightsquigarrow V'\otimes V\otimes \cdots \otimes V\rightsquigarrow \cdots \rightsquigarrow V'\otimes V'\cdots \otimes V',
\end{equation}
contracting the $V$ factors one-by-one, going from left to right. One can choose a different order for contractions, which would result in permuting the factors in (\ref{Kotimes}).

Passing to the symmetric algebras of $V^*$, $V'^*$ (algebras of polynomials on $V$, $V'$) by quotienting the tensor algebras by the symmetric group, one has the SDR
\begin{equation}
        % K^\mr{sym}\;\;\rotatebox[origin=c]{90}{$\curvearrowright$}\;\; (S^\bt V^*,d^\vee) \underset{i^*}{\stackrel{p^*}{\leftrightarrows}} (S^\bt V'^*,d'^\vee).
        \SDR{(S^\bt V^*,d^\vee)}{(S^\bt V'^*,d'^\vee)}{p^*}{i^*}{K^\mr{sym}}.
\end{equation}
Here $p^*$, $i^*$ the pullback of polynomials by $p,i$. The chain homotopy acting on $S^nV^*$ is the symmetrization of (\ref{Kotimes}) -- averaging over possible orderings of factors:\footnote{
It might not be immediately obvious that the symmetrized chain homotopy satisfies $(K^\mr{sym})^2=0$. However, this follows from the fact that all terms in the sum (\ref{Ksym}) mutually anti-commute.
}
\begin{equation}\label{Ksym}
    K^\mr{sym}=\sum_{k=1}^{n} \sum_{X_i\in \{P',\mr{id}\}}c_{\#\{P'\},\#\{\mr{id}\}} X_1\otimes\cdots\otimes X_{k-1}\otimes K^\vee \otimes X_{k+1}\otimes \cdots \otimes X_n.
\end{equation}
%Here $P''=\mr{id}-P'$.
In the sum each factor (except the one where $K^\vee$ is inserted) can be either $P'$ or $\mr{id}$ and the coefficient is $c_{M,N}=\frac{M!N!}{(M+N+1)!}$ with  $M$, $N$ the numbers of $P'$ and $\mr{id}$ factors.
% \begin{equation}
%     K^\mr{sym}=\sum_{k=1}^{n} \sum_{X_i\in \{P',P''\}}\frac{1}{\#\{P''\}} X_1\otimes\cdots\otimes X_{k-1}\otimes K^\vee \otimes X_{k+1}\otimes \cdots \otimes X_n.
% \end{equation}
% In the sum each factor (except the one where $K^\vee$ is inserted) can be either $P'$ or $P''=\mr{id}-P'$ and the coefficient is $1/N$ with $N$ the number of $P''$ factors.

\subsection{Equivalent formulae for \texorpdfstring{$K^\mr{sym}$}{Ksym}
}
\subsubsection{Topological quantum mechanics formula}
Equivalently to (\ref{Ksym}), 
one can write $K^\mr{sym}$ as an integral\footnote{
This formula is related by a change of variable $t$ to formulae (325,326) in \cite{discrBF} and (19) in \cite{CMRcell}: $t_\mr{loc.\,cit.}=1-e^{-t}$.
}
%in terms of topological quantum mechanics (or heat flow with exact heat operator) on $V$:
\begin{equation}\label{Ksym via TQM}
    K^\mr{sym}(f)(x)= \int_0^\infty f(e^{-(\mr{id}-ip)t+dt\, K}x)
\end{equation}
in the variable $t$. Here $f$ is a polynomial on $V$ and $x\in V$ is a variable. 
\begin{rem}\label{rem: TQM}
Topological quantum mechanics in the sense of A. Losev \cite[Section 2.4]{Losev} is a graded vector space $V$ (the space of states) equipped with two differentials $d$ and $G$ of degree $+1$ and $-1$, respectively, such that the ($d$-exact) Hamiltonian $H=[d,G]$ is non-negative, with kernel $\ker H= V'$.\footnote{This structure is also known as $\mc{N}=2$ supersymmetric quantum mechanics with supercharges $Q_1=d+G$, $Q_2=i(d-G)$.} Additionally, we assume that $G$ vanishes on $\ker H$.
The partition function of the TQM on an interval is defined as
\begin{equation}\label{Z_TQM}
    Z(t,\dr t)= e^{-tH+\dr t\,G}
\end{equation}
-- a nonhomogeneous form on $[0,+\infty)$ (the space of lengths of the interval) valued in operators on $V$.\footnote{Alternatively, one can think of $Z(t,\dr t)$ as a heat flow for a $d$-exact heat operator (the $0$-form component of $Z$), together with its infinitesimal homotopy (the $1$-form component of $Z$). } 
Note that one has $Z|_{t=0}=\mr{id}$, $Z|_{t\ra \infty}$ is the projection $P'$ onto $\ker H=V'$, $\int_0^\infty Z(t,\dr t)=G (H+c P')^{-1}$ is the chain homotopy contracting $V$ onto $V'$ (for any $c>0$).

If we set $G=K$, $Z(t,\dr t)$ becomes the operator $e^{-t[d,K]+\dr t\,K}$ appearing in the integrand in (\ref{Ksym via TQM}).
\end{rem}

\begin{proof}[Proof of (\ref{Ksym via TQM})]
The r.h.s. of (\ref{Ksym via TQM}) can be written as
\begin{equation}\label{Ksym TQM integral 2}
    \int_0^\infty f((e^{-t}\mr{id}+(1-e^{-t})ip+\dr t\, e^{-t}K)x).
\end{equation}
Evaluated on a monomial $f$ on $V$, it returns the r.h.s. of (\ref{Ksym}) (evaluated on $x\odot\cdots \odot x$), with correct coefficients, since 
\begin{multline}
\int_0^\infty \dr t\, e^{-t} (e^{-t})^{N} (1-e^{-t})^M\underset{\tau=1-e^{-t}}{=}\int_0^1 \dr \tau\, \tau^M (1-\tau)^N\\=B(M+1,N+1)=\frac{M!N!}{(M+N+1)!}=c_{M,N}
\end{multline}
is the Euler's beta function integral.
\end{proof}

%\marginpar{added Mar 4}
\begin{rem}
The fact that the r.h.s. of (\ref{Ksym via TQM}) satisfies the chain homotopy property (\ref{dK+Kd=id-ip}) can be shown directly as follows: the TQM partition function (\ref{Z_TQM}) satisfies $(\dr t \frac{\ddd}{\ddd t}-[d,-])Z=0$. Hence, the integrand $Z^*f$ (the pullback by $Z$) in (\ref{Ksym via TQM}) satisfies
\begin{equation}
    (\dr t \frac{\ddd}{\ddd t}-[d^\vee,-])Z^*=0.
\end{equation}
Integrating over $t$ and using Stokes', we obtain
\begin{equation}
    Z^*\big|_{t=\infty}-Z^*\big|_{t=0}+\left[d^\vee,\int_0^\infty Z^*\right]= p^*i^*-\mr{id}+[d^\vee, K^\mr{sym}]=0
\end{equation}
-- the chain homotopy property.
\end{rem}

\subsubsection{The \texorpdfstring{$P'$, $P''$}{P',P''} formula}
Let $P''=\mr{id}-p^\vee i^\vee$ -- the projector onto the second summand in $V=p^\vee(V'^*)\oplus V''^*$, with $P'=p^\vee i^\vee$ the projector onto the first summand.
Sometimes it is convenient to rewrite (\ref{Ksym}) in terms of insertions of $P',P''$ instead of $P',\mr{id}$ as
\begin{equation}\label{Ksym P',P'' formula}
    K^\mr{sym}=\sum_{k=1}^n \sum_{Y_i\in \{P',P''\}} \frac{1}{\#\{P''\}+1} Y_1\otimes\cdots \otimes Y_{k-1}\otimes K^\vee \otimes Y_{k+1}\otimes \cdots \otimes Y_n.
\end{equation}

The fact that the coefficient of the term with $M$ insertions of $P'$ and $N$ of $P''$ is $1/(N+1)$ is due to the fact that (\ref{Ksym TQM integral 2}) can be written as
\begin{equation}
    \int_0^\infty f((ip+e^{-t}(\mr{id}-ip)+\dr t\, e^{-t}K)x)
\end{equation}
and 
\begin{equation}
    \int_0^\infty \dr t\, e^{-t} 1^M (e^{-t})^N=\frac{1}{N+1}.
\end{equation}

\subsubsection{%Yet another 
Euler vector field
formula for \texorpdfstring{$K^\mr{sym}$}{Ksym}}\label{sss: Ksym=nu whK}
Let $\nu\colon S V^*\ra SV^*$ be the linear operator mapping $f$ to $\frac{1}{N} f$ for $f$ homogeneous, of polynomial degree $N\geq 1$ in $V''$, and returning zero on $S %p^\vee
V'^*$. Let $\wh{K}$ be the extension of $K^\vee$ to a derivation of $SV^*$.

One can rewrite (\ref{Ksym P',P'' formula}) as\footnote{
This formula appears in the proof of Proposition 2.5.5 in \cite{Gwilliam}.} 
\begin{equation}\label{Ksym=nu whK}
    K^\mr{sym}=\nu \wh{K}.
\end{equation}

Equivalently, introducing the ``Lie derivative along the Euler vector field on $V''$'' $\LL_E$ -- the extension of $\mr{id}-p^\vee i^\vee$ to a derivation of $SV^*$ -- one can write
%\marginpar{Morally, $E^{-1}$ is something like indefinite integral on a ray in $V''$ in the logarithmic radial coordinate. It reminds of Poincar\'e chain homotopy.. (to which there should also be a relation via odd Fourier transform)}
\begin{equation}\label{Ksym L_E formula}
    K^\mr{sym}=(\LL_E+c\, p^* i^*)^{-1} \wh{K}
\end{equation}
with $c>0$ an arbitrary positive constant.

\section{SDR data for free BV theory}
\label{sec: SDR data for free BV theory}
Consider a linear version of the setting of Section \ref{sec: BV pushforward}.

Let the space of fields $(\F,d)$ be a cochain complex equipped with a degree $-1$ nondegenerate pairing $\omega$ (a constant $(-1)$-symplectic form) compatible with $d$ (i.e., $\omega(x,dy)=\pm \omega(dx,y)$) and a constant ``reference'' half-density $\mu \in (\mr{Det}\, \F^*)^{\otimes \frac12}$.

The algebra of polynomial functions on fields $S \F^*$ has the structure of a BV algebra, with degree $1$ Poisson bracket and the BV Laplacian induced by $\omega$. Using a basis in $\F$, one has
\begin{eqnarray}
\{f,g\}&=&(\omega^{-1})^{ij}f\frac{\overleftarrow\partial}{\partial x^i}\frac{\overrightarrow\partial}{\partial x^j} g, \label{Poisson bracket in coords}\\ 
\Delta f&=&\frac12(\omega^{-1})^{ij}\frac{\partial^2}{\partial x^i \partial x^j} f. \label{Delta in coords}
\end{eqnarray}

Assume that $\F$ is split as a direct sum of complexes
\begin{equation}\label{F=F' oplus F''}
    \F=\F'\oplus \F''
\end{equation}
with $d',d''$ the differentials on $\F',\F''$ and assume that $(\F'',d'')$ is acyclic. Furthermore, assume that $\F',\F''$ are equipped with compatible degree $-1$ pairings $\omega',\omega''$ and constant reference half-densities $\mu',\mu''$ satisfying
$\omega=\omega'+\omega''$ and $\mu=\mu'\otimes \mu''$. 
%\marginpar{notations: $\omega'+\omega''$ or $\omega'\oplus \omega''$?}

Consider the quadratic function on $\F$
\begin{equation}
    S_0(x)=\frac12\omega(x,dx)  
\end{equation}
-- the ``free BV action.'' It satisfies the QME (\ref{QME}) by virtue of $d^2=0$ and $d$ being traceless (by degree reason).

Similarly, we have free BV actions for $\F'$ and $\F''$: 
\begin{equation}
S_0'(x')=\frac12\omega'(x',d'x'),\quad S_0''(x'')=\frac12 \omega''(x'',d'' x'')
\end{equation}
with $x'\in \F'$, $x''\in \F''$.

\subsection{SDR for classical free BV theory}
Let $\iota,\pi$ be the inclusion of the first summand and the projection to the first summand in (\ref{F=F' oplus F''}). Choose a chain homotopy $\kappa$ between $\mr{id}$ and $\iota\pi$ satisfying the side conditions (\ref{side conditions}), so that one has a SDR
\begin{equation}\label{SDR F to F'}
    \SDR{(\F,d)}{(\F',d')}{\iota}{\pi}{\kappa}
\end{equation}
Additionally, we assume compatibility of the SDR data with $\omega,\omega'$: 
\begin{equation}\label{SDR self-duality condition}
    \omega(x,\iota(y'))=\omega'(\pi(x),y')),\quad \omega(x,\kappa(y))=\pm \omega(\kappa(x),y).
\end{equation}
Equivalently, we require that the SDR (\ref{SDR F to F'}) is self-dual with respect to $\omega,\omega'$, cf. (\ref{SDR dual}) for the definition of duality of SDR.

Next, we promote the SDR (\ref{SDR F to F'}) to an SDR for polynomial functions using the construction of Section \ref{ss: SDR on polynomial functions}:
\begin{equation}\label{SDR of classical free BV theory}
    \SDR{(S\F^*,Q_0)}{(S\F'^*,Q'_0)}{i=\pi^*}{p=\iota^*}{K=\kappa^\mr{sym}}.
\end{equation}
Here $Q_0=\{S_0,-\}$, $Q'_0=\{S'_0,-\}'$ is the extension of dual differentials $d^\vee, d'^\vee$ to derivations of $S\F^*,S\F'^*$.

\begin{expl}\!\!\!\footnote{This example corresponds to ``abstract abelian $BF$ theory'' in the terminology of \cite{simpBF}, \cite{discrBF}, or ``cotangent theory'' in the terminology of \cite{CG}.}
    Given an SDR of cochain complexes
    \begin{equation}\label{SDR for W}
        \SDR{(W,d_W)}{(W',d_{W'})}{\iota_W}{\pi_W}{\kappa_W},
    \end{equation}
    set the spaces of fields to be the cotangent complexes
    \begin{equation}
    \F=T^*[-1]W=W\oplus W^*[-1],\quad \F'=T^*[-1]W'=W'\oplus W'^*[-1]
    \end{equation} 
    %These are naturally cochain complexes 
    equipped
    with differentials $d=d_W\oplus d_W^\vee$, $d'=d_{W'}\oplus d^\vee_{W'}$ and with symplectic forms $\omega$, $\omega'$ induced by the canonical pairing $\langle,\rangle$ between $W$ and $W^*$ and between $W'$ and $W'^*$. 

    The BV action on $W$ is 
    \begin{equation}
    S_0(A,B)=\langle B,d_W A \rangle
    \end{equation} 
    with fields $A\in W$, $B\in W^*[-1]$, and $S_0'$ on $W'$ is similar.
    
    One obtains SDR (\ref{SDR F to F'}) as a direct sum of (\ref{SDR for W}) and its dual:
    \begin{equation}\label{cotangent SDR}
        \SDR{(\F,d) %(W\oplus W^*[-1])
        }{(\F',d') %(W'\oplus W'^*[-1])
        }{\iota_W\oplus \pi_W^\vee}{\pi_W\oplus \iota_W^\vee}{\kappa_W\oplus \kappa_W^\vee}.
    \end{equation}
    Note that there are no self-duality conditions (\ref{SDR self-duality condition}) imposed on (\ref{SDR for W}), whereas for (\ref{cotangent SDR}) they are fulfilled automatically.
\end{expl}

\begin{expl}[Toy free scalar theory]\label{example: toy scalar S_0=x^2/2}
    Set $\F=\RR\oplus \RR[-1]$ with even degree $0$ coordinate $x$ and odd degree $-1$ coordinate $\xi$. We take $\omega=\delta x\wedge \delta\xi$ for the symplectic form and set $S_0=\frac12 x^2$ for the action. This action corresponds to the differential $d\colon e\mapsto \epsilon$ on $\F$, with $e,\epsilon$ basis vectors of degree $0,1$ in the two summands in $\F$. We set $\F'=0$ (note that $(\F,d)$ is a contractible complex). The SDR (\ref{SDR F to F'}) is unique
    \begin{equation}
        \SDR{(\RR\oplus \RR[-1],d)}{0}{0}{0}{\kappa}
    \end{equation}
    with $\kappa\colon \epsilon \mapsto e$. The associated SDR (\ref{SDR of classical free BV theory}) on polynomial functions on fields is
    \begin{equation}\label{toy free scalar with S_0=x^2/2: classical SDR for fun(fields)}
        \SDR{(\CC[x,\xi],Q_0)}{\CC}{i}{p}{K}.
    \end{equation}
    Here $Q_0=x\frac{\partial}{\partial \xi}\colon  x^{N}\xi \mapsto x^{N+1}$; $i$ is the inclusion of constants into polynomials of $x,\xi$; $p$ extracts the constant term of a polynomial and
    \begin{equation}
    % \begin{aligned}
    %     K\colon &x^N& \mapsto&\;\; x^{N-1}\xi, \;\;N\geq 1,\\
    %     &1&\mapsto&\;\; 0.
    % \end{aligned}
    K\colon 
    \begin{cases}
        x^N \mapsto x^{N-1}\xi,\;\; N\geq 1,\\
        1\mapsto 0.\\
    \end{cases}
    \end{equation}
\end{expl}

\begin{expl}[Toy free scalar cont'd]\label{example: toy scalar field with matrix M}
    Generalizing the previous example, set $\F=\RR^n\oplus \RR^n[-1]$ with even degree $0$ coordinates $x^1,\ldots,x^n$ and odd degree $-1$ coordinates $\xi_1,\ldots,\xi_n$. Set $\omega=\delta x^i\wedge \delta\xi_i$ and set $S_0 =\frac12 M_{ij} x^i x^j$, where $M$ is a fixed \emph{nondegenerate} symmetric $n\times n$ matrix. This $S_0$ corresponds to the differential $d\colon e_i\mapsto M_{ij}\epsilon^j$ on $\F$, with $\{e_i\},\{\epsilon^i\}$ the standard basis on the first and second summand in $\F$. By nondegeneracy of $M$, $(\F,d)$ is again contractible, so one can set $\F'=0$ and one has SDR
    \begin{equation}
        \SDR{(\RR^n\oplus \RR^n[-1],d)}{0}{0}{0}{\kappa}
    \end{equation}
    with $\kappa\colon \epsilon^i\mapsto (M^{-1})^{ij}e_j$ as the unique possible chain contraction. Passing to polynomial functions of fields, one has
    \begin{equation}
        \SDR{(\CC[x^1,\ldots,x^n,\xi_1,\ldots,\xi_n],Q_0)}{\CC}{i}{p}{K}
    \end{equation}
    with $Q_0=M_{ij}x^i\frac{\partial}{\partial\xi_j}$; $i,p$ are the inclusion of constants and extraction of constant term. The chain homotopy acts on a homogeneous polynomial as
    \begin{equation}\label{K toy free scalar with matrix M}
        K\colon 
        \begin{cases}
        f(x,\xi) \mapsto\frac{1}{\deg_{x,\xi}(f)}(M^{-1})^{ij}\xi_i \frac{\partial}{\partial x^j}f(x,\xi)\;\;\mr{if}\;\; \deg_{x,\xi}(f)\geq 1,\\
        \mr{const} \mapsto 0.
        \end{cases}
    \end{equation}
    with $\deg_{x,\xi}(f)$ the polynomial degree of $f$ (where $x$ and $\xi$ are counted with degree $+1$).

    We remark that formula (\ref{K toy free scalar with matrix M}) is a special case of (\ref{Ksym=nu whK}).
\end{expl}

\begin{expl}[Free massive scalar on a graph]
    Given a graph $\Gamma$ (possibly with lengths assigned to edges) with $V$ the set of vertices, one can consider the toy free massive scalar field theory on $\Gamma$ -- the special case of Example \ref{example: toy scalar field with matrix M} with $\F=T^*[-1]\RR^V$ and $M=\Delta_\Gamma+m^2$ with $\Delta_\Gamma$ the graph Laplacian and $m>0$ a fixed mass parameter.
\end{expl}

%{\color{gray} Maybe add: (\ref{Ksym=nu whK}) in general case, written in derivative notations; %free scalar theory on a graph; abelian $BF$ on an interval in Dupont gauge.}

\subsection{%Quantization 
%SDR data for free quantum BV theory
%Passing to free quantum theory
Free quantum theory
}

\begin{thm}\label{thm SDR for free quantum BV complexes}
    Deformation of the differential $Q_0\ra Q_{0,\hbar}=Q_0-\iii\hbar \Delta$ on $S\F^*$ induces, via the homological perturbation lemma (Lemma \ref{lemma: HPL}), the following SDR of BV complexes (Section \ref{sss: observables, BV complex}):
    \begin{equation}\label{ihbar phbar Khbar}
        \SDR{(S\F^*,Q_{0,\hbar} %Q_0-\iii\hbar \Delta
        )}{(S\F'^*,Q'_{0,\hbar} %Q'_0-\iii\hbar \Delta'
        )}{i_\hbar}{p_\hbar}{K_\hbar},
    \end{equation}
    where
    \begin{enumerate}[(i)]
        \item \label{thm SDR for free quantum BV complexes, (i)} $Q'_{0,\hbar}=Q'_0-\iii\hbar\Delta'$.
        \item \label{thm SDR for free quantum BV complexes, (ii)} $i_\hbar=i=\pi^*$.
        \item \label{thm SDR for free quantum BV complexes, (iii)} $p_\hbar$ is the BV pushforward map (\ref{map O to O'}) defined by free actions $S_0,S'_0$,
        \begin{equation}\label{p_hbar}
            p_\hbar\colon O\mapsto O'=\mu'^{-1}e^{-\frac{\iii}{\hbar}S'_0}\int_\LL \mu\, e^{\frac{\iii}{\hbar}S_0}O,
        \end{equation}
        with the Lagrangian subspace
        \begin{equation}\label{L=im(kappa)}
            \LL=\mr{im}(\kappa)\subset \F''.
        \end{equation}
    \end{enumerate}
\end{thm}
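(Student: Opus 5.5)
We apply Lemma \ref{lemma: HPL} to the SDR (\ref{SDR of classical free BV theory}) with perturbation $\delta=-\iii\hbar\Delta$. This $\delta$ raises the power of $\hbar$ while $K$ does not, so all sums converge $\hbar$-adically. The plan is to compute each of the four HPL outputs separately. Throughout we use the splitting $\Delta=\Delta'+\Delta''$, which follows from $\omega=\omega'+\omega''$ via (\ref{Delta in coords}), and the Euler-field form (\ref{Ksym=nu whK}) $K=\nu\wh\kappa$, where $\wh\kappa$ is the derivation extending $\kappa^\vee$.

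\emph{Steps (ii) and (i).} For (ii) it suffices to show $K\Delta i=0$, or more strongly $\Delta\circ\pi^*=\pi^*\circ\Delta'$. Functions pulled back by $\pi$ depend only on $\F'$-coordinates, and $\Delta''$ kills them. Then $K\Delta\pi^*=K\pi^*\Delta'=0$ by the side condition $Ki=0$. Hence every term with $k\ge1$ in $\til i$ vanishes, and $i_\hbar=i$. For (i), the same identity gives $p\,\delta K\cdots\delta i=p\,\delta K\,\pi^*(\ldots)=0$ for $k\ge1$. The $k=0$ term is $p(-\iii\hbar\Delta)\pi^*=-\iii\hbar\,\iota^*\pi^*\Delta'=-\iii\hbar\Delta'$, since $\pi\iota=\mr{id}$.

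\emph{Step (iii).} We must identify $p_\hbar=\sum_k p(\iii\hbar\Delta K)^k$ with the Gaussian fiber integral (\ref{p_hbar}). Both sides are $\CC[[\hbar]]$-linear, so we compare them on a monomial $O$. First we reduce to $\F'=0$. Write $O=O'\cdot O''$ with $O'\in S\F'^*$ and $O''\in S\F''^*$, which we may do using $\kappa$ compatible with the splitting. The BV pushforward factorizes as $O'\cdot\langle O''\rangle_{\LL}$, because $S_0=S_0'+S_0''$, and it equals $\langle O''\rangle$ times $O'$. We then need to show that the HPL series also factorizes. The key observation is that $\Delta K$ effectively acts only on the $\F''$-part; this will need checking.

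Concretely, apply $\Delta K=\Delta\nu\wh\kappa$. The operator $\wh\kappa$ replaces one $\F''$-linear factor by its $\kappa^\vee$-image, and $\Delta$ then contracts it with another $\F''$ factor, using $\omega''^{-1}(\kappa^\vee\cdot,\cdot)$. The propagator of $S_0''$ on $\LL=\mr{im}\kappa$ is exactly $\kappa\circ\omega''^{-1}$ (up to the factor $\iii\hbar$ and sign). The contractions involving $\F'$ vanish because $\kappa\iota=0$ and $\pi\kappa=0$. The factor $\nu$ contributes $1/N$ with $N$ the current $\F''$-degree. Iterating from $N=2m$ down to $0$, we get a sum over sequences of pairings weighted by $\prod_j\frac{1}{2j}$, symmetrized by $\wh\kappa$ choosing among factors. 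Counting orderings, this should reproduce each perfect matching with weight $1$, i.e.\ Wick's theorem. The final $p=\iota^*$ kills anything with leftover $\F''$-degree, so only complete pairings survive.

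Matching the combinatorial coefficients $1/N$ against the ordered-pairing count, together with the signs for odd variables, is the main obstacle. As a cleaner alternative, I would first try to verify that $P_*$ defined by (\ref{p_hbar}) satisfies the characterization of $\til p$: namely $\til p\,(1+\delta K)=p$. This gives $P_*(O)+P_*(-\iii\hbar\Delta K O)=\iota^*O$, a recursion that determines $\til p$ uniquely by induction on $\F''$-degree. The identity itself is proved as follows. First, $P_*$ is a chain map and $P_*\circ i=\mr{id}$. Next, BV integrals of $Q_0''K$-type terms vanish: for $O$ of positive $\F''$-degree we have $O=Q_{0,\hbar}KO+KQ_{0,\hbar}O-\iii\hbar\Delta KO$ from the homotopy relation, and the exact parts integrate to zero. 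Finally, $P_*(KO)=0$ because $\wh\kappa O$ vanishes on $\LL$, as $\kappa^2=0$. Both sides agree on $\F''$-degree $0$ since $P_*=\iota^*$ there, and the recursion then closes the induction.
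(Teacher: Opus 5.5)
For parts (i) and (ii) your argument is the paper's: $\Delta\pi^*=\pi^*\Delta'$ together with $K\pi^*=0$.

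For (iii), your first sketch (Wick's theorem via the $1/N$ weights from $\nu$) is also the paper's route. The paper completes it by proving $\iota^*(\iii\hbar\Delta K)^kO=\frac{(\iii\hbar)^k}{2^kk!}\eta^kO|_{x''=0}$. Two facts do the work: because $h\kappa^\vee=0$, the $\kappa^\vee$-decorated factor must be contracted by the very next $\Delta$, and $\nu$ contributes $1/2j$ at each step. This gives $p_\hbar=e^{\frac{\iii\hbar}{2}\eta}(\cdot)|_{x''=0}$, which is a Gaussian moment by Wick's theorem.

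Your ``cleaner alternative'' is a genuinely different argument, and it works. The map $p_\hbar=p(1+\delta K)^{-1}$ is the unique $X$ with $X(1+\delta K)=p$. $P_*$ satisfies this equation because:
\begin{itemize}
\item it is a chain map;
\item $P_*\pi^*=\mr{id}$;
\item $P_*K=0$, since $K=\wh\kappa\nu$ and $\wh\kappa f$ vanishes on $\F'\oplus\mr{im}\kappa$ by $\kappa^2=0$ and $\kappa\iota=0$.
\end{itemize}

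Two repairs are needed.

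First, the homotopy identity is wrong as you wrote it. From $Q_0K+KQ_0=\mr{id}-ip$ and $Q_0=Q_{0,\hbar}+\iii\hbar\Delta$ one gets
\[
O=Q_{0,\hbar}KO+KQ_{0,\hbar}O+\iii\hbar(\Delta K+K\Delta)O+\pi^*\iota^*O .
\]
Your sign on $\Delta K$ would contradict the very identity you are aiming for. Now apply $P_*$ term by term. The first term becomes $Q'_{0,\hbar}P_*(KO)$, which is $0$ because $P_*K=0$; it is not true in general that exact terms push forward to zero when $\F'\neq 0$. The $KQ_{0,\hbar}O$ and $K\Delta O$ terms die by $P_*K=0$. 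What remains is $P_*O-\iii\hbar P_*(\Delta KO)=\iota^*O$, as required.

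Second, uniqueness is simply the invertibility of $1+\delta K$, since $\delta K$ lowers polynomial degree by $2$. Induct on total polynomial degree, not on $\F''$-degree: the $\Delta'$-part of $\Delta K$ preserves $\F''$-degree, so that induction does not close as stated.

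As for what each approach buys: the paper's computation produces the explicit operator $e^{\frac{\iii\hbar}{2}\eta}$. It also produces the cable-diagram/Feynman-graph dictionary that is reused for $K_\hbar$ and in Section~\ref{sec: SDR of interacting theory}. Your argument needs no combinatorics and no sign bookkeeping for odd variables. It uses only structural properties of the BV pushforward, plus the compatibility $\LL=\mr{im}\kappa$, which enters exactly through $P_*K=0$.

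It also extends verbatim to the interacting case. With $\delta=Q_\ii-\iii\hbar\Delta$ one has $p_\ii=p(1+\delta K)^{-1}$. The same three facts hold for the interacting pushforward: it is a chain map from $Q_\hbar$ to $Q'_\hbar$, $P_*\pi^*=\mr{id}$ by the definition of $S'$, and $P_*K=0$. Hence $P_*(1+\delta K)=p$, which gives Theorem~\ref{thm: Q' and p_int via BV pushforward}(ii) directly.
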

This theorem (in the case $\F'=0$) is essentially contained in \cite{Gwilliam}, \cite{GJF}; the case with $\F'\neq 0$ is contained in \cite[Proof of Theorem 2]{DJP}. %(in the proof of Theorem 2). 
%-- at least partially -- contained in \cite{GJF}.\marginpar{Is that the right thing to say? ``Alluded to in \cite{GJF}?''}

\begin{proof}
    Item (\ref{thm SDR for free quantum BV complexes, (i)}): by homological perturbation lemma (\ref{HPL tilde d'}), the deformed differential on the retract is
    \begin{equation}\label{thm SDR for free quantum BV complexes, eq1}
    Q'_0-\iota^* (\iii\hbar \Delta) \pi^*-\sum_{k\geq 1}\iota^* (\iii\hbar \Delta)(K(\iii\hbar \Delta))^k \pi^*.
    \end{equation}
    Note that $\Delta \pi^*=\pi^* \Delta'$. This implies that (a) the second term in (\ref{thm SDR for free quantum BV complexes, eq1}) is $-\iii\hbar \Delta'$ and (b) $K\Delta\pi^*=0$ and hence the third term in (\ref{thm SDR for free quantum BV complexes, eq1}) vanishes. Thus, the deformed differential is $Q'_{0,\hbar}=Q'_0-\iii\hbar\Delta'$.

    Item (\ref{thm SDR for free quantum BV complexes, (ii)}): by equation (\ref{HPL tilde ipK}) of homological perturbation lemma,
    the deformed inclusion is
    \begin{equation}
        i_\hbar=\pi^*+\sum_{k\geq 1}(K(\iii\hbar \Delta))^k \pi^*.
    \end{equation}
    Using again that $K\Delta\pi^*=0$, the second term above vanishes, yielding $i_\hbar=\pi^*$.

    Item (\ref{thm SDR for free quantum BV complexes, (iii)}): 
    By homological perturbation lemma (\ref{HPL tilde ipK}),
    we have
    \begin{equation}\label{thm SDR for free quantum BV complexes, eq2}
        p_\hbar=\sum_{k\geq 0} \iota^* (\iii\hbar \Delta K)^k.
    \end{equation}
    Let $x^i$ be a basis in $\F^*$ splitting into a basis $x^{i'}$ in $\F'^*$ and $x^{i''}$ in $\F''^*$. 
    % Using (\ref{Ksym=nu whK}) and (\ref{Delta in coords}), we have 
    % \begin{equation}
    % \Delta K=h^{ij} \frac{\dd^2}{\dd x^i \dd x^j}\nu
    % \end{equation}
    % with $h=\omega''^{-1}\kappa^\vee \in S^2 \F''$ and with $\nu$ as in Section \ref{sss: Ksym=nu whK}.
    Consider the $k=1$ term in (\ref{thm SDR for free quantum BV complexes, eq2}) applied to some $O\in S\F^*$:
    \begin{multline}\label{thm SDR for free quantum BV complexes, eq3}
        \iota^* \iii\hbar\Delta K O \underset{(\ref{Ksym=nu whK})}{=}
        \iii\hbar\, \iota^* \Delta \wh\kappa \nu O
        \underset{(\ref{Delta in coords})}{=} \left(\iii\hbar \frac12(\omega^{-1})^{ij}\frac{\dd^2}{\dd x^i\dd x^j} (\kappa^\vee)^{k''}_{\;\;\;l''}\, x^{l''} \frac{\dd}{\dd x^{k''}} \nu O\right) \Big|_{x''=0}\\=
        \left(\iii\hbar (\omega''^{-1}\kappa^\vee)^{i''k''}\frac{\dd^2}{\dd x^{i''}\dd x^{k''}} \nu O\right)\Big|_{x''=0}
        = \left(\frac{\iii\hbar}{2} h^{i''k''}\frac{\dd^2}{\dd x^{i''}\dd x^{k''}} O\right)\Big|_{x''=0}\\=\frac{\iii\hbar}{2} \eta O|_{x''=0}.
    \end{multline}
    Here $\nu$ is as in Section \ref{sss: Ksym=nu whK} and we denoted 
    \begin{equation}\label{h}
    h=\omega''^{-1}\kappa^\vee \in \mr{Hom}(\F''^*,\F'')_\mr{self-dual}\cong S^2 \F''
    \end{equation}
    -- the \emph{propagator}. We also denoted\footnote{
    We remark that in similar notations the derivation $\wh\kappa$ is $\langle x'',\kappa^\vee \frac{\dd}{\dd x''} \rangle$ with $\langle,\rangle$ the canonical pairing between $\F$ and $\F^*$.
    } 
    \begin{equation}\label{H second order derivation}
    \eta= \omega''^{-1}\left(\frac{\dd}{\dd x''},\kappa^\vee \frac{\dd}{\dd x''}\right)
    \end{equation} 
    the extension of $h$ to a second-order derivation of $S\F^*$. Transitioning to the second line, we used that the derivative in $x^i$ or $x^j$ must hit $x^{l''}$ factor, otherwise the term would not survive when restricting to $x''=0$. In the last equality we used that for the restriction to $x''=0$ to be nonzero, $O$ must be of order $2$ in $x''$ and hence $\nu O=\frac12 O$.
    
    The $k=2$ term in (\ref{thm SDR for free quantum BV complexes, eq2}) is computed by replacing $O\ra \iii\hbar \Delta K O$ in (\ref{thm SDR for free quantum BV complexes, eq3}):
    \begin{multline}
        \iota^* (\iii\hbar \Delta K)^2 O=\\=
        (\iii\hbar)^2\frac12 h^{i_1'' k_1''} \frac{\dd^2}{\dd x^{i_1''}x^{k_1''}} \frac12 (\omega^{-1})^{i_2j_2} \frac{\dd^2}{\dd x^{i_2}\dd x^{j_2}} (\kappa^\vee)^{k''_2}_{\;\;\; l''_2}\, x^{l''_2} \frac{\dd}{\dd x^{k''_2}}\nu O \Big|_{x''=0}.
    \end{multline}
    Here we note that $x^{l''_2}$ must be ``eaten'' by the derivative in $x^{i_2}$ or $x^{j_2}$, but not by derivative in $x^{i_1''}$ or $x^{k''_1}$, since in the latter cases the result would vanish due $h\kappa^\vee=\omega''^{-1}(\kappa^\vee)^2=0$, which follows from $\kappa^2=0$. Also, for the expression to be nonzero $O$ must be of order $4$ in $x''$ and so $\nu O=\frac14 O$.
    Thus, one has
    \begin{multline}
     \iota^* (\iii\hbar \Delta K)^2 O=\\
     =\frac{(\iii\hbar)^2}{2\cdot 4}
     h^{i_1'' k_1''} \frac{\dd^2}{\dd x^{i_1''} \dd x^{k_1''}} 
     h^{i_2'' k_2''} \frac{\dd^2}{\dd x^{i_2''} \dd x^{k_2''}} O\Big|_{x''=0}
     =\frac{(\iii\hbar)^2}{2\cdot 4}\eta^2 O\Big|_{x''=0}.
    \end{multline}
    
    Iterating the construction, for the $k$-th term in (\ref{thm SDR for free quantum BV complexes, eq3}) we obtain
    \begin{equation}
        \iota^* (\iii\hbar \Delta K)^k O=
        \frac{(\iii\hbar)^k}{2^k k!}\eta^k O\Big|_{x''=0}.
    \end{equation}
    Thus, for $p_\hbar$ we obtain
    \begin{multline}\label{thm SDR for free quantum BV complexes, eq4}
        p_\hbar O=\sum_{k\geq 0} \frac{(\iii\hbar)^k}{2^k k!}\eta^k O\Big|_{x''=0}
        =e^{\frac{\iii\hbar}{2} \eta} O\Big|_{x''=0}
        \\= \left.\left(c\int_\LL \mu'' e^{\frac{\iii}{\hbar}S_0''(x'')} O(x'+x'')\right)\right|_{x''=0},
    \end{multline}
    with $c=\left(\int_\LL \mu''\, e^{\frac{\iii}{\hbar}S''_0}\right)^{-1}$ a normalization constant.
    Here the last transition uses a standard formula for the perturbative expansion of a Gaussian momentum of $O$ in terms of Wick contractions. The operator $\eta$ applies a single Wick contraction, and $\eta^k$ applies $k$ Wick contractions; at the end the remaining variables are set to be infrared. This finishes the proof of (\ref{p_hbar}).
\end{proof}

\begin{rem}
    We augment Theorem \ref{thm SDR for free quantum BV complexes} with a formula for $K_\hbar$: for $O\in S\F^*$ of degree $n\geq 1$ in $x''$, one has
    \begin{equation}\label{Khbar O}
        K_\hbar O=
        \sum_{j=0}^{\left[\frac{n-1}{2}\right]} \frac{(\iii\hbar)^j }{n(n-2)\cdots (n-2j)}\wh\kappa \eta^j O,
    \end{equation}
    and for $O$ independent of $x''$, $K_\hbar O=0$.
    
    The proof follows the similar logic to that of (\ref{thm SDR for free quantum BV complexes, (iii)}) of Theorem \ref{thm SDR for free quantum BV complexes}. By (\ref{HPL tilde ipK}), 
    \begin{equation}\label{Khbar HPL f-la}
        K_\hbar=K+\iii\hbar K\Delta K + (\iii\hbar)^2 K\Delta K \Delta K+\cdots
    \end{equation}
    Applying to $O$, by (\ref{Ksym=nu whK}), the first term is $\frac{1}{n}\wh\kappa O$, which is the first term in (\ref{Khbar O}). The second term in (\ref{Khbar HPL f-la}) yields
    \begin{equation}\label{Khbar computation eq1}
        \iii\hbar\,\wh\kappa \nu \Delta\wh\kappa \nu O
        = \frac{\iii\hbar}{n(n-2)} (\kappa^\vee)^{a''}_{\;\;\;b''}\, x^{a''}\frac{\dd}{\dd x^{b''}} 
        \frac12(\omega^{-1})^{ij}\frac{\dd^2}{\dd x^i\dd x^j} (\kappa^\vee)^{k''}_{\;\;\;l''}\, x^{l''} \frac{\dd}{\dd x^{k''}} O.
    \end{equation}
    Here the factor $\frac{1}{n(n-2)}$ comes from the two $\nu$ factors. The factor $x^{l''}$ must be ``eaten'' by a derivative in $x^i$ or $x^j$: otherwise it is either (a) eaten by the derivative in $x^{b''}$, in which case the expression vanishes by $\kappa^2=0$, or (b) not eaten at all, in which case one has a cancellation due to signs.\footnote{
    The cancellation is $(\kappa^\vee\otimes \mr{id})\circ (\mr{id}\otimes\kappa^\vee)+(\mr{id}\otimes\kappa^\vee)\circ (\kappa^\vee\otimes \mr{id}) = \kappa^\vee\otimes \kappa^\vee-\kappa^\vee\otimes \kappa^\vee=0$.
    }
    Thus, (\ref{Khbar computation eq1}) %continues as
    evaluates to
    \begin{equation}
        \frac{\iii\hbar}{n(n-2)}\wh\kappa \eta O.
    \end{equation}
    Subsequent terms in (\ref{Khbar HPL f-la}) are computed similarly, yielding the terms of (\ref{Khbar O}). The range for $j$ in (\ref{Khbar O}) is such that %consists of values of $j$ for which 
    $\eta^jO$ is of positive degree in $x''$ and so $\wh\kappa \eta^jO$ can be nonzero.

    An equivalent way to write $K_\hbar$, as follows immediately from (\ref{Khbar O}), is
    \begin{equation}\label{Khbar via H}
        K_\hbar=\sum_{j\geq 0} \wh\kappa \nu(\iii\hbar \eta\nu)^j
        = \wh\kappa \nu(1-\iii\hbar \eta \nu)^{-1}.
    \end{equation}
\end{rem}

%\marginpar{add a remark on SDR for $\Delta$ (conjugate to the one we are looking at), Poincar\'e contraction and OFT..}

\begin{rem}
    Since the differential $Q_{0,\hbar}$ is the conjugation of the BV Laplacian, $Q_{0,\hbar}=e^{-\frac{\iii}{\hbar}S_0}(-\iii\hbar \Delta)e^{\frac{\iii}{\hbar}S_0}$, we have an isomorphism of complexes
    \begin{equation}
        (S\F^*,Q_{0,\hbar}) \xra{\cdot e^{\frac{\iii}{\hbar}S_0}} (\mr{Fun}(\F),-\iii\hbar\Delta),
    \end{equation}
    where $\mr{Fun}(\F)=\{e^{\frac{\iii}{\hbar}S_0}O\;|\; O\;\mr{polynomial\;on\;}\F\}$. We also have a similar isomorphism for functions of infrared fields:
    %Similarly, we have an isomorphism of complexes
    \begin{equation}
        (S\F'^*,Q'_{0,\hbar}) \xra{\cdot e^{\frac{\iii}{\hbar}S'_0}} (\mr{Fun}(\F'),-\iii\hbar\Delta'),
    \end{equation}
    with $\mr{Fun}(\F')=\{e^{\frac{\iii}{\hbar}S'_0}O'\;|\; O'\;\mr{polynomial\;on\;}\F'\}$. One can then transport SDR data (\ref{ihbar phbar Khbar}) along these isomorphisms to an SDR
    \begin{equation}\label{SDR Delta}
        \SDR{(\mr{Fun}(\F),-\iii\hbar\Delta)}{(\mr{Fun}(\F'),-\iii\hbar\Delta')}{i_\Delta}{p_\Delta}{K_\Delta}%{\bar{i}_\hbar}{\bar{p}_\hbar}{\bar{K}_\hbar},
    \end{equation}
    where
    \begin{equation}\label{ipK Delta}
        %\bar{i}_\hbar
        i_\Delta
        \colon f'\mapsto e^{\frac{\iii}{\hbar}S''_0}\pi^*(f'),
        \quad
        %\bar{p}_\hbar
        p_\Delta
        \colon f\mapsto c\int_\LL \mu''\,f,
        \quad
        %\bar{K}_\hbar
        K_\Delta
        \colon f\mapsto e^{\frac{\iii}{\hbar}S_0}K_\hbar(e^{-\frac{\iii}{\hbar}S_0}f).
    \end{equation}

    %\marginpar{unfinished thought..}
    % {\color{gray} As an example, consider the case $\F'=0$ and let $\F=T^*[-1]W$ with canonical symplectic structure of the cotangent bundle, with $W$ evenly graded. 
    % %The BV Laplacian $\Delta$ is the odd Fourier transform of the de Rham operator $d_W$ on forms on $W$.
    % The complex $(C^\infty(\F),\Delta)$ is isomorphic by odd Fourier transform (OFT) to de Rham complex of $W$, $(\Omega^\bt(W),d_W)$. Thus, one should expect $K_\Delta$ to be related by OFT to a contracting homotopy for forms on $W$ as in Poincar\'e lemma. The situation however is complicated by the fact that we are considering functions on $\F$ satisfying a special ansatz $e^{\frac{\iii}{\hbar}S_0}O$ with $O$ a polynomial, thus the relevant Poincar\'e lemma is also for special forms on $W$.
    % }
\end{rem}

%\begin{expl}[Toy free scalar field: quantum version]
\subsubsection{Example: toy free scalar field -- quantum version}
\label{sss: example -- quantum toy free scalar}
    Returning to Example \ref{example: toy scalar S_0=x^2/2} and turning on the deformation of the differential $Q_0=\xi\frac{\dd}{\dd x}$ in (\ref{toy free scalar with S_0=x^2/2: classical SDR for fun(fields)}) by $\Delta=\frac{\dd^2}{\dd x\dd\xi}$, we obtain
    \begin{equation}
        \SDR{(\CC[x,\xi],Q_0-\iii\hbar\Delta)}{\CC}{i_\hbar=i}{p_\hbar}{K_\hbar}.
    \end{equation}
    Here the deformed projection is
    \begin{equation}
        p_\hbar\colon O(x)\mapsto c\int_{-\infty}^\infty \dr x\, e^{\frac{\iii}{\hbar}\frac{x^2}{2}} O(x)
    \end{equation}
    and maps functions linear in $\xi$ to zero; the normalization constant is 
    \begin{equation}\label{c toy scalar}
    c=(2\pi\hbar)^{-\frac12}e^{-\frac{\pi \iii}{4}}.
    \end{equation}
    In particular, $p_\hbar$ acts on monomials by
    \begin{equation}
        p_\hbar(x^n)=
        \left\{\begin{array}{cl}
        (\iii\hbar)^m(2m-1)!!,& n=2m\;\mr{even},\\
        0,& n\; \mr{odd}.
        \end{array}\right.
    \end{equation}
    The deformed chain homotopy can be easily computed from (\ref{Khbar O}), with $\wh\kappa=\xi\frac{\dd}{\dd x}$, $\eta= \frac{\dd^2}{\dd x^2}$. One finds
    \begin{multline}\label{Khbar toy scalar 1}
        K_\hbar\colon x^n \mapsto \xi (x^{n-1}+\iii\hbar (n-1)x^{n-3}+(\iii\hbar)^2 (n-1)(n-3)x^{n-5}+\cdots)
        \\=
        \sum_{j=0}^{\left[ \frac{n-1}{2}\right]}
        (\iii\hbar)^j %(n-1)(n-3)\cdots (n+1-2j)
        \frac{(n-1)!!}{(n-1-2j)!!} \xi x^{n-1-2j}
    \end{multline}
    for $n\geq 1$.
    In the first line, the sum stops when the exponent for $x$ reaches $0$ (for $n$ odd) or $1$ (for $n$ even). One can also write the result as follows:
    \begin{equation}\label{Khbar toy scalar 2}
        K_\hbar\colon O(x) \mapsto \left[\xi \left(1-\frac{\iii\hbar}{x}\frac{\dd}{\dd x}\right)^{-1}\frac{1}{x}O(x)\right]_\mr{reg},
    \end{equation}
    where $[\cdots]_\mr{reg}$ means that we subtract the singular part of the Laurent series at $x=0$.
    Yet another way to write $K_\hbar$ is
    \begin{equation}\label{Khbar toy scalar 3}
        K_\hbar\colon O(x) \mapsto
        \xi \frac{\iii}{\hbar}e^{-\frac{\iii}{\hbar}\frac{x^2}{2}}\int^x %_{\gamma_x} 
        \dr y\; e^{\frac{\iii}{\hbar}\frac{y^2}{2}}(O(y)-p_\hbar(O)).
    \end{equation}
    Here the integral is taken over the contour
    $\gamma_x$ from $e^{\iii\alpha}\cdot(+\infty)$ 
    to $x$ on the complex $y$ plane, with some fixed $\alpha\in (0,\frac{\pi}{2})\cup (\pi,\frac{3\pi}{2})$. It is easy to see that the result does not depend on $\alpha$.\footnote{
    (\ref{Khbar toy scalar 3}) is proven by evaluating on monomials $x^n$ using iterated integration by parts and comparing to (\ref{Khbar toy scalar 1}). Indeed: let $\Phi_n=\frac{\iii}{\hbar} e^{-\frac{\iii}{\hbar}\frac{x^2}{2}} \int^x %_{\gamma_x} 
    \dr y\, e^{\frac{\iii}{\hbar}\frac{y^2}{2}} y^n$. Integrating by parts, we have $\Phi_n=x^{n-1}+\iii\hbar(n-1)\Phi_{n-2}$. Iterating this relation, for $n=2m+1$ odd, we obtain
    $\Phi_n=x^{n-1}+\iii\hbar(n-1) x^{n-3}+(\iii\hbar)^2 (n-1)(n-3) x^{n-5}+\cdots+ (\iii\hbar)^m (n-1)!!$. Thus, taking into account $p_\hbar(x^{2m+1})=0$, (\ref{Khbar toy scalar 3}) agrees with (\ref{Khbar toy scalar 1}) on odd-degree monomials. For $n=2m$ even, we get $\Phi_n=x^{n-1}+\iii\hbar(n-1) x^{n-3}+(\iii\hbar)^2 (n-1)(n-3) x^{n-5}+\cdots+(\iii\hbar)^{m-1}(n-1)!!x+(\iii\hbar)^m (n-1)!! \Phi_0$.
    Thus, (\ref{Khbar toy scalar 1}) gives $K_\hbar(x^n)=\xi (\Phi_n-(\iii\hbar)^m (n-1)!!\Phi_0)$, which agrees with (\ref{Khbar toy scalar 3}).
    }

    The conjugate SDR data (\ref{SDR Delta}) in this example is
    \begin{equation}
        \SDR{(e^{\frac{\iii}{\hbar}\frac{x^2}{2}}\CC[x,\xi],-\iii\hbar\Delta)}{\CC}{  e^{\frac{\iii}{\hbar}\frac{x^2}{2}}\mapsfrom 1\colon i_\Delta}{p_\Delta\colon f\mapsto c\int_{-\infty}^\infty \dr x f(x)}{K_\Delta}
    \end{equation}
    with
    %$K_\Delta$ counterpart (\ref{ipK Delta}) of $K_\hbar$ is
    \begin{equation}\label{K_Delta toy scalar}
        K_\Delta\colon f(x)\mapsto \xi \frac{\iii}{\hbar}\int^x \dr y\left(f(y)-e^{\frac{\iii}{\hbar}\frac{y^2}{2}}c\int_{-\infty}^\infty \dr z f(z)\right)
    \end{equation}
    for $f(x)=e^{\frac{\iii}{\hbar}\frac{x^2}{2}}O(x)$, $O(x)$ a polynomial. Written in this form, it is obvious that $K_\Delta$ inverts $-\iii\hbar\Delta = -\iii\hbar \frac{\dd^2}{\dd x\dd\xi}$.

\subsection{Cable diagrams}
%{\color{gray} Formalism of cable diagrams. Example $S_0=x^2/2$.}
It is convenient to visualize operators on the symmetric algebra $S\F^*$ arising in homological perturbation lemma by ``cable diagrams.'' A cross-section of a cable with $n$ threads represents an element of $S^n\F^*$; individual threads carry elements of $\F^*$. 
An operation can apply linear operators to individual threads, or it can have several inputs and/or outputs in $\F^*$, merging and/or splitting threads. 
A composition of operations is represented as a horizontal concatenation of cables.

We will show elements of $\F''^*$ as solid lines, elements of $\F'^*$ as dashed lines and elements of $\F^*$ that could be either in $\F'^*$ or in $\F''^*$ as thick solid lines.

Terms in formula (\ref{Ksym P',P'' formula})  in the context of (\ref{SDR of classical free BV theory}) apply $\kappa^\vee$ to one thread (denoted by a black dot) and distribute $P'$, $P''$ over the other threads (shown as dashed or solid threads passing through). BV Laplacian $\Delta$ merges a pair of lines, replacing two linear forms $\alpha,\beta\in \F^*$ with the number $\omega^{-1}(\alpha,\beta)$. Moreover $\Delta$ splits into $\Delta'+\Delta''$, with $\Delta'$ merging a pair of dashed lines and $\Delta''$ merging a pair of solid lines.

With these conventions, terms in the homological perturbation formula (\ref{thm SDR for free quantum BV complexes, eq2}) for $p_\hbar$, 
\begin{equation}\label{phbar HPL formula}
p_\hbar=p+\iii\hbar p\Delta K+(\iii\hbar)^2p\Delta K\Delta K+\cdots
\end{equation}
correspond to cable diagrams as in Figure \ref{fig:phbar cable diagram}. Note that %(a) threads in the cable are considered up to reordering, (b)
a black dot on one of the two merging threads can be slid over to the other thread, by self-adjointness of $\kappa$ (\ref{SDR self-duality condition}).

% \begin{figure}[h]
%     \centering
%     $$\vcenter{\hbox{ \includegraphics[width=0.3\linewidth]{phbar3.eps} }}
%     \quad
%     \vcenter{\hbox{  \includegraphics[width=0.2\linewidth]{petal_graph.eps} }}
%     \quad
%     \vcenter{\hbox{ \includegraphics[width=0.3\linewidth]{Khbar2.eps} }} $$
%     \caption{Caption}
%     \label{fig:placeholder}
% \end{figure}

\begin{figure}[h]
    \centering
    \includegraphics[width=0.4\linewidth]{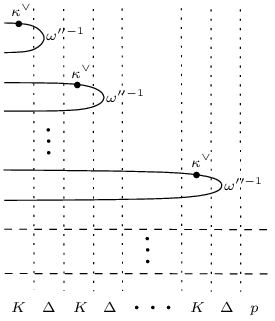}
    \caption{General cable diagram for a term in the homological perturbation series (\ref{phbar HPL formula} for $p_\hbar$.}
    \label{fig:phbar cable diagram}
\end{figure}

Such cable diagrams correspond -- by pinching the left side of the cable, forgetting the horizontal tiered structure, and recalling that $\omega''^{-1}\kappa^\vee=h$ -- to Feynman diagrams computing the BV pushforward of an observable (\ref{p_hbar}) by Wick's contractions, Figure \ref{fig:phbar Feynman graph}.

\begin{figure}[h]
    \centering
    \includegraphics[width=0.2\linewidth]{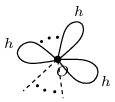}
    \caption{Feynman graph for the BV pushforward of an observable $O$ in a free theory. ($O$ corresponds to the input monomial on the left side of the cable in Figure \ref{fig:phbar cable diagram}.) }
    \label{fig:phbar Feynman graph}
\end{figure}

In Figure \ref{fig:Khbar cable diagrams} we show the general nonvanishing cable diagrams computing the terms in the homological perturbation series (\ref{Khbar HPL f-la}) for $K_\hbar$.
\begin{figure}[h]
    \centering
    \includegraphics[width=0.4\linewidth]{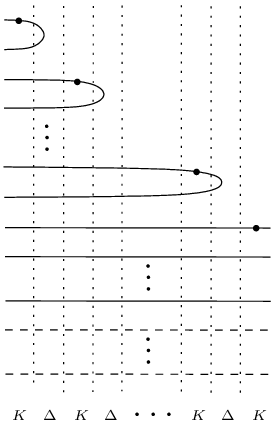}
    \caption{Cable diagrams for $K_\hbar$.}
    \label{fig:Khbar cable diagrams}
\end{figure}

%\section{Interacting theory}
\section{SDR of the interacting theory and BV pushforward. Cable diagrams for $Q'_\hbar$, $p_\ii$ as Feynman diagrams}
%{\color{gray} Relation of $p_{int}$ and induced differential to BV integral. Sketch of proof: relating Feynman graphs to cable diagrams by Morse theory on a Feynman graph.Classical truncation; $L_\infty$-morphism property of $i_{int},p_{int}$; interpretation (conditional critical point map, gauge transf to a residual field).  An alternative ipK triple from conjugation be $e^{\frac{\iii}{\hbar}S}$: alternative observable-lifting map.}
\label{sec: SDR of interacting theory}

In this section and onward, we will consider $\hbar$ to be a formal infinitesimal parameter and will switch the model for functions on $\F$ from polynomials $S\F^*$ to $%\mr{Fun}(\F)=
\wh{S}\F^*[[\hbar]]$, where $\wh{S}\F^*$ is the completion of polynomials to formal power series. This change is needed for the convergence of homological perturbation series we encounter below.\footnote{Another possibility is to keep $\hbar$ finite but introduce a formal infinitesimal coupling constant $g$, require that $S_\ii\in g S\F^*[[g]]$ is of degree $\geq 1$ in $g$, and set the model for function on $\F$ to be $S\F^*[[g]]$.} For functions on $\F'$, we similarly take $\wh{S}\F'^*[[\hbar]]$. Integrals (BV pushforwards) we encounter in this section are understood perturbatively -- as perturbed Gaussian integrals computed via Feynman diagram expansion.

In the setting of Section \ref{sec: SDR data for free BV theory}, consider a deformation of the BV action on $\F$, $S_0\ra S=S_0+S_\ii$, where $S_\ii=\sum_{n\geq 3}S_n$, with $S_n$ a polynomial of degree $n$ on $\F$. We assume that $S$ satisfies QME (\ref{QME}) or equivalently 
\begin{equation}
Q_0 S_\ii+\frac12\{S_\ii,S_\ii\}-\iii\hbar\Delta S_\ii=0.
\end{equation}
We denote $Q_\ii=\{S_\ii,-\}$ -- a derivation of $\fun$ (a formal vector field on $\F$). We have 
\begin{equation}\label{Q_int = sum Q_n}
Q_\ii=\sum_{n\geq 3}Q_n
\end{equation} 
with $Q_n=\{S_n,-\}$
-- a vector field on $\F$, with components of polynomial degree $n-1$.

%\marginpar{Need to change from $S\F^*$ to $\wh{S}\F^*$ or $S\F^*[[g]]$ (with $S=S_0+g S_\ii$) or $\wh{S}\F^*[[\hbar]]$ or $S\F^*[[g,\hbar]]$}

Consider the deformation of differential $Q_{0,\hbar}\ra Q_\hbar=Q_{0,\hbar}+Q_\ii=\{S,-\}-\iii\hbar\Delta$ on $\fun$. It induces a deformation of SDR data of free quantum theory (\ref{ihbar phbar Khbar}) by homological perturbation lemma to the SDR
\begin{equation}\label{SDR for interacting quantum theory}
    \SDR{(\fun,Q_\hbar)}{(\funprime,Q'_\hbar)}{i_{\ii}}{p_{\ii}}{K_{\ii}}.
\end{equation}

%\subsection{SDR of the interacting theory and BV pushforward. Cable diagrams for $Q'_\hbar$, $p_\ii$ as Feynman diagrams}
\begin{thm}\label{thm: Q' and p_int via BV pushforward}
\!\!\!\footnote{Part (i) of the theorem is proven in \cite{SS}, in the case of scalar theory perturbed by a polynomial potential. This reference gives a combinatorial proof of the fact that the coefficients of cable diagrams and Feynman graphs agree. Our proof of that fact is based on a different idea, see Remark \ref{rem: two proofs of Thm 5.1 imply correct coeffs}.
%Also, after posting the preprint, we were pointed to 
Paper \cite{DJP} contains a different proof of part (ii) of the theorem (Theorem 3 in loc. cit.). 
%is contained in \cite[Theorem 3]{DJP}. 
Part (i) of the theorem is present in \cite{DJP} as well, obtained from part (ii) by essentially the same argument as the one we give in the beginning of Section \ref{ss: TQM proof of thm on Q'_hbar and p_int}.
        } 
    \begin{enumerate}[(i)]
        \item
        %\marginpar{\bl May 3 footnote added.}
        The induced differential $Q'_\hbar$ in (\ref{SDR for interacting quantum theory}) has the form
        \begin{equation}\label{Q'_hbar=(S',-)+Delta'}
            Q'_\hbar=\{S',-\}'-\iii\hbar\Delta'
        \end{equation}
        with $S'$ the effective action induced on $\F'$ via %perturbative 
        BV pushforward (\ref{S' definition via BV pushforward}), with Lagrangian (\ref{L=im(kappa)}) determined by chain homotopy $\kappa$.
        \item The deformed projection in $p_\ii$ is the %perturbative 
        BV pushforward map of observables (\ref{map O to O'}),
        \begin{equation}\label{p_int}
            p_\ii\colon O\mapsto O'=\mu'^{-1}e^{-\frac{\iii}{\hbar}S'}\int_\LL \mu\, e^{\frac{\iii}{\hbar}S}O.
        \end{equation}
    \end{enumerate}
\end{thm}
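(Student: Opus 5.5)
The plan is to reduce the interacting case to the free case, Theorem \ref{thm SDR for free quantum BV complexes}, by conjugating with $e^{\frac{\iii}{\hbar}S_\ii}$. Write $\delta=Q_\ii$ as the perturbation of $(\fun,Q_{0,\hbar})$ with SDR data $(i,p_\hbar,K_\hbar)$. Lemma \ref{lemma: HPL} then gives
\begin{equation*}
p_\ii=\sum_{k\ge0}p_\hbar(-Q_\ii K_\hbar)^k,\qquad Q'_\hbar=Q'_{0,\hbar}+\sum_{k\ge0}p_\hbar(-Q_\ii K_\hbar)^kQ_\ii\, i .
\end{equation*}
Convergence holds because $S_n$ with $n\ge3$ raises the combined grading (polynomial degree minus twice the power of $\hbar$, or we insert a coupling constant). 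Part (ii) would follow if the perturbation series equals $\mu'^{-1}e^{-\frac{\iii}{\hbar}S'}p_\hbar\big(e^{\frac{\iii}{\hbar}S_\ii}O\big)e^{\frac{\iii}{\hbar}S'_0}$. This is the Feynman-diagram expansion of (\ref{p_int}): by (\ref{p_hbar}), $p_\hbar(e^{\frac{\iii}{\hbar}S_\ii}O)$ is the Wick expansion with propagator $h$ of the integrand, and dividing by $e^{\frac{\iii}{\hbar}(S'-S'_0)}=p_\hbar(e^{\frac{\iii}{\hbar}S_\ii})$ removes the vacuum components not attached to $O$.

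I would prove this identity \emph{diagrammatically}, as the paper proposes. Expand $K_\hbar$ using (\ref{Khbar O}) and (\ref{Khbar via H}), $K_\hbar=\wh\kappa\nu(1-\iii\hbar\eta\nu)^{-1}$, and $Q_\ii=\{S_\ii,-\}$. Each term of $p_\hbar(-Q_\ii K_\hbar)^k O$ is then a cable diagram. In it, $O$ enters on the left. Each $Q_\ii$ merges one thread carrying $\kappa^\vee$ (from $\wh\kappa$) with a vertex $S_n$ via $\omega^{-1}$, producing a propagator $\omega^{-1}\kappa^\vee=h$. The $\eta$'s and the final $p_\hbar$ contribute further propagators $h$, and finally $x''=0$ is set. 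Pinching the cable, each such diagram becomes a Feynman graph with vertices $O$ and copies of $S_n$, edges $h$, and leaves in $\F'$. The graphs arising are exactly the connected-to-$O$ graphs: each $S_n$ vertex is attached through a $\kappa$-thread emanating from an earlier stage. Vacuum pieces and pieces not attached to $O$ never appear, consistent with the normalization by $e^{-\frac{\iii}{\hbar}S'}$.

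The main obstacle is the combinatorial weight. The claim is that the sum of products of $\nu$-factors $1/N$ over all cable orderings (time-tiered structures) of a fixed Feynman graph equals $1/|\mathrm{Aut}|$ times the $(-1)^k\,(\iii/\hbar)^{\#\text{vertices}}/\#\text{vertices}!$ weight. My first attempt would use the TQM integral representation (\ref{Ksym via TQM}), writing $\nu=\int_0^\infty e^{-tN}\,\dr t$. Then a cable diagram with tiers becomes an integral over ordered times $t_1<\dots<t_k$ of $e^{-\sum_{\text{edges}}(\cdot)}$. Summing over orderings compatible with the graph should assemble into an unordered integral, which is a Morse-type function on the graph whose level sets give the tiers. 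This should yield factors that telescope to exactly the expected exponential weight, i.e.\ $\int$ of a product of $e^{-t}$ factors giving $1/k!$-type symmetry factors. Sign bookkeeping, including the anticommutation cancellations as in the footnote to (\ref{Khbar computation eq1}), must be tracked along the way.

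For part (i), apply the same reasoning to $p_\hbar(-Q_\ii K_\hbar)^kQ_\ii\,i$ acting on $O'$. Alternatively, derive (i) from (ii). Since $p_\ii$ is a chain map and $p_\ii i=\mr{id}$, we have $Q'_\hbar=p_\ii Q_\hbar\, i$. By (ii), $p_\ii Q_\hbar\,\pi^*O'$ is the BV pushforward of $Q_\hbar\pi^*O'$. Using $-\iii\hbar\Delta\rho_O=\rho_{Q_\hbar O}$ and the chain property (\ref{BV pushforward chain map property}), this pushforward equals $e^{-\frac{\iii}{\hbar}S'}(-\iii\hbar\Delta')\big(e^{\frac{\iii}{\hbar}S'}O'\big)=\{S',O'\}'-\iii\hbar\Delta'O'$, which is (\ref{Q'_hbar=(S',-)+Delta'}).
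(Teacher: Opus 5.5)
Your derivation of (i) from (ii) is correct, and it is essentially the paper's own argument at the start of Section~\ref{ss: TQM proof of thm on Q'_hbar and p_int}. (Note that $p_\ii i=\mathrm{id}$ holds because $K_\hbar i=0$.)

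The gap is in (ii). What you describe is the cable-diagram/Feynman-graph correspondence of the paper's \emph{sketch}, and it stops exactly where that sketch stops (cf.\ Remark~\ref{rem: two proofs of Thm 5.1 imply correct coeffs}). It tells you which graphs occur. The entire content of (ii), however, is that they occur with the correct coefficients, and for that you only offer ``should assemble'' and ``should telescope.'' The mechanism you anticipate ($1/k!$-type symmetry factors) is also not what is at stake. One labelled Feynman graph is produced by many cable diagrams, which differ in:
\begin{itemize}
\item the ordering of its $S_n$-vertices;
\item which incident propagator carries the $\kappa^\vee$-dot feeding each vertex;
\item where the remaining contractions sit, as $\eta$'s inside some $K_\hbar$ or in the final $p_\hbar$.
\end{itemize}
Each such diagram is weighted by a product of factors $1/N$, and you must show these weights add up to exactly the Feynman weight. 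Likewise, your claim that every $Q_\ii$ acts on the dotted thread needs an argument; the paper gives one by transposition in footnote~\ref{footnote: black dot property}.

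The paper closes the gap without combinatorics:
\begin{itemize}
\item By Theorem~\ref{thm: interacting SDR via TQM}, $p_\ii=\lim_{T\to\infty}p\circ e^{-TH}$ with $H=[Q_\hbar,\wh\kappa]$, and $i_\ii p_\ii=\lim_{T\to\infty}e^{-TH}$ is the projection onto $\ker H$ along $\mathrm{im}\,H$.
\item Integrating by parts on $\LL$ shows that $H$ is self-adjoint for $\ll O_1,O_2\gg=e^{-\frac{\iii}{\hbar}S'}\int_\LL e^{\frac{\iii}{\hbar}S}O_1O_2$.
\item Since also $H(1)=0$, one gets $\ll O,1\gg=\ll i_\ii p_\ii O,1\gg=\lim_{T\to\infty}\ll e^{-TH}ip_\ii O,1\gg=\ll ip_\ii O,1\gg=p_\ii O$.
\end{itemize}

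If you prefer to complete your own route, here is what it requires. Your substitution $\nu=\int_0^\infty e^{-t\LL_E}\,\dr t$ reproduces exactly $\lim_{T\to\infty} p\,e^{-TH}$. After integrating out the times of the $\eta$-contractions, every propagator $e$ between vertices at times $t_u,t_w$ (with $t_O=0$) carries $e^{-\ell_e}$, where $\ell_e=|t_u-t_w|$. Each of the $n$ interaction vertices $v$ carries the number $d^-(v)$ of edges joining it to earlier vertices, i.e.\ the possible positions of its dot. The lemma you then need is
\[
\int_{(0,\infty)^n}\prod_v d^-(v)\prod_e e^{-\ell_e}\,\dr^n t=\sum_{\text{orderings}}\ \prod_{k=1}^n\frac{d^-(v_k)}{c_k}=1,
\]
where $c_k$ is the number of edges between $\{O,v_1,\dots,v_{k-1}\}$ and the remaining vertices. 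The last equality holds because each summand is the probability of one ordering in the following process: explore the connected graph from $O$, at each step adding the far endpoint of a uniformly chosen boundary edge. Without an argument of one of these two kinds, (ii) remains unproved.
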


%\marginpar{\bl Cor added May 14}
An immediate consequence of (ii) of the theorem is the statement in the title of the paper:
\begin{cor}\label{cor: P_* is q-iso}
    The BV pushforward map $P_*$ (\ref{map O to O'}) is a quasi-isomorphism of BV complexes.
\end{cor}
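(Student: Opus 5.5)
The corollary follows from Theorem \ref{thm: Q' and p_int via BV pushforward}, which I take as given, together with the general fact that every map in an SDR is a quasi-isomorphism. The plan is as follows.

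First, I note that the SDR (\ref{SDR for interacting quantum theory}) is a genuine SDR. It is produced from the free quantum SDR (\ref{ihbar phbar Khbar}) by the homological perturbation lemma (Lemma \ref{lemma: HPL}), with perturbation $\delta=Q_\ii$. Two things need checking. The first is $(Q_{0,\hbar}+Q_\ii)^2=0$; this is exactly the QME for $S$. The second is convergence of the series $\sum_k(-K_\hbar Q_\ii)^k$ and its analogues in the completed space $\fun$. For this I would use that $S_\ii$ has only terms of polynomial degree $\geq 3$. Hence $Q_\ii$ raises polynomial degree by at least one. On the other side, $K_\hbar$, given explicitly by (\ref{Khbar O}), does not change polynomial degree modulo the trade-off with $\hbar$: each $\eta$ lowers the degree by $2$ and comes with a factor $\hbar$. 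So the combined filtration "polynomial degree $+\,2\cdot$($\hbar$-order)" is strictly increased by $Q_\ii$ and preserved by $K_\hbar$, $p$, and $i$. Each coefficient of the series therefore receives only finitely many contributions, and the series converge in the formal topology. This convergence check is the only point I expect to require any care; the rest is formal.

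Second, for any SDR the identities $pi=\mr{id}$ and $\mr{id}-ip=dK+Kd$ imply, on cohomology, $p_*i_*=\mr{id}$ and $i_*p_*=\mr{id}$. Hence $p_\ii$ is a quasi-isomorphism from $(\fun,Q_\hbar)$ to $(\funprime,Q'_\hbar)$, with quasi-inverse $i_\ii$.

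Finally, by part (i) of the theorem the target differential $Q'_\hbar$ coincides with the BV differential $\{S',-\}'-\iii\hbar\Delta'$ of the effective theory, so the target complex is the BV complex of the effective theory. By part (ii), $p_\ii$ coincides with the BV pushforward $P_*$ of (\ref{map O to O'}). Hence $P_*$ is a quasi-isomorphism of BV complexes, with quasi-inverse $i_\ii$.
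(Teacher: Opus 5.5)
Your proposal is correct and is essentially the paper's argument: $P_*=p_\ii$ by part (ii) of Theorem~\ref{thm: Q' and p_int via BV pushforward}, and the projection in any SDR is a quasi-isomorphism. Your convergence check (filtration by polynomial degree plus twice the $\hbar$-order) and your use of part (i) to identify the target as the effective BV complex make explicit what the paper leaves implicit; note that in that check the relevant projection is $p_\hbar$ rather than $p$, though it preserves the filtration just as well.
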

\begin{proof}
Indeed, $P_*=p_\ii$, and $p_\ii$, being part of an SDR (\ref{SDR for interacting quantum theory}), is automatically a quasi-isomorphism.
\end{proof}

\begin{rem}[Alternative SDR from conjugation]\label{rem: SDR for interacting theory from conjugation}
    Note that one can also obtain ``alternative'' SDR data for the interacting theory from the free quantum case (\ref{ihbar phbar Khbar}) by %conjugation by $e^{\frac{\iii}{\hbar}S_\ii}$
    exploiting the chain isomorphisms of BV complexes
    \begin{eqnarray}
        (\mr{Fun}(\F), Q_{\hbar} %Q_0-\iii \hbar\Delta
        ) &\xra{\cdot e^{\frac{\iii}{\hbar}S_\ii}}&
        (\mr{Fun}(\F),Q_{0,\hbar}),\\ 
        (\mr{Fun}(\F'), Q'_{\hbar} %Q_0-\iii \hbar\Delta
        ) &\xra{\cdot e^{\frac{\iii}{\hbar}(S'-S'_0)}}&
        (\mr{Fun}(\F'),Q'_{0,\hbar}).
    \end{eqnarray}
    (Here we are being intentionally vague with the classes of allowed functions on $\F,\F'$ and allowed dependence on $\hbar$.)
    Transported along these isomorphisms, the SDR dta (\ref{ihbar phbar Khbar}) yields
    \begin{equation}\label{SDR for interacting theory from conjugation}
        \SDR{(\mr{Fun}(\F,Q_\hbar))}{(\mr{Fun}(\F'),Q'_\hbar)}{i^\mr{conj}_\ii}{p^\mr{conj}_\ii}{K^\mr{conj}_\ii}
    \end{equation}
    with 
    \begin{eqnarray}
        i^\mr{conj}_\ii\colon O' &\mapsto& e^{-\frac{\iii}{\hbar}S_\ii}i( e^{\frac{\iii}{\hbar}(S'-S'_0)}O'),\\
        p^\mr{conj}_\ii\colon O &\mapsto & e^{-\frac{\iii}{\hbar}(S'-S'_0)} p_\hbar (e^{-\frac{\iii}{\hbar}S_\ii}O)=\mu'^{-1}e^{-\frac{\iii}{\hbar}S'} \int_\LL \mu\, e^{\frac{\iii}{\hbar}S} O
        ,\\
        K^\mr{conj}_\ii\colon O &\mapsto & e^{-\frac{\iii}{\hbar}S_\ii} K_\hbar
        (e^{\frac{\iii}{\hbar}S_\ii} O).
    \end{eqnarray}
    Comparing this ``conjugation'' SDR data with the one obtained by homological perturbation (\ref{SDR for interacting quantum theory}), we have that:
    \begin{itemize}
        \item The induced differentials coincide -- it is $Q'_\hbar$ in both cases, by (i) of Theorem \ref{thm: Q' and p_int via BV pushforward}.
        \item The projections coincide $p^\mr{conj}_\ii=p_\ii$, by (ii) of Theorem \ref{thm: Q' and p_int via BV pushforward}.
        \item Inclusions $i^\mr{conj}_\ii$ and $i_\ii$ generally do not coincide. Moreover, the first one does not map formal power series in $\hbar$ to formal power series in $\hbar$, and in particular does not have a ``classical limit'' -- truncation $\bmod\; \hbar$. 
        %\marginpar{this is a bit rambly. edit.}
        On the other hand $i_\ii$ has a classical limit, which moreover is a morphism of commutative dg algebras, and thus is a pullback by an $L_\infty$ morphism (a nonlinear map) $\pi_\ii\colon \F\ra \F'$, see Corollary \ref{cor: class limit of interacting ipK via flow of H^cl} (\ref{cor: class limit of interacting ipK via flow of H^cl (b)}) and Remark \ref{rem: L_infty language} below.
        \item Chain homotopies $K_\ii^\mr{conj}$ and $K_\ii$ generally do not coincide. For instance, the former one does not vanish on constants while the latter one does. Also, as in the case of $i_\ii^\mr{conj}$ above, $K_\ii^\mr{conj}$ does not map formal power series in $\hbar$ to formal power series in $\hbar$, while $K_\ii$ does, see Example \ref{example: Kint^conj(1) in toy interacting scalar}.
    \end{itemize}
    %\marginpar{maybe add example: toy interacting scalar $\frac{x^2}{2}+P(x)$}
\end{rem}

\begin{rem}\label{rem: two proofs of Thm 5.1 imply correct coeffs}
Here below we will give a sketch of a diagrammatic proof of Theorem \ref{thm: Q' and p_int via BV pushforward}. The ``sketchy'' part is that we don't check that coefficients of the diagrams agree. Later, in Section \ref{ss: TQM proof of thm on Q'_hbar and p_int}, we will give an independent proof based on a presentation of homological perturbation theory via topological quantum mechanics.
%a proof (not a sketch) based on a different idea (topological quantum mechanics). 
In particular, it implies that the coefficients of diagrams in the sketch below agree.
\end{rem}

\begin{proof}[Sketch of proof of Theorem \ref{thm: Q' and p_int via BV pushforward}]
Item (i): by homological perturbation lemma, we have
\begin{equation}\label{Q'_hbar HPL}
    Q'_\hbar=Q'_{0,\hbar}+p_\hbar Q_\ii i-p_\hbar Q_\ii K_\hbar Q_\ii i+\cdots
\end{equation}
Contributions to terms in this series (excluding $Q'_{0,\hbar}$) are given by cable diagrams of the form shown in Figure \ref{fig:Q'_hbar cable diagram}. 
    \begin{figure}[h]
        \centering
        \includegraphics[width=0.4\linewidth]{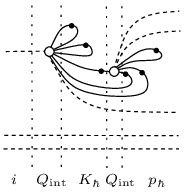}
        \caption{A typical cable diagram for a term in $Q'_\hbar$.}
        \label{fig:Q'_hbar cable diagram}
    \end{figure}
Graphically, $Q_n$ splits a thread into $n-1$ threads (each of the incoming and outgoing threads for $Q_n$ can be infrared or ultraviolet).\footnote{\label{footnote: black dot property}
An important property of admissible cable diagrams for $Q'_\hbar$ (and for $i_\ii$ below) is that the black dot $\kappa^\vee$ occurs only in two possible situations: (a) just before the merging of two threads on one of the merging threads, by the mechanism of Figures \ref{fig:phbar cable diagram}, \ref{fig:Khbar cable diagrams} (or equivalently formulae (\ref{thm SDR for free quantum BV complexes, eq4}), (\ref{Khbar via H})) and (b) on a thread that is about to split by $Q_\ii$, just before $Q_\ii$. One can show the latter by looking at the transpose of terms in (\ref{Q'_hbar HPL}). (Note that passing to the transpose corresponds to reading the cable diagram right-to-left instead of left-to-right.)
One has 
$S\F'\xra{p_\hbar^\vee} S\F'\otimes S\LL \xra{Q_\ii^\vee}S\F'\otimes S\LL\otimes \F \xra{K_\hbar^\vee}   S\F'\otimes S\LL \xra{Q_\ii^\vee}\cdots \xra{Q_\ii^\vee} S\F'\otimes S\LL\otimes \F \xra{K_\hbar^\vee}   S\F'\otimes S\LL
$.
Each time the separate $\F$ factor appears, it corresponds to the input of $Q_\ii$, and that is the only factor which $\kappa$ (the rightmost dot in Figure \ref{fig:Khbar cable diagrams}) from the subsequent $K^\vee$ can hit.
}
Note that here a nontrivial transformation occurs only to one input thread,\footnote{Proven by induction, going left-to-right in the cable diagram: 
% one has 
% $\F'\odot S^n\F'^* \xra{Q_\ii} S\F^*\odot S^n \F'^* \xra{K_\hbar} S\F^*\odot S^n\F'^* \cdots \xra{K_\hbar} S\F^*\odot S^n\F'^* \xra{p_\hbar} S\F'^*\odot S^n \F'^* $.
% -- 
The first $Q_\ii$ splits one thread, acting as identity on the remaining dashed threads $T_1,\ldots,T_n$. The subsequent $K_\hbar$ also acts trivially on $T_i$ and puts the rightmost black dot $\kappa^\vee$ on one of the output threads of the first $Q_\ii$, which then has to be followed by the second $Q_\ii$ by footnote \ref{footnote: black dot property}. Repeating the argument, threads $T_i$ are left untransformed by all subsequent operations, and finally are left untransformed by $p_\hbar$ (cf. Figure \ref{fig:phbar cable diagram}).
}
thus the diagram determines a derivation of $\fun$ (a formal vector field on $\F'$). Such a diagram can be seen as representing $\{S'_\Gamma,O'\}'$ -- the action of the Hamiltonian vector field generated by a contribution $S'_\Gamma$ of a connected Feynman graph $\Gamma$ to the effective action (\ref{S' definition via BV pushforward}) on an element $O'\in \funprime$ (an infrared observable\footnote{Here when we say ``observable,'' we just mean an element of $\funprime$, not necessarily a $Q'_\hbar$-closed element (i.e., we don't insist on gauge-invariant observables).}), see Figure \ref{fig:Q'_hbar Feynman diagram}. 
%\marginpar{Is it ok to have this ambiguity with the word ``observable''}

The translation of a cable diagram to a Feynman graph is by pinching the left side of the cable diagram and pairing to $O'$, interpreting solid lines with a black dot as propagators $h$, and forgetting the left-to-right tiered structure of the cable diagram.
\begin{figure}
    \centering
    \includegraphics[width=0.5\linewidth]{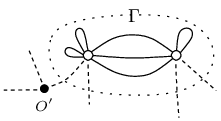}
    \caption{Hamiltonian vector field generated by the contribution of a Feynman graph  $\Gamma$ to the effective action $S'$ acting on an infrared observable $O'$.}
    \label{fig:Q'_hbar Feynman diagram}
\end{figure}
The reverse translation of a connected Feynman graph $\Gamma$ (with external dashed edges and internal solid edges) to a cable diagram is as follows: choose a ``height function'' (or ``Morse function'') $\chi$ on $\Gamma$ (seen as a 1d cell complex), satisfying the following: 
\begin{itemize}
    \item $\chi$ is a continuous function on $\Gamma$ valued in an interval $[a,b]$.
    %smooth on edges and continuous at vertices.
    \item $\chi=a$ at the tip of one dashed edge, $\chi=b$ at the tips of the other dashed edges. 
    \item $\chi$ is monotonous on dashed edges. On solid edges it is either monotonous or has a unique local maximum in the interior of the edge.
    \item For every vertex $v$, $\chi$ is decreasing along exactly one of the incident edges (dashed or solid).
\end{itemize}
Such a Morse function converts $\Gamma$ into a cable diagram, where cross-sections of the cable are given by level sets of $\chi$. We understand that a black dot (corresponding to an insertion of $\kappa^\vee$ on the thread, or $K$ on the cable) should be inserted just before the merging of two threads on one of them, and also on a thread that is about to split, see Figure \ref{fig:Q'_hbar cable diagram}.
Local maxima of $\chi$ on edges correspond to merging a pair of threads with  $\omega''^{-1}$  (or insertion of $\Delta$ on the cable), see Figure \ref{fig:phbar cable diagram}.  Vertices of $\Gamma$ correspond to $Q_\ii$ factors in (\ref{Q'_hbar HPL}).\footnote{
An equivalent construction is to 
%As a variant of the construction, one can instead 
consider a combinatorial Morse function $\bar\chi$ on $\Gamma$ in the sense of R. Forman, assigning values in $[a,b]$ to vertices and edges, such that for each edge $e$,  $\bar\chi(e)$ is smaller than at $\bar\chi$ on at most one of the incident vertices, and for each vertex $v$, $\bar\chi(v)$ is greater than $\bar\chi$ on at most one incident edge.
}

This process describes the creation of the nontrivial part of the cable diagram of $Q'_\hbar$, to which one can add an arbitrary number of dashed threads passing through, like the bottom two threads in Figure \ref{fig:Q'_hbar cable diagram}.

% assigning distinct real values to vertices and edges, assigning $\chi=-\infty$ to one dashed edge and $\chi=+\infty$ to all other dashed edges, and satisfying the following:
% \begin{itemize}
%     \item For each vertex $v$, there is exactly one incident edge $e$ with $\chi(e)<\chi(v)$.
%     \item For each internal edge $e$ connecting vertices $v_1$ and $v_2$, one has either 
%     \begin{enumerate}[(a)]
%         \item $\chi(v_1)<\chi(e)<\chi(v_2)$ or 
%     \item $\chi(v_1)<\chi(e)>\chi(v_2)$.
%     \end{enumerate}
%     The situation $\chi(v_1)>\chi(e)<\chi(v_2)$ is not allowed.
% \end{itemize}

In summary, terms in (\ref{Q'_hbar HPL}) (cable diagrams) match the terms in the Feynman graph expansion of (\ref{Q'_hbar=(S',-)+Delta'}), $Q'_\hbar=Q'_{0,\hbar}+\sum_\Gamma \{S'_\Gamma,-\}'$,  
which proves item (i) of the theorem.

Item (ii):
By homological perturbation lemma, we have 
\begin{equation}\label{p_int HPL}
    p_\ii=p_\hbar-p_\hbar Q_\ii K_\hbar+p_\hbar Q_\ii K_\hbar Q_\ii K_\hbar-\cdots
\end{equation}
Terms here correspond to cable diagrams like the one in Figure \ref{fig:p_int cable diagram}. Note that footnote \ref{footnote: black dot property} applies here as well.

\begin{figure}[h]
    \centering
    \includegraphics[width=0.7\linewidth]{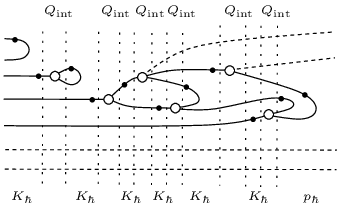}
    \caption{Typical cable diagram for a term in $p_\ii$.}
    \label{fig:p_int cable diagram}
\end{figure}

Acting on an element $O\in \fun$ (an input for the left side of the cable), such a diagram reproduces a Feynman graph in the perturbative expansion of (\ref{p_int}), see Figure \ref{fig:p_int Feynman graph}. A Feynman graph here is a connected graph with a distinguished vertex decorated by $O$, other vertices decorated by $S_n$ (with $n$ the valence of the vertex), external (dashed) edges allowed (decorated by the input in $\F'$) and internal (solid) edges decorated by the propagator $h$.

\begin{figure}[h]
    \centering
    \includegraphics[width=0.35\linewidth]{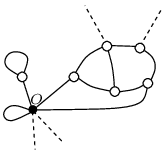}
    \caption{A Feynman graph for the integral (\ref{p_int}) computing the BV pushforward of an observable $O$.}
    \label{fig:p_int Feynman graph}
\end{figure}

Translation of a cable diagram to a Feynman graph is similar to (i). The reverse process is also similar, with an amendment that $\chi$ assigns the (minimal) value $a$ to the $O$-vertex and (maximal) value $b$ to tips of dashed edges.

Again, terms in (\ref{p_int HPL}) match the Feynman graph contributions to (\ref{p_int}), which proves (ii) of the theorem.
\end{proof}

%\subsection{Observable-lifting map \texorpdfstring{$i_\ii$}{iint}}

\begin{rem} 
The observable-lifting map $i_\ii$ and the chain homotopy $K_\ii$ in (\ref{SDR for interacting quantum theory}) are more complicated than $Q_\hbar$ and $p_\ii$, and the corresponding cable diagrams cannot be immediately interpreted as Feynman graphs for a BV integral for the theory $(\F,\omega,S)$, see Figures \ref{fig:i_int cable diagram}, \ref{fig:K_int cable diagram}. Nevertheless, in Section %\ref{ss: TQM as 1d AKSZ} 
\ref{sss: cable diagrams for i_int via 1d AKSZ}
below we will interpret them as Feynman diagrams for the BV integral for a different theory build out of $(\F,\omega,S)$.
\begin{figure}
    \centering
    \includegraphics[width=0.7\linewidth]{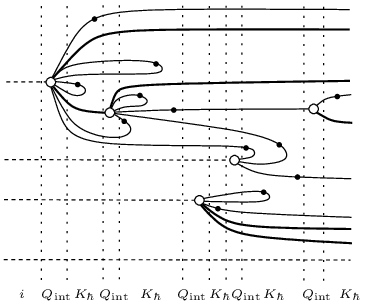}
    \caption{
    %{\color{red} Correction: I think this is not a general type of diagram for $i_\ii$. A more general one can contain dashed edges between $Q_\ii$ vertices and possibly solid edges (though those may cancel out by CME).} 
    A typical cable diagram for $i_\ii$. Thick threads stand for elements of $\F^*$ (i.e., thick thread = dashed thread $\F'^*$ plus thin thread $\F''^*$).}
    \label{fig:i_int cable diagram}
\end{figure}
\begin{figure}
    \centering
    \includegraphics[width=0.55\linewidth]{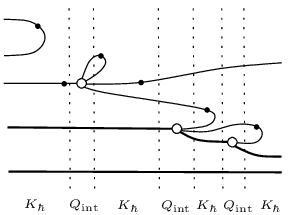}
    \caption{A typical cable diagram for $K_\ii$.}
    \label{fig:K_int cable diagram}
\end{figure}
\end{rem}

% \begin{figure}
%     \centering
%     \includegraphics[width=0.5\linewidth]{i_int_Feynman_graph_4.eps}
%     \caption{Cable diagram of Figure \ref{fig:i_int cable diagram} drawn as a ``Feynman'' graph}
%     \label{fig:placeholder}
% \end{figure}

%\subsection{An alternative SDR for the interacting theory}

\subsection{Example: toy interacting quantum scalar field}
\label{sss example: toy interacting quantum scalar}
%\begin{expl}
    Consider the toy scalar theory (Example \ref{example: toy scalar S_0=x^2/2} and Section \ref{sss: example -- quantum toy free scalar}) and perturb the action $S_0(x)=\frac{x^2}{2}$ 
    to
    \begin{equation}
    S(x)=\frac{x^2}{2}+P(x)
    \end{equation} 
    with $P(x)=S_\ii(x)=\sum_{n\geq 3} \frac{P_n}{n!}x^n$ a polynomial potential. The corresponding perturbation of the BV differential is 
    \begin{equation}
    Q_{0,\hbar}= x\frac{\dd}{\dd \xi}-\iii\hbar \frac{\dd^2}{\dd x\dd \xi} \quad \ra\quad 
    Q_\hbar= x\frac{\dd}{\dd \xi}-\iii\hbar \frac{\dd^2}{\dd x\dd \xi}+\underbrace{P'(x)\frac{\dd}{\dd \xi}}_{Q_\ii}
    \end{equation}
    In this case the SDR (\ref{SDR for interacting quantum theory}) is
    \begin{equation}\label{SDR interacting toy scalar}
        \SDR{\left(\CC[[x,\xi]][[\hbar]], Q_\hbar\right)}{\CC[[\hbar]]}{i_\ii=i}{p_\ii}{K_\ii}.
    \end{equation}
    Here $i_\ii$ is still the tautological inclusion of constants.
    The projection
    \begin{equation}
        p_\ii\colon O(x) \mapsto c \int_{-\infty}^\infty \dr x\, e^{\frac{\iii}{\hbar}(\frac{x^2}{2}+P(x))} O(x)
    \end{equation}
    -- expectation value of an observable w.r.t. the perturbed action, with $c$ as in (\ref{c toy scalar}).

    Consider the chain homotopy $K_\ii$. %It suffices to consider it on observables of the form $O(x)$, since 
    It vanishes for degree reason on elements of the form $O(x,\xi)=\xi f(x)$, so it suffices to inspect $K_\hbar$ on observables of the form $O(x)$. Using (\ref{Khbar toy scalar 2}), we find
    \begin{multline}
        O(x)\xra{K_\hbar} \xi \mc{R}\left(1-\frac{\iii\hbar}{x} %\dd_x \frac{\dd}{\dd x}
        \right)^{-1}\frac{O(x)}{x} \\
        \xra{Q_\ii} 
        P'(x) \mc{R} \left(1-\frac{\iii\hbar}{x} \frac{\dd}{\dd x}\right)^{-1}\frac{O(x)}{x}\\
        \xra{K_\hbar}
        \xi \mc{R}\left(1-\frac{\iii\hbar}{x} \frac{\dd}{\dd x}\right)^{-1}\frac{P'(x)}{x} \mc{R}\left(1-\frac{\iii\hbar}{x} \frac{\dd}{\dd x}\right)^{-1}\frac{O(x)}{x} \ra\cdots
        %\\\ra (K_\hbar Q_\ii)^n K_\hbar O = \xi \mc{R} \left(\left(1-\frac{\iii\hbar}{x} \frac{\dd}{\dd x}\right)^{-1}\frac{P'(x)}{x} \mc{R}\right)^n\left(1-\frac{\iii\hbar}{x} \frac{\dd}{\dd x}\right)^{-1}\frac{O(x)}{x}
    \end{multline}
    where the operator $\mc{R}$ subtracts the singular part of the Laurent series at $x=0$, as $[\cdots]_\mr{reg}$ in (\ref{Khbar toy scalar 2}). Thus, we have
    \begin{multline}\label{Kint toy scalar}
        K_\ii\colon O(x)\mapsto (K_\hbar-K_\hbar Q_\ii K_\hbar+K_\hbar Q_\ii K_\hbar Q_\ii K_\hbar-\cdots)(O)\\
        =\sum_{n\geq 0} \xi \mc{R}\left(-\left(1-\frac{\iii\hbar}{x} \frac{\dd}{\dd x}\right)^{-1}\frac{P'(x)}{x}\mc{R}\right)^n \left(1-\frac{\iii\hbar}{x} \frac{\dd}{\dd x}\right)^{-1}\frac{O(x) }{x}\\
        =\xi \mc{R}\left(1-\frac{\iii\hbar}{x}\frac{\dd}{\dd x}+\frac{P'(x)}{x}\mc{R} \right)^{-1}\frac{O(x)}{x} .
    \end{multline}
We make the following remarks:
\begin{enumerate}[(a)]
\item Formula (\ref{Kint toy scalar}) has a simple limit $\hbar\ra 0$ (classical limit):\footnote{Here and later when discussing toy interacting scalar field, the prime on $S$ denotes the derivative in $x$, not the effective action.} 
%\marginpar{\bl May 10: $\dd_x S(x) \ra S'(x)$?}
\begin{equation}
    K_\ii^\cl(O)=\xi \frac{O(x)-O(0)}{x+P'(x)}=\xi \frac{O(x)-O(0)}{ S'(x)}.
\end{equation}
    \item If $P(x)$ is an even function (contains only even powers of $x$) and $O(x)$ is an odd function, $\mc{R}$ acts tautologically in (\ref{Kint toy scalar}) and one has
    \begin{equation}\label{Kint toy scalar for S even, O odd}
        K_\ii(O)= \xi \left(1-\frac{\iii\hbar}{x}\frac{\dd}{\dd x}+\frac{P'(x)}{x} \right)^{-1}\frac{O(x)}{x} 
    \end{equation}
\end{enumerate}

Let us compare the ``conjugation SDR'' (\ref{SDR for interacting theory from conjugation}) of Remark \ref{rem: SDR for interacting theory from conjugation}  with the SDR (\ref{SDR interacting toy scalar}) coming from homological perturbation theory. It is obvious that $i_\ii^\mr{conj}=i_\ii$, $p_\ii^\mr{conj}=p_\ii$. As for $K_\ii^\mr{conj}$, we first observe that 
    \begin{equation}
        e^{-\frac{\iii}{\hbar}P(x)} \frac{\dd}{\dd x} e^{\frac{\iii}{\hbar}P(x)} = \frac{\iii}{\hbar} P'(x)+\frac{\dd}{\dd x}.
    \end{equation}
Thus, we have
\begin{multline}\label{Kint^conj toy scalar}
    K_\ii^\mr{conj}(O)=e^{-\frac\iii\hbar P(x)} K_\hbar (e^{\frac\iii\hbar P(x)}O(x))\\
    = \xi e^{-\frac\iii\hbar P(x)} \mc{R}\left(1-\frac{\iii\hbar}{x}\frac{\dd}{\dd x} \right)^{-1}e^{\frac\iii\hbar P(x)}\frac{O(x)}{x}\\
    =\xi e^{-\frac\iii\hbar P(x)} \mc{R}e^{\frac\iii\hbar P(x)}\left(1-\frac{\iii\hbar}{x}\frac{\dd}{\dd x}+\frac{P'(x)}{x} \right)^{-1}\frac{O(x)}{x}. 
\end{multline}

In the special case when $S(x)$ is an even function and $O(x)$ is an odd function, $\mc{R}$ in (\ref{Kint^conj toy scalar}) acts tautologically and one has
\begin{equation}\label{toy interacting scalar Kint=Kint^conj for P even, O odd}
    K_\ii^\mr{conj}(O)=K_\ii(O),
\end{equation}
cf. (\ref{Kint toy scalar for S even, O odd}).

Without the assumption that $S$ is even and $O$ is odd, $K_\ii^\mr{conj}(O)$ generally does not coincide with $K_\ii(O)$. Moreover, $K_\ii^\mr{conj}(O)$ is generally not a power series in $\hbar$ (and so does not have a classical limit).

\begin{expl}\label{example: Kint^conj(1) in toy interacting scalar}
    Let $S=\frac{x^2}{2}+g x^4$, $O=1$. Then, expanding $K_\ii^\mr{conj}(1)$ in powers of the coupling constant $g$, we have
    \begin{multline}
        K_\ii^\mr{conj}(1)=\xi \left( \frac\iii\hbar g(x^3+\iii\hbar\, 3x)\right. \\
        \left.+\left(\frac\iii\hbar\right)^2 \frac{g^2}{2}\left(-x^7+\iii\hbar x^5+(\iii\hbar)^27\cdot 5 x^3 +(\iii\hbar)^3 7\cdot 5\cdot 3 x\right) + \mc{O}(g^3)
        \right),
    \end{multline}
    which contains negative powers of $\hbar$.
\end{expl}

%\marginpar{Added Apr 11}
\subsubsection{An alternative computation (and another formula) for $K_\ii$.}
We are looking for an operator $K_\ii\colon O(x)\mapsto \xi \ul{K}O,\, \xi O(x) \mapsto 0$ satisfying
\begin{equation}\label{toy interacting scalar [K,Q]=id-ip}
    [K_\ii,Q_\hbar]=\mr{id}-i\circ p_\ii
\end{equation}
with $Q_\hbar=\left(S'(x)-\iii\hbar \frac{\dd}{\dd x}\right)\frac{\dd}{\dd\xi}$.
%\footnote{Here and later when discussing toy interacting scalar field, the prime on $S$ denotes the derivative in $x$, not the effective action.} 
Equation (\ref{toy interacting scalar [K,Q]=id-ip}), by specializing to functions $O(x)$ and $\xi O(x)$, is equivalent to the system
\begin{eqnarray}\label{toy interacting scalar K system 1}
    \left(S'(x)-\iii\hbar \frac{\dd}{\dd x}\right)\ul{K} O&=& O(x)-c_\ii \int_\RR \dr y\, e^{\frac{\iii}{\hbar} S(y)}O(y),\\
    \label{toy interacting scalar K system 2}
    \ul{K}\left(S'(x)-\iii\hbar \frac{\dd}{\dd x}\right)O &=&O,
\end{eqnarray}
with $c_\ii=\left(\int_\RR \dr x\, e^{\frac{\iii}{\hbar}S(x)}\right)^{-1}$. Note that  the differential operator
\begin{equation}
\left(S'(x)-\iii\hbar \frac{\dd}{\dd x}\right)=e^{-\frac\iii\hbar S} (-\iii\hbar)\frac{\dd}{\dd x} e^{\frac\iii\hbar S}
\end{equation}
contained in $Q_\hbar$ is a conjugation of the derivative. Let $\ul{K}=e^{-\frac\iii\hbar S} K^c e^{\frac\iii\hbar S}$ for some operator $K^c$. The system (\ref{toy interacting scalar K system 1}), (\ref{toy interacting scalar K system 2}) reads
\begin{equation}
    -\iii\hbar \frac{\dd}{\dd x} K^c \til{O}=\til{O}-e^{\frac{\iii}{\hbar}S}c_\ii \int_\RR \dr y\, \til{O}(y),
    \qquad K^c\left(-\iii\hbar \frac{\dd}{\dd x}\right) \til{O}=\til{O},
\end{equation}
where $\til{O}=e^{\frac\iii\hbar S}O$.
This system has a \emph{unique} solution\footnote{Uniqueness of chain-homotopy is a feature of two-term complexes: Let $\SDR{(V,d)}{(V',d')}{i}{p}{K}$ be an SDR of chain complexes  with $V$ concentrated in two neighboring degrees $n$, $n+1$. Then if $i$ and $p$ are fixed, $K$ is determined uniquely (is non-deformable). In the case at hand, $(V,d)$ is the BV complex in the left side of (\ref{SDR interacting toy scalar}), concentrated in degrees $-1,0$.
%complex $V$ concentrated in two neighboring degrees
%(once $\iota$ and $\pi$ are fixed) is a special property of two-term chain complexes $\F,d$, concentrated in degrees $0$ and $1$.
}
\begin{equation}
    K^c(\til{O})(x)=\frac\iii\hbar \left(\int^x_{-\infty} \dr y\, \til{O}(y)-\int^x_{-\infty} \dr y\, e^{\frac\iii\hbar S(y)} c_\ii \int_\RR \dr z\, \til{O}(z)\right)
\end{equation}
%\marginpar{remind about the integration contour}
and thus $K_\ii$ is
\begin{multline}\label{toy interacting scalar Kint integral formula}
    K_\ii(O)(x)=\xi \frac\iii\hbar e^{-\frac\iii\hbar S(x)} \int^x_{-\infty} \dr y\, e^{\frac\iii\hbar S(y)}\left(O(y)-c_\ii \int_\RR \dr z\, e^{\frac\iii\hbar S(z)}O(z)\right)\\
    = \xi \int_\RR \dr y\, G(x,y)O(y)
\end{multline}
with the ``Green's function''
\begin{equation}\label{toy interacting scalar: Green's function}
    G(x,y)=\frac\iii\hbar e^{\frac\iii\hbar(-S(x)+S(y))}\left(\theta(x-y)-c_\ii \int^x_{-\infty} \dr z\, e^{\frac\iii\hbar S(z)}\right).
\end{equation}
Formula (\ref{toy interacting scalar Kint integral formula}) generalizes the integral formula (\ref{Khbar toy scalar 3}) to the interacting case. By uniqueness, this chain homotopy coincides with (\ref{Kint toy scalar}). 
%Note that from the formula (\ref{toy interacting scalar Kint integral formula}), it is not immediately obvious why it maps formal power series in $\hbar$ to formal power series in $\hbar$ (and has a classical limit), whereas from (\ref{Khbar toy scalar 3}) it is immediate.

For comparison, as follows from (\ref{Khbar toy scalar 3}), $K_\ii^\mr{conj}$ is given by 
\begin{multline}\label{toy interacting scalar Kint^conj integral formula}
    K_\ii^\mr{conj}(O)(x)=\xi \frac\iii\hbar e^{-\frac\iii\hbar S(x)} \int^x_{-\infty} \dr y \left( e^{\frac\iii\hbar S(y)} O(y)- e^{\frac\iii\hbar S_0(y)}c \int_\RR \dr z\, e^{\frac\iii\hbar S(z)}O(z)\right)\\
    = \xi \int_\RR \dr y\, G^\mr{conj}(x,y)O(y)
\end{multline}
with 
\begin{equation}
    G^\mr{conj}(x,y)=\frac\iii\hbar e^{\frac\iii\hbar(-S(x)+S(y))}\left(\theta(x-y)-c \int^x_{-\infty} \dr z\, e^{\frac\iii\hbar S_0(z)}\right).
\end{equation}
Comparing to (\ref{toy interacting scalar: Green's function}), we note that in the second term the interacting action $S(x)$ is replaced with the free action $S_0(x)$ and the normalization constant $c_\ii$ of the interacting theory is replaced with its free counterpart (\ref{c toy scalar}).

We also note that if $P(x)$ is an even function and $O(x)$ is an odd function, the second terms in (\ref{toy interacting scalar Kint integral formula}) and in (\ref{toy interacting scalar Kint^conj integral formula}) vanish and we recover the result (\ref{toy interacting scalar Kint=Kint^conj for P even, O odd}).

% \begin{rem}\label{rem: K_int vs K_int^conj}
%     It is generally not true that $K_\ii$ and $K_\ii^\mr{conj}$ coincide. Analyzing the cable diagrams for the two expressions, one notices that diagrams containing a thick solid thread connecting two $Q_\ii$ vertices (e.g. such a thread is present in Figure \ref{fig:K_int cable diagram}) can generally appear in $K_\ii$ but not in $K_\ii^\mr{conj}$. In the example of toy scalar, such thick solid edges in $K_\ii$ do not appear for a degree reason (a $Q_\ii$ vertex has $\xi$ as input and $x$s as outputs, which cannot be an input for another $Q_\ii$ vertex) -- ultimately because $(\F,d)$ in this case is just a two-term cochain complex.
% \end{rem}

%\marginpar{subsection added Apr 26}
\subsection{Example: lifting Wilson loop observable from Yang--Mills to $BF+B^2$ theory, abelian case}
\label{ss: example - lifting ab Wilson loop}
\subsubsection{Abelian $BF+B^2$ theory}
Consider the abelian $BF+B^2$ theory on a closed oriented 4-manifold $M$ defined by the BV action
\begin{equation}\label{ab BF+BB: S}
    S=\int_M \B \wedge \dr\A+\frac{\lambda}{2} \B\wedge \B
\end{equation}
with $\lambda\neq 0$ a coupling constant. The fields $(\A,\B)$ (the AKSZ superfields) are two nonhomogeneous forms on $M$,
\begin{equation}\label{ab BF+BB: F}
    (\A,\B)\in\;\; \Omega^\bt(M)[1]\oplus \Omega^\bt(M)[2] =\colon \F.
\end{equation}
We introduce the following notations for components of $\A,\B$:
\begin{equation}
    \A=c+A+B^++\tau^++\phi^+,
    %\underset{0,1}{c}+\underset{1,0}{A}+\underset{2,-1}{B^+}+\underset{3,-2}{\tau^+}+\underset{4,-3}{\phi^+},
    \qquad \B= \phi+\tau+B+A^++c^+,
\end{equation}
where form degrees and ghost numbers are as follows:
\begin{center}
\begin{tabular}{c|c|c|c|c|c|c|c|c|c|c}
     & $c$& $A$ & $B^+$ & $\tau^+$ & $\phi^+$ & $\phi$ & $\tau$ & $B$ & $A^+$ & $c^+$  \\ \hline
     form degree & 0 & 1 & 2 & 3 & 4 & 0&1&2&3&4 \\
     ghost number & 1 & 0 & -1&-2&-3&2&1&0&-1&-2
\end{tabular}
\end{center}

The $(-1)$-symplectic form on $\F$ is 
\begin{equation}
    \omega=\int_M \delta \B\wedge \delta \A = \int_M \sum_{\Phi\in \{A,B,c,\tau,\phi\}}\delta \Phi^+\wedge \delta \Phi.
\end{equation}

\subsubsection{Gauge-fixing}
Next, assume that $M$ is equipped with a Riemannian metric $g$ and let $*$ be the corresponding Hodge star operator. We split the 2-forms into self-dual forms $*\alpha=\alpha$ and anti-self-dual forms $*\beta=-\beta$: $\Omega^2(M)=\Omega^2_+(M)\oplus \Omega^2_-(M)$, with projectors to the two summands being $P_\pm=\frac{\mr{id}\pm *}{2}$.\footnote{Note that for $\alpha_+$ a self-dual 2-form and $\beta_-$ an anti-self-dual 2-form, one has $\int_M \alpha_+\wedge \beta_-=0$.}
Thus, we split the 2-forms fields into self-dual and anti-self-dual parts:
\begin{equation}
    B=B_++ B_-,\quad B^+=B_+^++B_-^+.
\end{equation}

The metric $g$ on $M$ also induces the Hodge decomposition of forms into harmonic, exact and coexact, $\Omega^\bt(M)=\Omega_\mr{harm}^\bt(M)\oplus \Omega_\mr{ex}^\bt(M)\oplus \Omega_\mr{coex}^\bt(M)$.

We introduce the infrared and ultraviolet complexes as follows.
First, we introduce two complexes $\F_1,\F_2$:
\begin{equation}
    \F_1=\qquad \begin{array}{ccccccc}
         % \Omega^0_{(c)}& \xra{d}& \Omega^1_{(A)}&\xra{P_+d}& \Omega^2_{+(B_+^+)} &&  \\
         % &&\oplus&&\oplus&& \\
         % && \Omega^2_+&\xra{d}& \Omega^3 &\xra{d}& \Omega^4
         \Omega^0& \xra{\dr}& \Omega^1 &\xra{P_+\dr}& \Omega^2_+ &&  \\
         &&\oplus&&\oplus&& \\
         && \Omega^2_+&\xra{\dr}& \Omega^3 &\xra{\dr}& \Omega^4
    \end{array}
\end{equation}
Here the first line corresponds to fields $c,A,B^+_+$ and the second line to fields $B_+,A^+,c^+$. We define $\F_2$ as the complex
\begin{equation}
    \F_2=\qquad 
    \begin{array}{ccccccccccc}
         \Omega^0 &\xra{\dr} &\Omega^1 &\xra{P_- \dr}& \Omega^2_- &\xra{0}& \Omega^2_-& \xra{\dr} &\Omega^3 &\xra{\dr} &\Omega^4, 
    \end{array}
    %\Omega^0\xra{d} \Omega^1 \xra{P_- d} \Omega^2_- \xra{0} \Omega^2_- \xra{d} \Omega^3 \xra{d} \Omega^4,
\end{equation}
where, going left-to-right, the terms correspond to fields $\phi,\tau,B_-,B_-^+,\tau^+,\phi^+$.

We define the infrared complex as $\F_1$ plus the harmonic part of $\F_2$ (which is isomorphic to cohomology of $\F_2$):
\begin{equation}
    \F'=\F_1\oplus \underbrace{\left( \F_2\cap \Omega^\bt_\mr{harm}\right)}_{\simeq H^\bt(\F_2)}
\end{equation}
and the ultraviolet complex as $\F_2$ with harmonic part removed (so as to make $\F''$ acyclic):
\begin{equation}
    \F''=\F_2\cap (\Omega^\bt_\mr{ex}\oplus \Omega^\bt_\mr{coex}).
\end{equation}
Viewed as a graded vector space, the space of fields (\ref{ab BF+BB: F}) splits as $\F=\F'\oplus \F''$. 
%\marginpar{comment on the differential in $\F$, $d_\F=d'+d''+d_\mr{int}$?}

We can write the action (\ref{ab BF+BB: S}) as
\begin{multline}
    S=\underbrace{\int_M B_+ \dr A + A^+ \dr c}_{S'_0}+\underbrace{\int_M B^+_- \dr \tau+\tau^+\dr \phi}_{S''_0} \\
    + \underbrace{\int_M B_- \dr A+B_+^+ \dr \tau+\lambda \left( \frac12 B_+ B_+ +\frac12 B_-B_- +A^+\tau+c^+\phi\right)}_{S_\ii}.
\end{multline}
Here $S'_0$, $S''_0$ are the free BV actions associated to the complexes $\F'$, $\F''$, and $S_\ii$ is the perturbation.

We want to consider the BV pushforward defined by the gauge-fixing Lagrangian $\LL=\F_2\cap \Omega_\mr{coex}^\bt \subset \F''$ -- the coexact part of the complex $\F''$ (or equivalently $\F_2$). This Lagrangian can be written as $\LL=\mr{im}(\kappa)$ with $\kappa$ the chain contraction of $\F''$ induced by the following contraction of the complex $\F_2$:
\begin{equation}\label{ab BF+BB: kappa}
    %\F_2\colon\qquad 
    \kappa\colon\quad
    \begin{array}{ccccccccccc} 
       \Omega^0 &\xla{\dr^*G} &\Omega^1 &\xla{2\dr^* G}& \Omega^2_- &\xla{0}& \Omega^2_-& \xla{2P_-\dr^* G} &\Omega^3 &\xla{\dr^*G} &\Omega^4.
    \end{array}
\end{equation}
Here $G=(\Delta_g+P_\mr{harm})^{-1}$ is the Green's function; $\Delta_g=[\dr,\dr^*]$ is the Laplace--de Rham operator. The dual of $\kappa$ extended to a derivation is then
\begin{equation}\label{ab BF+BB: kappahat}
    \wh\kappa=\int_M G \dr^*\tau \frac{\delta}{\delta \phi}+2G \dr^*B_- \frac{\delta}{\delta \tau}+ 2P_- G  \dr^*\tau^+ \frac{\delta}{\delta B_-^+}+G\dr^*\phi^+\frac{\delta}{\delta\tau^+}.
\end{equation}

Thus, we have an SDR (\ref{SDR F to F'}) with the inclusion/projection $\iota,\pi$ obvious by construction and with $\kappa$ as in (\ref{ab BF+BB: kappa}).

The effective action on $\F'$, computed by BV pushforward, reads (up to a constant)
\begin{multline}
    S'=\underbrace{\int_M B_+\dr A + A^+ \dr c +\frac{\lambda}{2} B_+ B_+}_{S'_\mr{YM}} \\
    + \underbrace{\lambda\int_M \frac12 B_{-\mr{harm}} B_{-\mr{harm}} +A^+_\mr{harm} \tau_\mr{harm} + c^+_\mr{harm} \phi_\mr{harm}
    }_{S'_\mr{zm}}.
\end{multline}
%Here the subscript ``harm'' stands for the harmonic part of a form.
Here the first term in the abelian Yang--Mills action in the first-order formalism.\footnote{Note that integrating out the field $B_+$ in $S'_\mr{YM}$, one obtains $\displaystyle \int_M -\frac{1}{4\lambda} \dr A\wedge *\dr A + A^+\dr c $ -- the BV action of abelian Yang--Mills in the second-order formalism.} The second term is the contribution of zero-modes (cohomology) of $\F_2$.

\subsubsection{Lifting abelian Wilson loop observable}
\label{sss: lifting ab Wilson loop observable}
Assume that the first Betti number of $M$ vanishes (so that $A^+_\mr{harm}=0$).

Consider the abelian Wilson loop  
\begin{equation}
    W_\gamma(A)=\exp \left(\iii\alpha \oint_\gamma A\right)
\end{equation}
as an element of the infrared BV complex.
Here $\gamma\subset M$ a closed curve and $\alpha\in \mathbb{R}$ a coupling constant. Note that $W_\gamma$ is a BV cocycle: $\{S',W_\gamma\}'-\iii\hbar \Delta' W_\gamma=0$.\footnote{Here we use the assumption of vanishing first Betti number. Otherwise, we would have $\displaystyle \{S',W_\gamma\}'-\iii\hbar\Delta'W_\gamma=\iii\alpha \lambda \left(\oint_\gamma \tau_\mr{harm}\right)\cdot W_\gamma(A)\neq 0$. 
Note that instead of assuming $B_1=0$ we could have assumed that $\gamma$ is trivial in the first homology of $M$ -- that would also guarantee that $W_\gamma$ is a BV cocycle.
}

\begin{prop}\label{prop: lifting ab Wilson loop}
    The lifting of $W_\gamma$ to abelian $BF+B^2$ theory is given by
    \begin{equation}\label{ab BF+BB: i_int(W)}
        i_\ii(W_\gamma)=\exp \left(\iii\alpha \oint_\gamma (A+\lambda 2 G \dr^*B_-)\right).
    \end{equation}
\end{prop}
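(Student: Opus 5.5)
We compute $i_\ii(W_\gamma)$ directly from the homological perturbation formula
\[
i_\ii=\sum_{k\ge0}(-K_\hbar Q_\ii)^k\, i,
\]
where $Q_\ii=\{S_\ii,-\}$ and we use $i_\hbar=i=\pi^*$ from Theorem \ref{thm SDR for free quantum BV complexes}. Since the deformation $-\iii\hbar\Delta$ also enters through $K_\hbar$, the first step is to simplify $K_\hbar$ on the relevant class of functions.

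\emph{Step 1: compute $Q_\ii$ on $W_\gamma$.} The observable $W_\gamma$ depends only on $A\in\F'$, so $\{S_\ii,W_\gamma\}$ only sees the part of $S_\ii$ containing the antifield $A^+$, namely $\lambda\int_M A^+\tau$. Because $\omega$ pairs $A$ with $A^+$, up to a sign convention this gives
\[
Q_\ii W_\gamma=\iii\alpha\lambda\Big(\oint_\gamma\tau\Big)W_\gamma .
\]
Here $\tau$ is an ultraviolet field (the harmonic part is absent since $b_1=0$), and the factor is linear in $\tau$.

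\emph{Step 2: apply $K_\hbar$.} By (\ref{Khbar O}), the chain homotopy $K_\hbar$ acts on functions of degree $n$ in $x''$ through $\wh\kappa$, the factor $\nu$, and contractions $\eta$. The expression $\tau\,W_\gamma(A)$ is linear in $x''$ and $W_\gamma$ is $x''$-independent, so $\eta$ acts as zero and $\nu$ acts as $1$. Hence
\[
K_\hbar\big(\tau\,W_\gamma\big)=\big(\wh\kappa\,\tau\big)W_\gamma .
\]
From (\ref{ab BF+BB: kappahat}), $\wh\kappa$ sends $\tau$ to $2G\dr^*B_-$, so
\[
-K_\hbar Q_\ii W_\gamma=-\iii\alpha\lambda\Big(\oint_\gamma 2G\dr^*B_-\Big)W_\gamma ,
\]
up to the overall sign, which must be fixed by careful bookkeeping of the signs in the Poisson bracket and in $\wh\kappa$. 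The target formula requires the combined sign to be $+$.

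\emph{Step 3: iterate.} Set $\beta=\oint_\gamma 2G\dr^*B_-$. The next term is $Q_\ii(\beta W_\gamma)$. We have $Q_\ii\beta=0$: the field $B_-$ is conjugate to $B_-^+$, and $B_-^+$ does not appear in $S_\ii$ (only $B_+^+$ does), so nothing in $S_\ii$ pairs with $B_-$. Therefore $Q_\ii$ acts only on $W_\gamma$, giving $\iii\alpha\lambda\,\beta\,(\oint_\gamma\tau)\,W_\gamma$.

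Applying $K_\hbar$ to this element, which is of degree $2$ in $x''$, introduces the averaging factor $\nu=\tfrac12$ and produces $\wh\kappa(\beta\tau)=\beta\cdot 2G\dr^*B_-$, using $\wh\kappa B_-=0$. Any possible $\eta$-contraction between $B_-$ and $\tau$ vanishes, because the propagator $h$ involves $\kappa^2=0$ structure; this needs to be checked. One also needs $\Delta$-type terms to vanish: $\Delta(\tau B_-)$ pairs $\tau$ with $B_-^+$ and $B_-$ with $\tau^+$, and both pairings are zero on these fields.

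By induction, the $k$-th term is $\frac{1}{k!}(\iii\alpha\lambda\beta)^kW_\gamma$, where $1/k!$ comes from the successive factors $\nu=1/k$. Summing the series yields (\ref{ab BF+BB: i_int(W)}).

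\emph{Main obstacle.} The delicate point is controlling $K_\hbar$ on $\beta^k\tau W_\gamma$. Applying $\wh\kappa$ as a derivation gives $\beta^k(2G\dr^*B_-)$ with weight $1/(k+1)$, but one must verify that this is the full answer. This requires confirming that no $\eta$ terms survive, that $\wh\kappa\beta=0$, and that the $K^{\mr{sym}}$ formula reduces to $\nu\wh\kappa$ without the $P'$ factors interfering, since $W_\gamma$ is infrared. I would handle this using (\ref{Ksym=nu whK}) and (\ref{Khbar via H}), together with a direct check that $\eta(\beta\tau)=0$, which follows from $\omega''^{-1}$ pairing $\tau$ only with $\tau^+$ and $B_-$ only with $B_-^+$. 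As a cross-check, I would verify directly that the resulting observable is $Q_\hbar$-closed and that $p_\ii$ of it returns $W_\gamma$.
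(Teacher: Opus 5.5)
Your proposal is correct and follows the paper's proof: you evaluate the perturbation series on $W_\gamma$ term by term. $Q_\ii$ produces $\oint_\gamma\tau$, then $\wh\kappa$ turns it into $\oint_\gamma 2G\dr^*B_-$, and $\nu$ supplies the factor $1/k!$.

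The only difference is bookkeeping. You perturb the free quantum SDR by $Q_\ii$, so you must check that the $\eta$-terms of $K_\hbar$ vanish. The paper perturbs the classical SDR by $Q_\ii-\iii\hbar\Delta$ in one step with $K=\nu\wh\kappa$, and $\Delta$ drops out because the intermediate functions of $A,B_-$ contain no antifields.

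You should not read the overall sign off the target formula. It is forced by the cocycle check you already propose: in the paper's conventions only the $+$ sign makes the lift $Q_\hbar$-closed, via $2G\dr^*P_-\dr\tau=\tau-\dr\dr^*G\tau$ when $b_1=0$.
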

\begin{proof}
%Let us compute $i_\ii(W_\gamma)$ -- the lifting of $W_\gamma$ to the abelian $BF+B^2$ theory. 
By homological perturbation lemma, we have
\begin{equation}\label{ab BF+BB: HPT series}
    i_\ii(W_\gamma)=(i-K(Q_\ii-\iii\hbar \Delta) i + K(Q_\ii-\iii\hbar \Delta)K(Q_\ii-\iii\hbar \Delta) i-\cdots) W_\gamma
\end{equation}
with 
\begin{multline}
    Q_\ii=\{S_\ii,-\}= \int_M (P_+ \dr\tau) \frac{\delta}{\delta B_+}+ (P_- \dr A) \frac{\delta}{\delta B_-^+}+\dr B_- \frac{\delta}{\delta A_+} + \dr B_+^+ \frac{\delta}{\delta \tau^+}\\
    +\lambda \left(
    \phi\frac{\delta}{\delta c}+\tau \frac{\delta}{\delta A}+B_+\frac{\delta}{\delta B_+^+}+B_-\frac{\delta}{\delta B_-^+}+A^+\frac{\delta}{\delta \tau^+}+c^+\frac{\delta}{\delta \phi^+}
    \right).
\end{multline}
Next we compute the terms in the series (\ref{ab BF+BB: HPT series}):
\begin{multline}
    W_\gamma(A) \xra{Q_\ii-\iii\hbar\Delta} \iii\alpha\left(-\oint_\gamma \lambda \tau\right)W_\gamma(A) \xra{K} \iii\alpha\left(-\oint_\gamma \lambda 2G\dr^* B_-\right)W_\gamma(A) \\
    \xra{Q_\ii-\iii\hbar\Delta} (\iii\alpha)^2\left(-\oint_\gamma \lambda 2G\dr^* B_-\right)\left(-\oint_\gamma \lambda \tau\right)W_\gamma(A) 
    \xra{K} 
    \frac12  (\iii\alpha)^2\left(-\oint_\gamma \lambda 2G\dr^* B_-\right)^2 W_\gamma(A)\\
    \cdots \ra (K(Q_\ii-\iii\hbar\Delta))^k (W_\gamma)=
    \frac{1}{k!}(\iii\alpha)^k \left(-\oint_\gamma \lambda 2G\dr^* B_-\right)^k W_\gamma(A).
\end{multline}
Here we use the formula (\ref{Ksym=nu whK}) for $K$, $K=\nu\wh\kappa$, and the operator $\nu$ is responsible for the appearance of the factor $\frac{1}{k!}$ above. Thus, we have
\begin{multline}
    i_\ii(W_\gamma)= \sum_{k\geq 0}\frac{(\iii\alpha)^k}{k!} \left(\oint_\gamma \lambda 2G\dr^* B_-\right)^k W_\gamma(A)\\
    =\exp \left(\iii\alpha \oint_\gamma (A+\lambda 2 G \dr^*B_-)\right).
\end{multline}
\end{proof}

As a consistency check, we can check that the r.h.s. of (\ref{ab BF+BB: i_int(W)}) -- denote it $\til{W}_\gamma$ -- is indeed a BV cocycle in the abelian $BF+B^2$ theory:
\begin{multline}\label{ab BF+BB: tilW cocycle check}
    (Q-\iii\hbar \Delta)\til{W}_\gamma
    %\exp \left(\iii\alpha \oint_\gamma A+\lambda 2 G d^*B_-\right) 
    = 
    \til{W}_\gamma\cdot \iii\alpha\, Q\oint_\gamma \left(A+\lambda 2 G \dr^*B_-\right)\\=
    \til{W}_\gamma\cdot (-\iii\alpha) \oint_\gamma \left(\dr c+\lambda\tau -\lambda 2G \dr^* P_- \dr\tau\right).
\end{multline}
Note that we have
\begin{equation}
    2G\dr^* \underbrace{P_-}_{\frac12-\frac12*}\dr\tau=Gd^*\dr\tau = G(\Delta_g-\dr \dr^*)\tau= \tau-\underbrace{P_\mr{harm}\tau}_{=0\;\mr{by\;}B_1=0} -\dr \dr^*G\tau.
\end{equation}
%where in the last step we used the assumption of vanishing first Betti number.
Therefore, continuing the computation (\ref{ab BF+BB: tilW cocycle check}), we have
\begin{equation}
    (Q-\iii\hbar \Delta)\til{W}_\gamma= \til{W}_\gamma\cdot (-\iii\alpha)\oint_\gamma \dr(c+\lambda \dr^* G\tau) =0
\end{equation}
since $\gamma$ is closed.

\begin{rem}
    If we drop the assumption that the first Betti number of $M$ vanishes, the computation of the lifting (\ref{ab BF+BB: i_int(W)}) still goes through. However, neither $W_\gamma$ nor its lift $\til{W}_\gamma$ will be BV cocycles.
\end{rem}

\section{A topological quantum mechanics perspective}
\label{ss: TQM perspective}
Consider a topological quantum mechanics $\tau$ (cf. Remark \ref{rem: TQM}) defined by the space of states for a point $V=\fun$ equipped with the BV differential $Q_\hbar$ and a second degree $-1$ differential $G=\wh\kappa$. The Hamiltonian is
\begin{equation}\label{Hhat}
\begin{aligned}
    H=[Q_\hbar,\wh\kappa]&= [Q_0,\wh\kappa]-\iii\hbar [\Delta,\wh\kappa]+ [Q_\ii,\wh\kappa]\\
    &=\LL_E-\iii\hbar \eta + [Q_\ii,\wh\kappa]
\end{aligned}
\end{equation}
with $\LL_E$ the Lie derivative along the Euler vector field assigning degree $1$ to coordinates on  $\F''$ and degree $0$ to coordinates on $\F'$, $\eta
%=\frac12 \omega^{-1}(\frac{\dd}{\dd x},\kappa^\vee \frac{\dd}{\dd x})
$
as in (\ref{H second order derivation}), $\wh\kappa$ the extension of $\kappa^\vee$ to a derivation of $S\F^*$.

Let 
\begin{equation}\label{Z^tau}
    Z_{T,\dr T}=e^{-T H+\dr T G} = e^{-TH}+\dr T\, G e^{-TH}\in \Omega^\bt(\RR_+)\otimes \mr{End}(\wh{S}\F^*)
\end{equation}
be the partition function of the topological quantum mechanics. We denote its zero-form part along $T$ by $Z_T=e^{-TH}$.

\begin{thm}
\label{thm: interacting SDR via TQM}
    SDR data (\ref{SDR for interacting quantum theory}) can be expressed as follows in terms of the TQM partition function (\ref{Z^tau}):\footnote{
While working on the paper, we realized that independently the $\hbar\ra 0$ limit of the formulae of this theorem were obtained by Ezra Getzler, \cite{Getzler_talk}.
}
    \begin{eqnarray}
        i_\ii&=& \lim_{T\ra \infty} Z_T\circ i,\label{thm 5.3 i}\\
        p_\ii &=& \lim_{T\ra\infty} p\circ Z_T,\label{thm 5.3 p}\\
        K_\ii&=& \int_0^\infty Z_{T,\dr T}. \label{thm 5.3 K}
    \end{eqnarray}
    Also, one has
    \begin{equation}
        \lim_{T\ra\infty} Z_T = i_\ii\circ p_\ii \label{thm 5.3 Zinfty}
    \end{equation}
\end{thm}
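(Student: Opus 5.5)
The approach is to view $H=[Q_\hbar,\wh\kappa]$ as a perturbation of the ``free'' Hamiltonian $H_0=\LL_E$. I expand $e^{-TH}$ by the Dyson/Duhamel series in the perturbation $\delta H=-\iii\hbar\eta+[Q_\ii,\wh\kappa]$, and then compare term by term with the homological perturbation series (\ref{HPL tilde ipK}) applied to the classical free SDR (\ref{SDR of classical free BV theory}) with $\delta=-\iii\hbar\Delta+Q_\ii$. It is convenient to do this in one step rather than passing through the intermediate SDR (\ref{ihbar phbar Khbar}). That is legitimate because HPL is compatible with composition of perturbations: perturbing by $-\iii\hbar\Delta$ and then by $Q_\ii$ gives the same data as perturbing by their sum.

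\textbf{Step 1: the unperturbed case.} On a monomial of degree $N$ in $x''$, $e^{-T\LL_E}$ acts as $e^{-TN}$. Hence $\lim_{T\to\infty}e^{-T\LL_E}=P_0$, the projection onto functions of $x'$ only, which equals $i p$. Also
\[
\int_0^\infty \dr T\,\wh\kappa e^{-T\LL_E}=\wh\kappa\nu=K
\]
by (\ref{Ksym=nu whK}). This is exactly Remark \ref{rem: TQM} with $G=K$.

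\textbf{Step 2: the perturbed case.} Set $\delta=Q_\hbar-Q_0$, so that $H=H_0+[\delta,\wh\kappa]$. The key algebraic identities are
\[
K H_0=\wh\kappa\nu\LL_E=\wh\kappa(1-P_0),\qquad \wh\kappa P_0=0,
\]
so $K$ acts as $\wh\kappa H_0^{-1}$ on the complement of $\ker H_0$. I then expand $e^{-TH}$ in the Duhamel series
\[
e^{-TH}=\sum_n(-1)^n\int_{0<t_1<\dots<t_n<T} e^{-(T-t_n)H_0}\,\delta H\cdots\delta H\, e^{-t_1H_0},
\]
and take $T\to\infty$ after applying $i$ on the right. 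Since $i$ lands in $\ker H_0$, the innermost factor is trivial. Each intermediate time integral $\int_0^\infty e^{-sH_0}\,\dr s$ acts as $H_0^{-1}$ on the image of $1-P_0$. The terms where an intermediate factor lands in $\ker H_0$ must be shown to give zero or to cancel.

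\textbf{Step 3: the main obstacle.} Write $\delta H=\delta\wh\kappa+\wh\kappa\delta$. I must show that, after simplification using $\wh\kappa^2=0$, $\wh\kappa i=0$, $\delta^2+[Q_0,\delta]=0$ and $[Q_0,\wh\kappa]=H_0$, the resulting words $\cdots H_0^{-1}\delta H H_0^{-1}\delta H\,i$ collapse to $\sum(-K\delta)^k i$. My first attempt is to use $\wh\kappa i=0$ and $\wh\kappa^2=0$ to kill the $\delta\wh\kappa$ pieces adjacent to $\wh\kappa$, leaving alternating $\wh\kappa H_0^{-1}\delta$ strings, which are exactly $K\delta$. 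The delicate part is controlling terms with a $\ker H_0$ intermediate state, where $H_0^{-1}$ is replaced by the non-decaying time integral.

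A cleaner alternative, which I would try if the combinatorics stalls, avoids comparing series directly. Since $Z_T$ satisfies $(\dr_T+[Q_\hbar,-])Z_{T,\dr T}=0$, the operators
\[
\tilde i=\lim_{T\to\infty} Z_T\, i,\qquad \tilde p=\lim_{T\to\infty} p\, Z_T,\qquad \tilde K=\int_0^\infty Z_{T,\dr T}
\]
automatically satisfy $[Q_\hbar,\tilde K]=\mr{id}-\lim_{T\to\infty} Z_T$. Convergence follows because $H$ is a perturbation of $H_0$ that is filtered by $\hbar$ and polynomial degree, so the spectrum of $H$ equals the spectrum of $H_0$ on the associated graded, namely $\ZZ_{\ge0}$. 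In this alternative route, the remaining claims to establish are: $\tilde i$ and $\tilde p$ are chain maps; $\lim_{T\to\infty}Z_T=\tilde i\tilde p$, which is the spectral projector onto $\ker H$; and the side conditions hold. For the side conditions, $\tilde K^2=0$ because $\wh\kappa^2=0$ and $[\wh\kappa,H]=0$, and $\tilde K\tilde i=0$ because $\wh\kappa$ kills $\ker H$ (to be checked). Finally, I would invoke uniqueness. The HPL output is characterized as the unique SDR deforming the free one whose homotopy is built from $\wh\kappa$, that is, whose $\tilde K$ has image in $\mr{im}\,\wh\kappa$ and satisfies the side conditions, with $\tilde p$ fixed by $\tilde p\,\tilde i=\mr{id}$ together with $\tilde p\tilde K=0$. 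Under this characterization, the TQM data must coincide with $(i_\ii,p_\ii,K_\ii)$. Proving this uniqueness statement order by order in the filtration is the step I expect to require the most care.
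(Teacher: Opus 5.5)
Your Steps 1--3 follow the paper's proof: expand $Z_T=e^{-TH}$ around $\LL_E$ with perturbation $[\delta,\wh\kappa]$, where $\delta=Q_\ii-\iii\hbar\Delta$, and keep only one half of each commutator. However, the proposal stops exactly where that argument closes, and you set it up only for $i_\ii$.

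The ``delicate'' intermediate $\ker H_0$ states never arise. The missing observation has two parts. First, $[\wh\kappa,\LL_E]=0$; this follows from $\wh\kappa^2=0$ and $\LL_E=[Q_0,\wh\kappa]$ by the graded Jacobi identity, or directly because $\kappa^\vee$ kills $\F'^*$ and maps $\F''^*$ to itself. Second, $\mr{im}\,\wh\kappa$ consists of functions of $x''$-degree $\geq 1$. Hence $\delta\wh\kappa\,e^{-s\LL_E}\wh\kappa=0$, so with $i$ on the right every commutator collapses to $\wh\kappa\delta$. Every free evolution other than the one acting directly on $i$ then stands immediately to the left of a $\wh\kappa$, so its time integral converges to $\LL_E^{-1}\wh\kappa=\nu\wh\kappa=K$. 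The gap adjacent to $i$, where $e^{-t\LL_E}i=i$, merely absorbs the constraint $\sum_j t_j=T$.

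You also need the mirror arguments, which are absent:
\begin{itemize}
\item With $p$ on the left ($p\wh\kappa=0$, $pe^{-t\LL_E}=p$), only $\delta\wh\kappa$ survives, giving $p(-\delta K)^k$.
\item With $\wh\kappa$ on the left and $T$ integrated, all gaps are free, giving $K(-\delta K)^k$.
\item For $\lim_{T\to\infty}Z_T=i_\ii p_\ii$, you must show that configurations with two long gaps contribute nothing. Between two factors $ip$ one gets $p[\delta,\wh\kappa]e^{-s_1\LL_E}\cdots[\delta,\wh\kappa]i$; the left end forces $\delta\wh\kappa$ and the right end forces $\wh\kappa\delta$, so it vanishes. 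The integrand then factorizes into the $i_\ii$ series composed with the $p_\ii$ series.
\end{itemize}

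Your fallback route does not close either, because the uniqueness statement it rests on is false as formulated. The conditions $\mr{im}\,\tilde K\subseteq\mr{im}\,\wh\kappa$, the side conditions, $\tilde p\tilde i=\mr{id}$ and $\tilde p\tilde K=0$ do not fix $\mr{im}\,\tilde i$ inside $\ker\tilde d$. A counterexample already exists with $\delta=0$. Take $V=\langle a,b,c\rangle$ with $db=c$ and $V'=\langle a'\rangle$. The data $i(a')=a$, $K(c)=b$, $p=$ coefficient of $a$ is the HPL output. But $i_2(a')=a+\hbar c$, $K_2(c)=b$, $K_2(a)=-\hbar b$, $p_2=p$ is another SDR deforming it, with the same image of the homotopy, and it satisfies all your conditions.

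What does characterize the HPL data are the relative side conditions with respect to the unperturbed SDR: $K\tilde i=0$, $p\tilde i=\mr{id}$, $\tilde pK=0$, $\tilde p i=\mr{id}$, $\tilde KK=0$, $\tilde Ki=0$. Order by order, using $KdK=K$, these force $\tilde i_n=-K\delta\,\tilde i_{n-1}$, $\tilde p_n=-\tilde p_{n-1}\delta K$ and $\tilde K_n=-\tilde K_{n-1}\delta K$. The TQM operators do satisfy these conditions, using $[\wh\kappa,H]=0$, $H^n=(Q_\hbar\wh\kappa)^n+(\wh\kappa Q_\hbar)^n$, $\wh\kappa i=0$ and $p\wh\kappa=0$. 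So this route can be repaired, but you would still have to prove $\lim_{T\to\infty}Z_T=\tilde i\tilde p$ and that $\wh\kappa$ vanishes on $\ker H$, both of which you left open.
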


\begin{proof}
Viewing the hamiltonian $H$ as a perturbation of $\mc{L}_E$ by $[Q_\ii-\iii\hbar\Delta,\wh\kappa]$, 
we have
\begin{multline}\label{thm 5.3 proof eq1}
Z_T=e^{-T (\mc{L}_E+[Q_\ii-\iii\hbar\Delta,\wh\kappa])}
\\
=\sum_{k\geq 0}(-1)^k\int_{t_0,\ldots,t_k>0, t_0+\cdots+t_k=T}\dr t_1\ldots \dr t_k
e^{-t_0 \LL_E} [Q_\ii-\iii\hbar\Delta,\wh\kappa] e^{-t_1 \LL_E} [Q_\ii-\iii\hbar\Delta,\wh\kappa] \cdots \\
\cdots [Q_\ii-\iii\hbar\Delta,\wh\kappa] e^{-t_k \LL_E}
\end{multline}
Note that
\begin{eqnarray}
    \lim_{t\ra \infty} e^{-t \LL_E}= i\circ p, \label{thm 5.3 proof eq2}\\
    e^{-t \LL_E} \circ i = i, \label{thm 5.3 proof eq3}\\
    p\circ e^{-t \LL_E} = p. \label{thm 5.3 proof eq4}
\end{eqnarray}
Precomposing (\ref{thm 5.3 proof eq1}) with $i$ and taking $T\ra\infty$ we have
\begin{multline}
    \lim_{T\ra \infty}Z_T \circ i=\\=
    \sum_{k\geq 0}(-1)^k \int_0^\infty\cdots \int_0^\infty \dr t_0\cdots \dr t_{k-1}
    e^{-t_0 \LL_E} [Q_\ii-\iii\hbar\Delta,\wh\kappa] \cdots e^{-t_{k-1}\LL_E} [Q_\ii-\iii\hbar\Delta,\wh\kappa] i\\
    =\sum_{k\geq 0}(-1)^k \int_0^\infty\cdots \int_0^\infty \dr t_0\cdots \dr t_{k-1}
    e^{-t_0 \LL_E} \wh\kappa(Q_\ii-\iii\hbar\Delta) \cdots e^{-t_{k-1}\LL_E} \wh\kappa(Q_\ii-\iii\hbar\Delta) i\\
    =\sum_{k\geq 0} (-1)^k(\underbrace{\LL_E^{-1}\wh\kappa}_K (Q_\ii-\iii\hbar\Delta))^k i =
    i_\ii.
\end{multline}
This recovers the homological perturbation formula for $i_\ii$ and proves (\ref{thm 5.3 i}). Here we used the formula (\ref{Ksym L_E formula}) for $K$; $\LL_E^{-1}\wh\kappa$ is a shorthand for $(\LL_E+c\, ip)^{-1}\wh\kappa$ with arbitrary $c>0$.

Likewise, postcomposing (\ref{thm 5.3 proof eq1}) with $p$ and taking $T\ra\infty$, we get
\begin{multline}
    \lim_{T\ra\infty} p\circ Z_T =\\
    =\sum_{k\geq 0}(-1)^k \int_0^\infty \cdots \int_0^\infty \dr t_1\cdots \dr t_k
    p[Q_\ii-\iii\hbar\Delta,\wh\kappa] e^{-t_1 \LL_E}
    \cdots [Q_\ii-\iii\hbar\Delta,\wh\kappa] e^{-t_k\LL_E} \\
    =\sum_{k\geq 0}(-1)^k \int_0^\infty \cdots \int_0^\infty \dr t_1\cdots \dr t_k
    p(Q_\ii-\iii\hbar\Delta)\wh\kappa e^{-t_1 \LL_E}
    \cdots (Q_\ii-\iii\hbar\Delta)\wh\kappa e^{-t_k\LL_E}\\
    =\sum_{k\geq 0}(-1)^k p((Q_\ii-\iii\hbar\Delta)\underbrace{\wh\kappa \LL_E^{-1}}_K)^k = p_\ii.
\end{multline}
This recovers the homological perturbation formula for $p_\ii$, proving (\ref{thm 5.3 p}).

Note that in the limit $T\ra\infty$, the integral in (\ref{thm 5.3 proof eq1}) precomposed with $i$ is supported in the region $t_k\ra\infty$, $t_0,\ldots,t_{k-1}$ finite. If instead we postcompose with $p$, the integral is supported at $t_0\ra\infty$, $t_1,\ldots,t_k$ finite.

Next, composing (\ref{thm 5.3 proof eq1}) with $\wh\kappa$ and integrating over $T$, we have
\begin{multline}
    \int_0^\infty Z_{T,\dr T}= \int_0^\infty \dr T\, \wh\kappa Z_T\\
    =\sum_{k\geq 0} (-1)^k\int_{t_0,\ldots,t_k>0} \dr t_0\cdots \dr t_k \wh\kappa e^{-t_0\LL_E} [Q_\ii-\iii\hbar\Delta,\wh\kappa] e^{-t_1 \LL_E} \cdots 
    [Q_\ii-\iii\hbar\Delta,\wh\kappa] e^{-t_k \LL_E}\\
    =\sum_{k\geq 0} (-1)^k\int_{t_0,\ldots,t_k>0} \dr t_0\cdots \dr t_k \wh\kappa e^{-t_0\LL_E} (Q_\ii-\iii\hbar\Delta)\wh\kappa e^{-t_1 \LL_E} \cdots 
    (Q_\ii-\iii\hbar\Delta)\wh\kappa e^{-t_k \LL_E}\\
    =\sum_{k\geq 0}(-1)^k\underbrace{\wh\kappa \LL_E^{-1}}_K  ((Q_\ii-\iii\hbar\Delta)\underbrace{\wh\kappa \LL_E^{-1}}_K)^k= K_\ii.
\end{multline}
This recovers the homological perturbation formula for $K_\ii$, proving (\ref{thm 5.3 K}).

To prove (\ref{thm 5.3 Zinfty}), note that in the limit $T\ra\infty$, in the integral (\ref{thm 5.3 proof eq1}) one $t_i$ must be very large. The corresponding factor in the integrand  $e^{-t_i \LL_E}\ra i\circ p$ splits the the integrand into a composition of the integrands for $i_\ii$ and for $p_\ii$, see Figure \ref{fig:Z at T to infinity}.
\end{proof}
\begin{figure}
    \centering
    \includegraphics[width=0.8\linewidth]{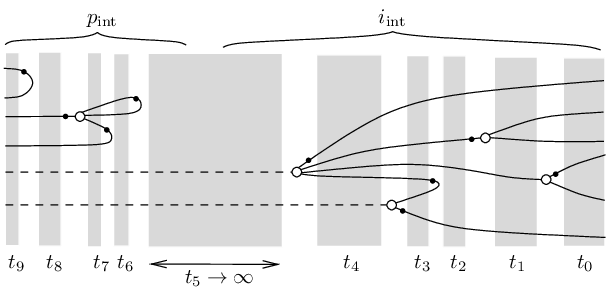}
    \caption{Graphical representation for a term in $Z_{T\ra\infty}$. Grey boxes are $e^{-t_i \LL_E}$, with $t_i$ the width of the box. Between the boxes, terms of $[\wh\kappa,Q_\ii]$ or $\eta$ are applied.}
    \label{fig:Z at T to infinity}
\end{figure}

%[PICTURE - DIAGRAM FOR $Z_{T\ra\infty}$?]

\begin{defn}
    Let us call a linear map $\bar\kappa\colon \colon \F^\bt\ra \F^{\bt-1}$ a ``non-normalized chain homotopy'' (or ``non-normalized gauge-fixing operator'') compatible with the SDR (\ref{SDR F to F'}) if
\begin{itemize}
    \item $\bar\kappa$ induces an isomorphism $\mr{im}(d)\ra \mr{im}(\kappa)$,
    \item $\bar\kappa$ vanishes on $\F'\oplus \mr{im}(\kappa)$,
    \item $[\bar\kappa,d]$ is a non-negative operator (with kernel $\F'$, as follows from the points above).
\end{itemize}
\end{defn}

\begin{rem}[Generalization to non-normalized gauge-fixing operators]
\label{rem: non-normalized chain homotopy}
 In the construction of this subsection, instead of choosing $G=\wh\kappa$ for TQM, one can take $G=\wh{\bar\kappa}$ -- an extension to a derivation of $\wh{S}\F^*$ of the dual of a non-normalized chain homotopy $\bar\kappa\colon \F^\bt\ra \F^{\bt-1}$ compatible with the SDR (\ref{SDR F to F'}). Then for the topological quantum mechanics (\ref{Z^tau}) with $G=\wh{\bar\kappa}$ and the Hamiltonian $H=[Q_\hbar,\wh{\bar\kappa}]$, the statement of Theorem \ref{thm: interacting SDR via TQM} still holds. 
 
The reason is that one can arrange the formulae in the proof of Theorem \ref{thm: interacting SDR via TQM} as involving $\wh\kappa$ and only in the combination $[Q_0,\wh\kappa]^{-1}\wh\kappa $ and one has
$[Q_0,\wh\kappa]^{-1}\wh\kappa =[Q_0,\wh{\bar\kappa}]^{-1}\wh{\bar\kappa}$.
\end{rem}

% \subsection{Classical limit of $i_\ii,p_\ii$}
% {\color{gray} $i_\ii,p_\ii\bmod\hbar\ra $  $L_\infty$-morphisms}

% Let us describe the classical limit (reduction modulo $\hbar$) of the SDR (\ref{SDR for interacting quantum theory}) -- let us denote this reduction by
% \begin{equation}
%     \SDR{(\wh{S}\F^*,Q)}{(\wh{S}\F'^*,Q')}{i_\ii^\cl}{p_\ii^\cl}{K_\ii^\cl}
% \end{equation}
% with $Q=\{S,-\}=Q_0+Q_\ii$,  $Q'=\{S',-\}'$.

% Consider the reduction modulo $\hbar$ of the TQM Hamiltonian (\ref{Hhat}),
% \begin{equation}
%     H^\cl=[\wh\kappa, Q]=
% \end{equation}

\begin{expl}
    In the example of toy interacting scalar field (Section \ref{sss example: toy interacting quantum scalar}) with action $S(x)=\frac{x^2}{2}+P(x)$, we have
    \begin{equation}
        \wh{H}=\Big[\underbrace{\xi \frac{\dd}{\dd x}}_{\wh\kappa}, \underbrace{S'(x)\frac{\dd}{\dd \xi}-\iii\hbar \frac{\dd^2}{\dd x\dd \xi}}_{Q_\hbar}\Big]=
        S'(x)\frac{\dd}{\dd x}+S''(x) \xi \frac{\dd}{\dd \xi}-\iii\hbar \frac{\dd^2}{\dd x^2},
    \end{equation}
    and the TQM formula (\ref{thm 5.3 K}) for the chain homotopy becomes
    % \marginpar{I don't know how useful this formula is...}
    \begin{equation}
        K_\ii(O)=\xi \int_0^\infty \dr T \frac{\dd}{\dd x} e^{-T\left(S'(x)\frac{\dd}{\dd x}-\iii\hbar \frac{\dd^2}{\dd x^2}\right)} O(x).
    \end{equation}
\end{expl}

\subsection{Proof of Theorem \ref{thm: Q' and p_int via BV pushforward} via TQM}\label{ss: TQM proof of thm on Q'_hbar and p_int}
We give a second proof of Theorem \ref{thm: Q' and p_int via BV pushforward}, relying on the topological quantum mechanics picture above.

\textbf{(i)$\Leftarrow$(ii).} First, remark that if (ii) of Theorem \ref{thm: Q' and p_int via BV pushforward} is known, i.e., $p_\ii$ is the BV pushforward $P_*$ of observables (\ref{p_int}), then (i) of the Theorem follows automatically, by the following argument. We know by Remark \ref{rem: SDR for interacting theory from conjugation} that $P_*=p_\ii^\mr{conj}$ intertwines the BV differential $Q_\hbar$ on the full BV complex with the BV differential $Q'_\hbar$ on the infrared BV complex. On the other hand, by homological perturbation lemma, $p_\ii$ intertwines $Q_\hbar$ with some induced differential ${Q'_\hbar}^{\mr{HPL}}$. But, by surjectivity of $p_\ii$, it can only intertwine $Q_\hbar$ with a unique differential on $\wh{S}\F'^*[[\hbar]]$. Hence, $Q'_\hbar={Q'_\hbar}^{\mr{HPL}}$.

Now we proceed to prove (ii) of Theorem \ref{thm: Q' and p_int via BV pushforward}. We will first consider the case $\F'=0$ (we will generalize afterwards).
%and then generalize to $\F'$ possibly nonzero.

\textbf{Proof of (ii) of Theorem \ref{thm: Q' and p_int via BV pushforward} 
in the case $\F'=0$.}
From Theorem \ref{thm: interacting SDR via TQM} we know that, for $O\in \fun$, 
\begin{equation}\label{p_int O as projection to ker whH}
p_\ii O=\lim_{T\ra\infty} p e^{-TH}O=P_{\ker H}O
\end{equation}
-- the projection onto the kernel of $H$, given by constants (for $\F'=0$), along $\mr{im}H$.

To anticipate the next step: 
%Idea of the argument for $\F'=0$:
if we manage to find a pairing on $\fun$ with respect to which $\mr{ker}H$ and $\mr{im}H$ are orthogonal (and $1$ has norm $1$), we can construct the desired projection onto $\ker H$ as pairing with $1$.

Consider the following pairing on observables
\begin{equation}\label{pairing <<,>>}
    \ll,\gg \colon \wh{S}\F^*\otimes \wh{S}\F^* \ra \CC[[\hbar]],\quad
    \ll O_1,O_2\gg = c \int_{\LL} e^{\frac{\iii}{\hbar}S}O_1 O_2
\end{equation}
with $c=(\int_\LL e^{\frac{\iii}{\hbar}S})^{-1}$ the normalization factor, so that $\ll 1,1\gg=1$.

The key property of pairing (\ref{pairing <<,>>}) is that $H$ is self-adjoint with respect to it:
\begin{equation}\label{self-adjointness of whH}
    \ll HO_1,O_2 \gg = \ll O_1,H O_2 \gg.
\end{equation}
To prove this, note that
\begin{equation}
    H=[\wh\kappa,Q_\hbar]= \left[\left\langle \kappa(x),\frac{\dd}{\dd x} \right\rangle, \omega^{-1}\left(\frac{\dd S}{\dd x},\frac{\dd}{\dd x}\right)-\frac{\iii\hbar}{2}\omega^{-1}\left(\frac{\dd}{\dd x},\frac{\dd}{\dd x}\right)\right],
\end{equation}
hence the action of $H$ restricted to $\LL$ is
\begin{equation}
    (HO)|_\LL= h^{ij}\left(\frac{\dd S(x_\kappa)}{\dd x^i_\kappa} \frac{\dd}{\dd x^j_\kappa} - \iii\hbar  \frac{\dd}{\dd x^i_\kappa}\frac{\dd}{\dd x^j_\kappa}\right) O(x_\kappa).
\end{equation}
Here $h=\omega^{-1}\kappa^\vee$ as in (\ref{h}) (recall that $h$ is symmetric); %$x_\kappa$ denotes the projection of $x\in \F$ onto the second summand in $\F=\mr{im}(d)\oplus \mr{im}(\kappa)$; 
$x_\kappa^i$ are coordinates on $\LL=\mr{im}(\kappa)\subset \F$. Then, we have
\begin{multline}\label{self-adjointness of whH eq1}
    \ll HO_1,O_2\gg =\\
    = c\int_\LL Dx_\kappa e^{\frac{\iii}{\hbar}S%(x_\kappa)
    } h^{ij}\left(
    \frac{\dd S%(x_\kappa)
    }{\dd x^i_\kappa} \frac{\dd }{\dd x^j_\kappa}O_1%(x_\kappa)
    -\iii\hbar \frac{\dd}{\dd x_\kappa^i}\frac{\dd}{\dd x_\kappa^j} O_1
    %(x_\kappa)
    \right) O_2%(x_\kappa)
    \\
    =c\int_\LL Dx_\kappa e^{\frac{\iii}{\hbar}S}O_1 h^{ij}\Big(
    -\Big(\underbrace{\frac{\iii}{\hbar}\frac{\dd S}{\dd x^i_\kappa}\frac{\dd S}{\dd x^j_\kappa}}_a+\underbrace{\frac{\dd^2 S}{\dd x^i_\kappa \dd x^j_\kappa}}_b+\underbrace{\frac{\dd S}{\dd x^i_\kappa}\frac{\dd}{\dd x^j_\kappa}}_c\Big)O_2-\\
    -\iii\hbar\Big(
    \underbrace{2\frac{\iii}{\hbar}\frac{\dd S}{\dd x^i_\kappa}\frac{\dd}{\dd x^j_\kappa}}_d+\underbrace{\left(\frac{\iii}{\hbar}\right)^2 \frac{\dd S}{\dd x^i_\kappa}\frac{\dd S}{\dd x^j_\kappa}}_e+\underbrace{\frac{\iii}{\hbar}\frac{\dd^2 S}{\dd x^i_\kappa \dd x^j_\kappa}}_f+\frac{\dd^2}{\dd x^i_\kappa \dd x^j_\kappa}
    \Big) O_2
    \Big).
\end{multline}
Here transition to the third line is integration by parts (once for the first term, twice for the second), transferring the derivatives from $O_1$ to $O_2$. Note that terms $a$ and $e$ cancel, $b$ and $f$ cancel, and $c$ cancels half of $d$, leaving
\begin{equation}\label{self-adjointness of whH eq2}
    \ll HO_1,O_2 \gg = c\int_\LL Dx_\kappa e^{\frac{\iii}{\hbar}S} O_1 h^{ij}\left(\frac{\dd S}{\dd x^i_\kappa} \frac{\dd}{\dd x^j_\kappa}-\iii\hbar \frac{\dd^2}{\dd x^i_\kappa \dd x^j_\kappa}\right)O_2
    = \ll O_1,HO_2 \gg,
\end{equation}
which proves (\ref{self-adjointness of whH}).

Next, given an observable $O\in \fun$, we split it as 
\begin{equation}\label{O=O_kerH+O_imH}
O=O_{\ker H}+O_{\mr{im}H},
\end{equation}
according to the splitting $\fun=\ker H\oplus \mr{im}H$. Here $O_{\ker H}=p_\ii O$ and is a constant, cf. (\ref{p_int O as projection to ker whH}).
We have
\begin{multline}
    \ll O,1 \gg= \ll O_{\ker H},1\gg + \ll H(\cdots),1 \gg=
    p_\ii(O)+ \ll H(\cdots),1 \gg \\
    \underset{(\ref{self-adjointness of whH}}{=}p_\ii(O)+ \ll \cdots, \underbrace{H(1)}_0\gg
    =p_\ii(O).
\end{multline}
The left hand side is the BV integral $c\int_\LL e^{\frac{\iii}{\hbar}S}O$, which proves (ii) of Theorem \ref{thm: Q' and p_int via BV pushforward} in the case $\F'=0$.

\textbf{Generalization of the argument to $\F'$ possibly nonzero.}
We consider the pairing
\begin{equation}
    \ll,\gg\colon \fun \otimes \fun\ra \funprime,\quad
    \ll O_1,O_2\gg=e^{-\frac{\iii}{\hbar}S'}\int_\LL e^{\frac{\iii}{\hbar}S}O_1 O_2.
\end{equation}
With respect to it, self-adjoint of $H$ (\ref{self-adjointness of whH}) still holds (and the proof (\ref{self-adjointness of whH eq1}), (\ref{self-adjointness of whH eq2}) goes through without change, except that now $S,O_1,O_2$ additionally depend on $x'\in \F'$ and the normalization is $c= e^{-\frac{\iii}{\hbar}S'}$).

Consider an observable $O\in\fun$ and split it as in (\ref{O=O_kerH+O_imH}). We have
\begin{multline}\label{computation of <<O,1>>}
    %e^{-\frac{\iii}{\hbar}S'}\int_\LL e^{\frac{\iii}{\hbar} S}O
    \ll O,1 \gg = \ll O_{\ker H}+ O_{\mr{im} H},1\gg =
    \ll O_{\ker H},1 \gg + \underbrace{\ll H(\cdots), 1\gg}_{\underset{(\ref{self-adjointness of whH})}{=}\ll \cdots, H(1) \gg=0}\\
    = \ll O_{\ker H},1 \gg
\end{multline}
Note that, by Theorem \ref{thm: interacting SDR via TQM}, we have
\begin{equation}
    O_{\ker H}=\lim_{T\ra\infty}Z_T O=i_\ii p_\ii O= \lim_{T\ra \infty} Z_T i p_\ii O.
\end{equation}
Using this, we continue (\ref{computation of <<O,1>>}):
\begin{multline}\label{computation of <<O,1>> eq2}
    \ll O,1\gg = \lim_{T\ra\infty} \ll e^{-TH}i p_\ii O,1 \gg \underset{(\ref{self-adjointness of whH})}{=} 
    \lim_{T\ra\infty} \ll i p_\ii O, \underbrace{e^{-TH}1}_1\gg\\=
    \ll i p_\ii O,1 \gg = p_\ii O.
\end{multline}
Here in the last step we use the obvious property $\ll i(O'),1 \gg=O'$. Since the l.h.s. of (\ref{computation of <<O,1>> eq2}) is the BV pushforward of $O$, we have a proof of (ii) of Theorem \ref{thm: Q' and p_int via BV pushforward}.

Together with the argument in the beginning of this subsection ((ii) implies (i)), this concludes the second proof of Theorem \ref{thm: Q' and p_int via BV pushforward}.

\section{Classical limit of $i_\ii,p_\ii$, $K_\ii$}
%{\color{gray} $i_\ii,p_\ii\bmod\hbar\ra $  $L_\infty$-morphisms}
\label{sec: classical limit of i_int, p_int, K_int}

Let us describe the classical limit (reduction modulo $\hbar$) of the SDR (\ref{SDR for interacting quantum theory}) -- let us denote this reduction by
\begin{equation}\label{SDR for classical interacting theory}
    \SDR{(\wh{S}\F^*,Q^\cl)}{(\wh{S}\F'^*,Q'^\cl)}{i_\ii^\cl}{p_\ii^\cl}{K_\ii^\cl}
\end{equation}
with $Q^\cl=\{S^\cl,-\}=Q_0+Q^\cl_\ii$,  $Q'^\cl=\{S'^\cl,-\}'$. The superscript $\cl$ means ``modulo $\hbar$,'' i.e., $S^\cl=S\bmod\hbar$ etc.
%\marginpar{Edit: we are taking $S,S'$ mod $\hbar$ here. (Can assume that $S$ does not depend on $\hbar$, but $S'$ still might..)}
The maps $i^\cl_\ii, p^\cl_\ii,K^\cl_\ii$ are given by the homological perturbation formulae (\ref{HPL tilde ipK}) with $i,p,K$ of the free classical theory (\ref{SDR of classical free BV theory}) and with  $\delta=Q_\ii^\cl$ the perturbation of the differential.

Consider the reduction modulo $\hbar$ of the TQM Hamiltonian (\ref{Hhat}),
\begin{multline}\label{Hhat classical}
    H^\cl=[\wh\kappa, Q^\cl]= \left[ \left\langle\kappa(x),\frac{\dd}{\dd x} \right\rangle, \omega^{-1}\left(\frac{\dd S^\cl}{\dd x},\frac{\dd}{\dd x}\right)\right]
    \\
    = \underbrace{\omega^{-1}\left(\frac{\dd S^\cl}{\dd x},\kappa^\vee \frac{\dd }{\dd x}\right)}_{H^\cl_1} + \underbrace{\omega^{-1}\left(\left\langle\kappa(x),\frac{\dd}{\dd x} \right\rangle \frac{\dd S^\cl}{\dd x},\frac{\dd}{\dd x} \right)}_{H^\cl_2}
    \\
    = h^{ij}\frac{\dd S^\cl}{\dd x^i_\kappa}\frac{\dd}{\dd x^j_\kappa}+\kappa^i_{\; j} x^j_d (\omega^{-1})^{IJ}\frac{\dd^2 S^\cl}{\dd x^i_\kappa\dd x^I}\frac{\dd}{\dd x^J}.
\end{multline}
Here $x^i_d$ are coordinates on $\mr{im}(d'')$ and $x^I$ are coordinates on the entire $\F=\F'\oplus \mr{im}(\kappa)\oplus \mr{im} (d'')$.

Note that $H^\cl$ is a derivation of $\wh{S}\F^*$ or, equivalently, a formal vector field on $\F$ (or: a vector field on a formal neighborhood of zero in $\F$; we denote the latter $\F^\formal$).\footnote{
In fact, if $S$ has positive convergence radius in $\F$, one can replace formal neighborhoods of zero in this subsection with finite open neighborhoods of zero.
}

Let $\Cr\colon \F'^\formal\ra \LL^\formal$ be the formal nonlinear map sending $x'$ to the solution $x_\kappa\in \LL$ of the equation
\begin{equation}\label{conditional crit point eq}
    \frac{\dd S^\cl(x'+x_\kappa)}{\dd x_\kappa}=0.
\end{equation}
Note that in the formal setting such a solution exists and is unique (i.e., for $x'$ infinitesimal there exists a unique infinitesimal $x_\kappa$). Also note that (\ref{conditional crit point eq}) means that $x'+x_\kappa$ is a critical point of $S^\cl$ restricted to the affine space $x'+\LL$. Denote
\begin{equation}
    \mc{Z}\colon=\mr{graph}(\Cr)=\{x'+x_\kappa(x')\;|\; (\ref{conditional crit point eq})\;\mr{holds}\}\subset (\F'\oplus\LL)^\formal.
\end{equation}
By the remark above, the projection $\pi$ onto $\F'$ restricted to $\mc{Z}$ yields a (nonlinear) isomorphism 
\begin{equation}\label{pi|_Z}
    \pi|_{\mc{Z}}\colon \mc{Z} \xra{\sim} \F'^\formal.
\end{equation}
%$\mc{Z}$ and $\F'^\formal$.

Properties of the vector field $H^\cl$:
\begin{enumerate}[(i)]
    \item Locus of zeros (fixed points) of $H^\cl$ is $\mc{Z}$. This zero locus is attracting for the flow of $-H^\cl$.\footnote{Thus, $-H^\cl$ behaves like the gradient vector field of a  Morse-Bott function with critical locus $\mc{Z}$ of index zero.}
    \item $H^\cl$ is tangent to $x'+\LL$ for any $x'\in \F'^\formal$. In the restriction of $H^\cl$ to $x'+\LL$, only the term $H^\cl_1$ in (\ref{Hhat classical}) survives, whereas $H^\cl_2$ vanishes.
\end{enumerate}

%\marginpar{\bl May 16: moved the defn of $\Phi_T$ here}
Denote 
\begin{equation}
    \Phi_T=\mr{Flow}_T(-H^\cl)\colon \F^\formal \ra \F^\formal
\end{equation}
the flow of the vector field $-H^\cl$ in time $T\in \RR$.

\begin{expl}
    In free theory, $S=S_0=\frac12\omega(x,dx)$, we have $H^\cl=\LL_E=\langle x'',\frac{\dd}{x''} \rangle$ the Euler vector field. Its zero locus is $\mc{Z}=\F'$; it is repulsive for $\LL_E$ and attractive for $-\LL_E$. The flow $\Phi_T$ maps $x=x'+x''$ to $x'+e^{-T}x''$.
\end{expl}
\begin{expl}\label{example: Chern-Simons H^cl in Lorenz gauge}
    Consider Chern--Simons theory on a 3-manifold $M$ with structure quadratic Lie algebra $\g$, with action 
    \begin{equation}
    S=\int_M \frac12 \langle \mc{A},\dr \mc{A} \rangle+\frac16 \langle \mc{A},[\mc{A},\mc{A}] \rangle,
    \end{equation}
    with $\mc{A}\in \Omega^\bt(M,\g)[1]=\F$ the AKSZ superfield. Equipping $M$ with a metric $g$ we have a Hodge decomposition
    $\Omega^\bt=\mr{Harm}\oplus \mr{im}(\dr)\oplus \mr{im}(\dr^*)$ into harmonic, exact and coexact forms. It induces a splitting $\F=\F'\oplus \F''$ with $\F'=\mr{Harm}\otimes \g[1]\simeq H^\bt(M,\g)[1]$ and $\F''$ given by exact and coexact forms. For $\LL$ we choose the coexact forms. In this setting, if we use non-normalized chain homotopy $\bar\kappa=\dr^*$ (cf. Remark \ref{rem: non-normalized chain homotopy}), the vector field (\ref{Hhat classical}) becomes 
    \begin{equation}
        H^\cl=\langle \underbrace{\Delta \mc{A}}_{\mr{free\;part}}+\underbrace{\dr^*\frac12[\mc{A},\mc{A}]+[\mc{A},\dr^*\mc{A}]}_\mr{perturbation},\frac{\delta}{\delta\mc{A}} \rangle
        =
       \underbrace{\langle \dr^* F_\mc{A},\frac{\delta}{\delta\mc{A}} \rangle}_{H^\cl_1}+
        \underbrace{\langle \dr_\mc{A} \dr^*\mc{A},\frac{\delta}{\delta \mc{A}} \rangle}_{H^\cl_2},
    \end{equation}
    with $F_\mc{A}=\dr\mc{A}+\frac12[\mc{A},\mc{A}]$ and $\dr_\mc{A}=\dr+[\mc{A},-]$. 
    The zero-locus of $H^\cl$ restricted to ghost number zero is
    \begin{equation}
        \mc{Z}|_{\mr{gh}=0}=\{A\in \Omega^1(M,\g)\;|\; \dr^*F_A=0,\; \dr^*A=0\}.
    \end{equation}
    %-- connections satisfying the Yang-Mills equation and the Lorenz gauge condition. \marginpar{Appearance of Yang-Mills equation here is kind of surprising..}
    %\marginpar{\bl May 11 edit}
    Note that in the abelian case $\g=\mathbb{R}$, this is the space of connections satisfying the abelian Yang--Mills %Maxwell's 
    equation $\dr^* \dr A=0$ and the Lorenz gauge condition.

    %\marginpar{\bl May16: nonlinear heat eqn added}
    The flow $\Phi_T$ maps a connection 1-form $A$ to a time-dependent connection $A_T$ -- a solution of a nonlinear heat equation
    \begin{equation}\label{nonlinear heat eq on A_T}
        \frac{\partial}{\partial T} A_T = - \dr^* F_{A_T}-\dr_{A_T} \dr^* {A_T}
    \end{equation}
    with initial condition $A_{T=0}=A$. In the abelian case $\g=\RR$, equation (\ref{nonlinear heat eq on A_T}) becomes the linear heat equation on a 1-form, $\partial_T A_T=-\Delta A_T$.
\end{expl}

% Denote 
% \begin{equation}
%     \Phi_T=\mr{Flow}_T(-H^\cl)\colon \F^\formal \ra \F^\formal
% \end{equation}
% the flow of the vector field $-H^\cl$ in time $T\in \RR$.

The following is an immediate consequence of Theorem \ref{thm: interacting SDR via TQM} considered modulo $\hbar$.
\begin{cor}\!\!\!\footnote{A version of this statement appeared, independently of our work, in \cite{Getzler_talk}. Also, the statement that $i_\ii^\cl$, defined by the homological perturbation series $\sum_{n\geq 0} (- K Q^\cl_\ii)^n i$, is a morphism of commutative dg algebras (and so defines an $L_\infty$ morphism (\ref{pi^cl_int L_infty morphism}) was proven in \cite{Berglund}.}
\label{cor: class limit of interacting ipK via flow of H^cl}
    In the SDR (\ref{SDR for classical interacting theory}):
    \begin{enumerate}[(a)]
        \item $p_\ii^\cl\colon (\wh{S}\F^*,Q^\cl,\cdot)\ra (\wh{S} \F'^*,Q'^\cl,\cdot)$ is a morphism of dg commutative algebras. It is the pullback by the nonlinear formal map
        \begin{equation}\label{iota_int^cl}
            \iota_\ii^\cl\colon \F'^\formal\ra \F^\formal,\quad x'\mapsto \lim_{T\ra \infty} %\mr{Flow}_T(-H^\cl)
            \Phi_T(x')=(\pi|_\mc{Z})^{-1}x'=x'+\Cr(x'),
        \end{equation}
        with $\pi|_\mc{Z}$ as in (\ref{pi|_Z}). Thus, $\iota_\ii^\mr{cl}$ maps $x'$ to the critical point of $S^\cl$ restricted to $x'+\LL$.\footnote{
        By properties of $H^\cl$, $\Phi_T$ moves $x'$ along $x'+\LL$, converging as $T\ra\infty$ to the fiberwise critical point of $S$, $x'+\Cr(x')$. %{\color{blue} REDUNDANT FOOTNOTE?}
        }
        \item \label{cor: class limit of interacting ipK via flow of H^cl (b)}
        $i_\ii^\cl \colon (\wh{S}\F'^*,Q'^\cl,\cdot)\ra (\wh{S} \F^*,Q^\cl,\cdot)$ is a morphism of dg commutative algebras. It is the pullback of the nonlinear formal map
        \begin{equation}\label{pi_int^cl}
            \pi_\ii^\cl\colon \F^\formal\ra \F'^\formal,\quad x\ra \lim_{T\ra\infty} \pi \circ %\mr{Flow}_T(-H^\cl)
            \Phi_T (x).
        \end{equation}
        \item The chain homotopy in (\ref{SDR for classical interacting theory}) is
        \begin{equation}
            K_\ii^\cl(O)=\int_0^\infty \dr T\, \wh\kappa\, \Phi_T^* O
        \end{equation}
        for $O\in \wh{S}\F^*$, or, equivalently,
        \begin{equation}
            K_\ii^\cl(O)(x)=\int_0^\infty O\Big((\mr{id}+\dr T\, \kappa)\circ\Phi_T(x)\Big)
        \end{equation}
        for $x\in \F^\formal$.
    \end{enumerate}
\end{cor}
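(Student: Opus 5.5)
Reduce Theorem~\ref{thm: interacting SDR via TQM} modulo $\hbar$. Since $S_\ii$, $\kappa$ and the classical SDR (\ref{SDR of classical free BV theory}) do not involve $\hbar$, and the HPL series have no negative powers of $\hbar$, setting $\hbar=0$ termwise gives
\[
i_\ii^\cl=\lim_{T\ra\infty}e^{-TH^\cl}\circ i,\qquad
p_\ii^\cl=\lim_{T\ra\infty}p\circ e^{-TH^\cl},\qquad
K_\ii^\cl=\int_0^\infty \dr T\,\wh\kappa\, e^{-TH^\cl}.
\]
Here $H^\cl$ from (\ref{Hhat classical}) is a first-order derivation, i.e.\ a formal vector field. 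So $e^{-TH^\cl}$ is an algebra automorphism of $\wh S\F^*$, namely the pullback $\Phi_T^*$ by the flow of $-H^\cl$. The expansion (\ref{thm 5.3 proof eq1}) in powers of the perturbation $[Q^\cl_\ii,\wh\kappa]$ converges in the formal topology because the perturbation raises polynomial degree. Hence $e^{-TH^\cl}=\Phi_T^*$ as formal series.

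For (a) and (b), note that $i=\pi^*$ and $p=\iota^*$ are pullbacks. Therefore $\Phi_T^*\circ\pi^*=(\pi\circ\Phi_T)^*$ and $\iota^*\circ\Phi_T^*=(\Phi_T\circ\iota)^*$, each a pullback by a formal map, hence an algebra morphism. A limit of algebra morphisms, taken coefficientwise in the formal topology, is again an algebra morphism and is the pullback by the limiting map. The chain-map property is already given by HPL. This gives (\ref{pi_int^cl}) for $\pi_\ii^\cl$ and the first equality in (\ref{iota_int^cl}).

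To identify $\lim\Phi_T(x')$, use the listed properties of $H^\cl$. First, $-H^\cl$ is tangent to $x'+\LL$, and there it reduces to $-H_1^\cl=-h^{ij}\dd_i S^\cl\,\dd_j$. This is a gradient-like field for $S^\cl|_{x'+\LL}$ with respect to the nondegenerate form $h$ on $\LL$. Its zeros on the slice are the solutions of (\ref{conditional crit point eq}), and they attract. So the trajectory starting at $x'$ stays in $x'+\LL$ and converges to $x'+\Cr(x')=(\pi|_\mc{Z})^{-1}(x')$.

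This is the step I expect to need the most care. Formally, I would establish convergence degree by degree. Linearizing around the attracting point gives the Euler field $\LL_E$, whose flow is $e^{-T}$ on $\F''$. So in each polynomial degree the coefficients of $\Phi_T$ are finite sums of terms $T^ke^{-mT}$ with $m\ge1$, plus a $T$-independent part. This follows from the Duhamel expansion (\ref{thm 5.3 proof eq1}) together with (\ref{thm 5.3 proof eq2})--(\ref{thm 5.3 proof eq4}). Hence the limit exists. The limiting point is fixed by the flow, so it is a zero of $H^\cl$ lying in $x'+\LL$, and by uniqueness of the formal solution of (\ref{conditional crit point eq}) it must be $x'+\Cr(x')$.

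For (c), use $Z_{T,\dr T}=e^{-TH}+\dr T\,\wh\kappa e^{-TH}$, which mod $\hbar$ becomes $\Phi_T^*+\dr T\,\wh\kappa\Phi_T^*$. Since $\wh\kappa$ is the derivation extending $\kappa^\vee$, we have for $\dr T$ odd, squaring to zero,
\[
\Phi_T^*O+\dr T\,\wh\kappa\Phi_T^*O=\Phi_T^*O\bigl((\mr{id}+\dr T\,\kappa)\,\cdot\,\bigr).
\]
This is the first-order Taylor expansion of $O\circ(\mr{id}+\dr T\,\kappa)$. Integrating over $T\in[0,\infty)$ picks out the $\dr T$ component, which gives both displayed formulas for $K_\ii^\cl$.
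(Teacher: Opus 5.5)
Your proposal is correct and follows the paper's own argument: reduce Theorem~\ref{thm: interacting SDR via TQM} modulo $\hbar$, identify $e^{-TH^\cl}$ with $\Phi_T^*$, use functoriality of pullbacks, and locate $\lim_{T\ra\infty}\Phi_T(x')$ using that $H^\cl$ is tangent to $x'+\LL$ and that its zeros there solve the conditional critical point equation. The one step you leave implicit is in (c): pulling back the Taylor expansion of $O\circ(\mr{id}+\dr T\,\kappa)$ by $\Phi_T$ gives $\Phi_T^*O+\dr T\,\Phi_T^*\wh\kappa O$, and this equals $\Phi_T^*O+\dr T\,\wh\kappa\Phi_T^*O$ only because $[\wh\kappa,H^\cl]=[\wh\kappa^2,Q^\cl]=0$, so $\wh\kappa$ commutes with $e^{-TH^\cl}=\Phi_T^*$ (equivalently, $\Phi_T$ commutes with $\mr{id}+\dr T\,\kappa$); you should state this.
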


\begin{rem}[Language of $L_\infty$ algebras and morphisms]
\label{rem: L_infty language}
    The differential $Q^\cl$ equips $\F[-1]$ with the structure of an $L_\infty$ algebra (and $\omega$ additionally makes it a cyclic $L_\infty$ algebra with pairing of degree $-3$).\footnote{
    One reads off the multilinear operations $l_n\colon \wedge^n \F[-1]\ra \F[-1]$ of degrees $2-n$ from the Taylor expansion of $Q^\cl$: $l_1=d$ the differential and, for $n\geq 2$, $l_n$ is the dual of $Q^\cl_{n+1}$ in the notations of (\ref{Q_int = sum Q_n}).
    } Likewise, $Q'^\cl$ equips $\F'[-1]$ with the structure of an $L_\infty$ algebra (again, cyclic of degree $-3$ by virtue of $\omega'$) -- the homotopy transfer of the $L_\infty$ algebra $(\F[-1],Q^\cl)$ onto $\F'[-1]$, using SDR (\ref{SDR F to F'}). In this language, maps (\ref{iota_int^cl}), (\ref{pi_int^cl}) are $L_\infty$ morphisms,
    \begin{gather}
        \iota_\ii^\cl\colon (\F'[-1],Q'^\cl)\xra{L_\infty\;\mr{mor}} (\F[-1],Q^\cl), \label{iota^cl_int L_infty morphism}\\
        \pi_\ii^\cl \colon (\F[-1],Q^\cl)\xra{L_\infty\;\mr{mor}} (\F'[-1],Q'^\cl). \label{pi^cl_int L_infty morphism}
        \end{gather}
    They provide nonlinear ($L_\infty$) deformations of the original linear chain maps $\iota$, $\pi$ from (\ref{SDR F to F'}).
\end{rem}

% {\color{gray} 
% %To add:  
% %example of 1d BF and $\pi_\ii^\cl$ given by $\log\mr{Hol}$. Also: 
% Are $\pi_\ii^\cl$ and $\iota_\ii^\cl$ still mutually transpose in some sense? Not by the usual $\omega$, I think, but maybe by some up-to-homotopy cyclic structure.. (Getzler was talking about something in that direction in \cite{Getzler_talk}.)}

%\marginpar{added Apr 26}
\begin{expl}
\label{example: H^cl for ab BF+B^2 to ab YM}
    Let us revisit the example of Section \ref{ss: example - lifting ab Wilson loop} -- BV pushforward from abelian $BF+B^2$ theory to abelian Yang--Mills. The vector field $H^\cl$ restricted to fields of ghost degree zero is:
    \begin{equation}
        H^\cl_{\mr{gh}=0}=\int_M (\mr{id}-P_\mr{harm})B_- \frac{\delta}{\delta B_-}+2G P_+ \dr \dr^* B_- \frac{\delta}{\delta B_+} + \lambda 2 G\dr^*B_- \frac{\delta}{\delta A}.
    \end{equation}
    As it is a linear vector field on the space of triples $(A,B_+,B_-)\in \Omega^1\oplus \Omega^2_+\oplus \Omega^2_-$, its flow in time $T$ is easy to compute:
    \begin{multline}
        \Phi_T(A,B_+,B_-)=(A+(1-e^{-T})\lambda 2G \dr^* B_-\;,\\ B_+ +(1-e^{-T}) 2G P_+ \dr \dr^* B_-\;,\;
        e^{-T}(\mr{id}-P_\mr{harm})B_-+P_\mr{harm}B_-).
    \end{multline}
    Therefore, we have
    \begin{multline}\label{ab BF+BB: pi_int}
        \pi_\ii =\lim_{T\ra\infty} \pi\circ\Phi_T \colon (A,B_+,B_-)\mapsto \\
        \mapsto (A+\lambda 2G \dr^* B_-\;,\; B_++2G P_+ \dr \dr^* B_- \;,\; P_\mr{harm}B_-).
    \end{multline}
    Hence, the observable-lifting map, in the classical limit and restricted to $\mr{gh}=0$ fields, reads
    \begin{multline}
        i_\ii^\cl\colon O'(A,B_+,B_{-\mr{harm}}) \mapsto\\
        \mapsto O(A,B_+,B_-)= O'(A+\lambda 2G \dr^* B_-\;,\;B_++2G P_+ \dr \dr^* B_-\;,\;P_\mr{harm}B_-).
    \end{multline}
    Applying this map to the Wilson loop, we recover the result of Proposition \ref{prop: lifting ab Wilson loop} (there are no quantum corrections to the lifting map $i_\ii^\cl$ in this case).

    We remark that instead of using the chain homotopy (\ref{ab BF+BB: kappa}) one can use an equivalent non-normalized chain homotopy $\bar\kappa$:
    \begin{equation}
    %\F_2\colon\qquad 
    \bar\kappa\colon\quad
    \begin{array}{ccccccccccc} 
       \Omega^0 &\xla{\dr^*} &\Omega^1 &\xla{\dr^* }& \Omega^2_- &\xla{0}& \Omega^2_-& \xla{P_-\dr^* } &\Omega^3 &\xla{\dr^*} &\Omega^4.
    \end{array}
\end{equation}
The corresponding vector field (restricted to $\mr{gh}=0$ fields) is
\begin{equation}
    H^\cl_{\mr{gh}=0}= [\wh{\bar\kappa},Q]\Big|_{\mr{gh}=0} =\int_M \frac12 \Delta B_- \frac{\delta}{\delta B_-}+P_+ \dr \dr^* B_-\frac{\delta}{\delta B_+}+\lambda \dr^* B_-\frac{\delta}{\delta A}
\end{equation}
with $\Delta$ the Laplace-de Rham operator.
Its flow is
\begin{multline}
    \Phi_T(A,B_+,B_-)=\left( A+\lambda \dr^* \frac{\mr{id}-e^{-\frac12 T\Delta}}{\frac12 \Delta}B_-\;,\right.\\ \left.
    B_++P_+ \dr \dr^* \frac{\mr{id}-e^{-\frac12 T\Delta}}{\frac12 \Delta}B_-\;,\; e^{-\frac12 T\Delta}B_-
    \right).
\end{multline}
Here $e^{-\frac12 T\Delta}$ is the heat flow operator.
Taking the limit $T\ra\infty$ and projecting to $\F'$, we obtain the same formula for $\pi_\ii$ as before (\ref{ab BF+BB: pi_int}), using the normaized homotopy $\kappa$.
\end{expl}

\subsection{Example: 1d $BF$ theory on an interval}\!\!\!\footnote{The setup of this example is taken from \cite{simpBF}, \cite{discrBF}.}
\label{sss: 1d BF on an interval}
%\begin{expl}
Consider 1d $BF$ theory on the interval $I=[0,1]$ with structure Lie algebra $\g$, which we assume to be unimodular.
%\footnote{In this section we are mainly concerned with the classical theory; for the consistency of quantum theory (in particular, for the quantum master equation to hold), one must require that $\g$ is unimodular.}
It is defined by the BV action 
\begin{equation}\label{1d BF action}
S=\int_I \langle \mc{B},\dr \mc{A}+\frac12[\mc{A},\mc{A}] \rangle\end{equation} 
-- a function on the space of fields 
\begin{equation}\label{1d BF space of fields}
\F=T^*[-1] \Omega^\bt(I,\g)[1]\cong \Omega^\bt(I,\g)[1]\oplus \Omega^\bt_\mr{distr}(I,\g^*)[-1]\ni (\mc{A},\mc{B})
\end{equation}
with $(-1)$-symplectic form $\omega=\int_I \langle \delta \mc{B}\stackrel{\wedge}{,} \delta\mc{A}\rangle $. Subscript ``distr'' in the second summand means that we allow distributional forms, more specifically, delta-like 1-forms supported at the endpoints and discontinuity of 0-forms at the endpoints.\footnote{
$BF$ theory with field $\mc{B}$ treated as a distribution (de Rham current) -- one can call it ``canonical'' $BF$ theory, cf. \cite{discrBF}, \cite[Section 6.3]{CR} -- on a manifold with boundary is a pure BV theory rather than a ``BV-BFV theory'' \cite{CMRpert} (a family of BV theories parametrized by boundary conditions). Here the effective action (and the action on $\mc{F}$, in a regularized sense) satisfy the usual quantum master equation, rather than one corrected by a boundary BFV operator. In this ``canonical'' setting, only the field $\mc{A}$ can be pulled back by the inclusion of the boundary, whereas $\mc{B}$ has the opposite functoriality (can be extended from the boundary into the bulk). Thus, a canonical $BF$ theory does not behave as a functorial field theory in the sense of Segal's axioms: it does not assign symplectic phase spaces (classically) or spaces of states (at the quantum level) to boundaries -- rather, it assigns to boundaries an odd-symplectic space with a boundary BV action. Nevertheless, there is a version of gluing formula for effective actions in canonical $BF$ theory by an inclusion-exclusion formula, cf. \cite[Section 5.4]{discrBF}.
}
%\marginpar{\bl May 14: footnote added}
%1-forms $\delta(t)dt,\delta(1-t)dt$ with support at the endpoints 

%We denote 0- and 1-form components of $\mc{A}$ by $c$, $A$ (with ghost numbers $1$ and $0$). Also, denote 0- and 1-form components of $\mc{B}$ by $A^+$, $c^+$ (with ghost numbers $-1$ and $-2$).
We split the ``superfields'' $\mc{A},\mc{B}$ into components according to form degree on $I$:
\begin{eqnarray}
    \A &=& c+A,\\
    \B &=& A^+ + c^+,
\end{eqnarray}
where the first field on the right is a 0-form on $I$ and the second is a 1-form. The ghost numbers for $c,A,A^+,c^+$ are: $1,0,-1,-2$.

Let 
\begin{equation}
\F'=C^\bt(I,\g)[1]\oplus C_{-\bt}(I,\g^*)[-2]
\end{equation}
with $C_\bt$ the cellular chains of the interval with standard CW decomposition into two 0-cells (endpoints) and one 1-cell (bulk), and $C^\bt$ the cellular cochains. We denote the cellular basis in cochains $C^\bt(I)$ by $e_0,e_1,e_I$ and the dual basis in chains $C_\bt(I)$ by $e^0,e^1,e^I$. Thus, an infrared field is 
\begin{equation}
    (\A'=e_0 c_0 + e_1 c_1 +e_I A_I, \B'= e^0 c^+_0+e^1 c^+_1+e^I  A^+_I),
\end{equation}
where
\begin{equation}
    c_0,c_1\in \g[1],\; A_I\in \g,\; c^+_0,c^+_1\in \g^*[-2],\; A^+_I\in \g^*[-1].
\end{equation}

Consider the SDR
\begin{equation}\label{1d BF SDR A}
    \SDR{(\Omega^\bt(I,\g)[1],\dr)}{(C^\bt(I,\g)[1],\delta)}{\iota_\A}{\pi_\A}{\kappa_\A},
\end{equation}
with $\delta$ the cellular coboundary operator,
given by
\begin{eqnarray}
    \iota_\A\colon & e_0 c_0 + e_1 c_1 + e_I A_I & \mapsto \quad
    (1-t) c_0 + t c_1 + \dr t A_I,\\
    \pi_\A \colon & c+A &\mapsto\quad  e_0 c(0) + e_1 c(1) + e_I \int_0^1 A,\\
    \kappa_\A \colon & c+A &\mapsto\quad 
    %\int_0^t A(t') -t \int_0^1 A(t').
    \int_{[0,1]\ni t'} (\theta(t-t')-t) A(t').
\end{eqnarray}
One also has the dual SDR
\begin{equation}\label{1d BF SDR B}
    \SDR{(\Omega^\bt(I,\g^*)[-1],\dr)}{(C_{-\bt}(I,\g^*)[-2],\dd)}{\iota_\B}{\pi_\B}{\kappa_\B},
\end{equation}
where $\dd$ is the cellular boundary operator. Here:
\begin{gather}
    \iota_\B\colon  e^0 c^+_0 + e^1 c^+_1 + e^I A^+_I \mapsto \quad
    \delta(t)\dr t\, c_0^+ + \delta(1-t)\dr t\, c_1^+ + A_I^+,\\
    \pi_\B\colon  A^++c^+ \mapsto\quad 
    e^0 \int_0^1 (1-t) c^+ +e^1 \int_0^1 t c^+ + e^I \int_0^1 \dr t\, A^+,\\
    \kappa_\B\colon  A^++c^+ \mapsto \quad
    -\int_{[0,1]\ni t'} (\theta(t'-t)-t')c^+(t').
\end{gather}
%\marginpar{sign of $\kappa_\B$?}
Putting together SDRs (\ref{1d BF SDR A}) and (\ref{1d BF SDR B}) ($\A$- and $\B$-sectors), one has the full SDR of fields onto infrared fields,
\begin{equation}\label{1d BF interval SDR}
    \SDR{(\F,d)}{(\F',d'=\delta\oplus\dd)}{\iota=\iota_\A\oplus \iota_\B}{\pi=\pi_\A\oplus \pi_\B}{\kappa=\kappa_\A\oplus \kappa_\B}
\end{equation}
We call this SDR the Whitney--Dupont gauge-fixing, since $\iota_\A$ realizes cell cochains by Whitney elementary forms on 1-simplex and $\kappa_\A$ is the Dupont chain contraction operator on forms on a simplex.\footnote{This construction generalizes to higher-dimensional simplices, see \cite{Getzler_Lie}, \cite{simpBF}, \cite{discrBF}).}

The effective BV action on $\F'$ for this model was computed in \cite{simpBF}, \cite{discrBF} to be 
\begin{equation}
    S'=\underbrace{\sum_{i=0}^1\langle c^+_i,\frac12[c_i,c_i] \rangle
    %+ \langle c^+_1,\frac12[c_1,c_1] \rangle 
    +
    \langle A^+_I, \mathbb{F}_+(\ad_{A_I})\circ c_1 - \mathbb{F}_-(\ad_{A_I})\circ c_0 \rangle}_{S'^\cl} - \iii\hbar \underbrace{\mr{tr}_\g \mathbb{G}(\ad_{A_I})}_{S'^{\mr{1-loop}}},
\end{equation}
where
\begin{eqnarray}
    \mathbb{F}_+(x) &=&\frac{x}{1-e^{-x}}=\sum_{n\geq 0}\frac{B_n^+}{n!}x^n =1+\frac{x}{2}+\frac{x^2}{12}-\frac{x^4}{720}+\cdots,\\ 
    \mathbb{F}_-(x)&=&\frac{x}{e^x-1}=\sum_{n\geq 0}\frac{B_n^-}{n!}x^n=1-\frac{x}{2}+\frac{x^2}{12}-\frac{x^4}{720}+\cdots,\\ 
    \mathbb{G}(x)&=&\log \frac{\sinh (x/2)}{x/2} \label{G(x)}\\
    \nonumber &&=\sum_{n\geq 0,n\neq 1}\frac{B_n}{n\cdot n!}x^n=1+\frac{x^2}{2\cdot 12}-\frac{x^4}{4\cdot 720}+\cdots,
\end{eqnarray}
with $B_n^\pm$ the Bernoulli numbers. The induced differential $Q'^\cl$ on $\F'$ is, by Theorem \ref{thm: Q' and p_int via BV pushforward}, the Hamiltonian vector field on $\F'$ generated by $S'^\cl$ or, equivalently, the cotangent lift of 
\begin{equation}
Q'_\A= \sum_{i=0}^1\langle \frac12 [c_i,c_i],\frac{\dd}{\dd c_i} \rangle+
\langle \mathbb{F}_+(\ad_{A_I})\circ c_1-\mathbb{F}_-(\ad_{A_I})\circ c_0 , \frac{\dd}{\dd A_I}\rangle
\end{equation}
-- a formal vector field on $C^\bt(I,\g)[1]$.

The vector field $H^\cl$ (\ref{Hhat classical}) in this example is
%\marginpar{Labels for terms except (a) can be removed if we don't need them}
\begin{multline}\label{1d BF Hhat classical}
    H^\cl= 
    \int_0^1 \left\langle \kappa (F_\A) +d_\A \kappa(\A),\frac{\delta}{\delta\A} \right\rangle + \left\langle \kappa(d_\A \B)+d_\A \kappa(\B)),\frac{\delta}{\delta \B} \right\rangle\\
    =
    \int_0^1
    \left\langle \kappa (d_A c) + [c,\kappa (A)],\frac{\delta}{\delta c} \right\rangle+
    \underbrace{\left\langle d_A \kappa(A),\frac{\delta}{\delta A}
    \right\rangle}_{(a)}+\\
    +
    \left\langle \kappa (d_A A^+ +[c,c^+])+[c,\kappa(c^+)],\frac{\delta}{\delta A^+} \right\rangle
    +\left\langle d_A \kappa (c^+),\frac{\delta}{\delta c^+} \right\rangle.
    %\langle d_A \kappa(A),\frac{\delta}{\delta A} \rangle+ \langle \kappa d_A c,\frac{\dd}{\dd c}  \rangle+\cdots
\end{multline}

\subsubsection{The $L_\infty$ projection $\pi_\ii^\cl$} 
The flow $\Phi_T$ of the vector field $-H^\cl$ applies to a connection 1-form $A$ a $T$-dependent gauge transformation relative to the endpoints of the interval $I$ (cf. term (a) in (\ref{1d BF Hhat classical})) which, as $T\ra\infty$, ``straightens'' $A$ to a gauge-equivalent constant connection\footnote{Note that a connection is constant iff it is annihilated by $\kappa$. 
%{\color{blue} NOT SURE THIS FOOTNOTE  IS NECESSARY}
} 
on $I$. Thus, we have the following. 
\begin{prop}
$\pi^\cl_\ii$ applied to $A$ yields
\begin{equation}\label{1d BF pi_int^cl(A) = log hol(A)}
    \pi_\ii^\cl(A)=\lim_{T\ra\infty} \pi(\Phi_T(A)) = -e_I\log 
    P\mr{exp}\left(-\int_0^1 A \right)
        %\mr{hol}_I(A)
    \in C^1(I,\g)\cong \g
\end{equation}
-- log of the parallel transport (path ordered exponential) of $A$ along $I$. 
\end{prop}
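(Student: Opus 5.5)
We restrict the flow $\Phi_T$ of $-H^\cl$ to fields with only the $A$ component nonzero ($c=A^+=c^+=0$). From (\ref{1d BF Hhat classical}), the components of $H^\cl$ along $\frac{\delta}{\delta c}$, $\frac{\delta}{\delta A^+}$, $\frac{\delta}{\delta c^+}$ all vanish there: each is linear in $c$, $A^+$ or $c^+$. This locus is therefore invariant, and on it the flow is governed by term (a) alone:
\begin{equation*}
\frac{\dd}{\dd T} A_T = -\dr_{A_T}\kappa(A_T),\qquad A_0=A .
\end{equation*}
This is an infinitesimal gauge transformation with generator $\lambda_T=\kappa(A_T)$, a $\g$-valued function on $I$. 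By the Dupont formula, $\lambda_T(t)=\int_0^t A_T-t\int_0^1 A_T$, so $\lambda_T(0)=\lambda_T(1)=0$. Integrating in $T$, we get $A_T=g_T^{-1}A g_T+g_T^{-1}\dr g_T$ for a path $g_T$ in the gauge group based at the identity at both endpoints. Hence the parallel transport $\mr{hol}(A_T)=P\exp(-\int_0^1 A_T)$ is $T$-independent: it is conjugated by $g_T(0)$ and $g_T(1)$, both equal to $1$.

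Next we show that $A_T$ converges, as $T\ra\infty$, to a constant connection $A_\infty=\dr t\, a$ with $a\in\g$. A fixed point requires $\dr_{A}\kappa(A)=0$. To leading order in $A$ this is $\dr\kappa(A)=A-\iota\pi A=0$, and by the footnote a connection is constant iff $\kappa A=0$. Formally (order by order in $A$) we argue as in the general properties (i),(ii) of $H^\cl$: the zero locus $\mc{Z}$ is attracting. The linear part of the flow is $e^{-T}$ on the $\mr{im}(\dr)$-component, and higher orders are solved recursively and converge. The limit is a point of $\mc{Z}$ with $c=0$, and the only such connections are constant ones, since $\kappa A_\infty=0$ forces $A_\infty-\iota\pi A_\infty=0$ via $\dr\kappa+\kappa\dr=\mr{id}-\iota\pi$ (note $\dr A=0$ on 1-forms on $I$).

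The conclusion then follows. Gauge invariance gives $\mr{hol}(A)=\mr{hol}(A_\infty)=\exp(-a)$. Also $\pi(A_\infty)=e_I\int_0^1 A_\infty=e_I a=-e_I\log P\exp(-\int_0^1 A)$, with the formal logarithm well-defined since $A$ is infinitesimal. Finally, $\pi(\Phi_T(A))$ has no $c$, $c^+$ or $A^+$ components, because those fields stay zero along the flow. This gives (\ref{1d BF pi_int^cl(A) = log hol(A)}).

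The main obstacle is the second step: justifying, in the formal setting, that the nonlinear flow indeed converges and that its limit is the constant connection, rather than only knowing this at linear order. I would handle it by working order by order in $A$. At each order the equation reads $\frac{\dd}{\dd T}A^{(n)}_T=-(\mr{id}-\iota\pi)A^{(n)}_T+(\text{source from lower orders})$, and the source decays exponentially in $T$ by induction; this yields convergence and shows the limit satisfies $\kappa A_\infty=0$.
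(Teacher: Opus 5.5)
Your proposal is correct and follows the same route as the paper. The paper argues that the flow of $-H^\cl$ on connections (term (a)) is a $T$-dependent gauge transformation trivial at the endpoints of $I$, which preserves the parallel transport and straightens $A$ to a gauge-equivalent constant connection; your invariance check for the $A$-locus, the observation that $\kappa(A_T)$ vanishes at $t=0,1$, and the order-by-order convergence argument supply details that the paper leaves implicit.
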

Note that (\ref{1d BF pi_int^cl(A) = log hol(A)}) is a nonlinear deformation of the integration map $\pi \colon A\mapsto \int_I A$.
% \footnote{
% On the other hand, one has the homological perturbation formula $i_\ii^\cl=\sum_{n\geq 0} (-K Q_\ii)^n i$. 
% %Applying it to an observable $O'(A_I)$ depending only on the infrared connection and 
% Comparing it with (\ref{1d BF pi_int^cl(A) = log hol(A)}), one recovers the Magnus expansion  of the log of parallel transport (continuous version of Baker-Campbell-Hausdorff formula). {\color{blue} More details? Also, cite Bandiera-Sch\"atz.}
% }
%\end{expl}

More generally, applied to a general point in $\F^\formal$, $\pi_\ii^\cl$ gives
\begin{equation}
    \pi_\ii^\cl(c+A,A^++c^+)=e_0 c(0)+e_1 c(1)- e_I \log P\exp\left(-\int_0^1 A\right)+\alpha(c,A,A^+,c^+),
\end{equation}
with $\alpha\in C_{-\bt}(I,\g^*)[-2]$ a certain chain with complicated dependence on the input fields.\footnote{In special cases $\alpha$ simplifies: if $c^+\in \mr{im}(\iota)$, we have $\alpha(c+A,A^+)=e^I \int_0^1 \dr t\, A^+$. If $c=0$, we have $ \alpha(A,A^++c^+)=e^I \int_0^1 \dr t\, A^+ + \beta(A,c^+)$, with $\beta$ a complicated 0-chain.
}
% It does simplify in special cases:
% \begin{itemize}
%     \item If $c^+\in \mr{im}(\iota)$, $\Phi_T$ moves $A^+$ by a $\kappa$-exact term and preserves $c^+$, so we have 
%     \begin{equation}
%         \alpha(c+A,A^+)=e^I \int_0^1 dt A^+.
%     \end{equation}
%     \item If $c=0$, $\Phi_T$ again moves $A^+$ by a $\kappa$-exact term (while the trajectory of $c^+$ is nontrivial and depends on $A$), and we have
%     \begin{equation}
%         \alpha(A,A^++c^+)=e^I \int_0^1 dt A^+ + \beta(A,c^+),
%     \end{equation}
%     with $\beta$ a complicated 0-chain.
% \end{itemize}

\begin{rem}
    Let $A_T=\Phi_T(A)$ be the flow of $-H^\cl$ in time $T$, applied to a connection $A$. The flow equation 
    \begin{equation}
        \frac{\dd}{\dd T}A_T = -\dr_{A_T} \kappa(A_T)
    \end{equation}
    with initial condition $A_{T=0}=A$ can be rewritten in terms of Fourier modes $A_{T,n}$ (seen as functions of $T\in [0,\infty)$) of the connection $A_T=\sum_{n\in \ZZ} A_{T,n} e^{2\pi i n t}dt$, as an infinite system of quadratic ODEs
    \begin{equation}
        \frac{\dd}{\dd T} A_{T,n}= - A_{T,n}(1-\delta_{n,0})-\sum_{k\neq 0} \frac{1}{2\pi i k} A_{T,k} (A_{T,n-k}-A_{T,n}).
    \end{equation}
    subject to the initial condition $A_{T=0,n}=A_n$.
\end{rem}

\subsubsection{The $L_\infty$ inclusion $\iota^\cl_\ii$} 
We find
\begin{multline}\label{1d BF iota_int^cl}
    \iota_\ii^\cl(\A'=e_0c_0+ e_1 c_1+ e_I A_I, \B'=e^0c^+_0+e^1 c^+_1+ e^I A^+_I)=\\
    =\lim_{T\ra \infty} \Phi_T \iota(\A',\B')\\
    =
    \frac{e^{-t \ad_{A_I}}-e^{-\ad_{A_I}}}{1-e^{-\ad_{A_I}}}\circ c_0+
     \frac{1-e^{-t \ad_{A_I}}}{1-e^{-\ad_{A_I}}}\circ c_1 + \dr t\, A_I\\
     +
     \delta(t)\dr t\, c^+_0 + \delta(1-t)\dr t\, c^+_1 +
     \frac{\ad^*_{A_I}}{1-e^{-\ad^*_{A_I}}} e^{-t \ad^*_{A_I}}\circ A^+_I
\end{multline}
Here the $c$ component is the solution of $\dr_{A_I}c=\mr{const}$, $c(0)=c_0, c(1)=c_1$; 
%(cf. term (b) in (\ref{1d BF Hhat classical})); 
the $A^+$ component is the solution of $\dr_{A_I}A^+=0$, $\int_0^1 \dr t \, A^+= A^+_I$ (this is the conditional critical point equation (\ref{conditional crit point eq}) supplemented by the condition that the field projects onto the respective infrared field). %(cf. term (c) in (\ref{1d BF Hhat classical})).

%\begin{rem}
\subsubsection{Magnus expansion}\!\!\!\footnote{The observation that one can recover Magnus expansion from homological perturbation lemma was made in \cite{BS} and in \cite{MPIM_interns}.}
    Comparing formula (\ref{1d BF pi_int^cl(A) = log hol(A)}) with the homological perturbation formula 
    \begin{equation}\label{1d BF i_int^cl HPL}
      i^\cl_\ii= \sum_{n\geq 0}(- K Q_\ii)^n i,  
    \end{equation}
     one recovers the Magnus expansion of the log of parallel transport (the ``continuous version'' of the Baker--Campbell--Hausdorff formula).

    As an illustration, let us compute the terms $n=0,1$ in the series (\ref{1d BF i_int^cl HPL}) for a linear observable $O'=\langle \xi, A_I\rangle$, with $\xi\in \g^*$ a parameter:
    \begin{multline}
        \langle \xi, A_I \rangle \xra{i} \left\langle \xi, \int_0^1 \dr t \ul{A}(t) \right\rangle \xra{Q_\ii} \left\langle \xi,-\int_0^1 \dr t [\ul{A}(t),c(t)] \right\rangle \\
        \xra{K} 
        \left\langle \xi, \frac12 \int_0^1 \dr t \left[ \ul{A}(t)+\int_0^1 \dr t'' \ul{A}(t''), \int_0^1 \dr t' (\theta(t-t')-t)\ul{A}(t') \right] \right\rangle\\
        =\Big\langle \xi, \frac12 \int_0^1 \dr t \int_0^1 \dr t' [\ul{A}(t),\ul{A}(t')] (\underbrace{\theta(t-t')-t-t'+\frac12}_{f(t,t')}) \Big\rangle \\
        =
        \Big\langle \xi,\frac12\int_0^1 \dr t\int_0^{t} \dr t' [\ul{A}(t),\ul{A}(t')] \Big\rangle
    \end{multline}
    Here we were writing $A=\dr t \ul{A}$ with $\ul{A}$ a $\g$-valued function on the interval. In the transition to the last line we used that 
    $[\ul{A}(t),\ul{A}(t')]$ is skew-symmetric under the swap of $t$ and $t'$ and that skew-symmetrizations of $f(t,t')$ and $\theta(t-t')$ agree.
    %$f(t,t')-f(t',t)=\theta(t-t')-\theta(t'-t)$. 
    In the transition to the second line we used that $(\kappa^\vee)^\mr{sym}(a\otimes b)=\frac{\mr{id}+ip}{2}(a)\otimes \kappa^\vee(b)+\kappa^\vee(a)\otimes \frac{\mr{id}+ip}{2}(b)$, cf. (\ref{Ksym}).  Thus, we have
    \begin{multline}
        i_\ii^\cl \langle \xi,A_I \rangle = (i-KQ_\ii i+\cdots)\langle \xi,A_I \rangle \\= \left\langle \xi,  \int_0^1 \dr t \ul{A}(t)-\frac12 \int_0^1 \dr t \int_0^t \dr t' [\ul{A}(t),\ul{A}(t')]  +\cdots\right\rangle
    \end{multline}
    where in the r.h.s. one recognizes $\xi$ paired with the Magnus expansion of $\displaystyle -\log P\exp\left(-\int_0^1 A\right)$.

%    {\color{gray} Cite Bandiera-Sch\"atz. Maybe cite Magnus.}
%\end{rem}

\subsection{Example: 1d $BF$ theory on a circle: two gauge-fixings} %\leavevmode\\
\label{ss: 1d BF on S^1}
\subsubsection{Whitney--Dupont gauge (circle with a distinguished point)}
The example of Section \ref{sss: 1d BF on an interval} can be straightforwardly modified to 1d $BF$ theory on a circle, with the action (\ref{1d BF action}) and fields (\ref{1d BF space of fields}), with the interval $I$ replaced by a circle $S^1$, which we parametrize by $t\in \mathbb{R}/\mathbb{Z}$. For the infrared fields we take
\begin{equation}\label{1d BF on S^1, IR fields}
    \F'=C^\bt(S^1,\g)[1]\oplus C_{-\bt}(S^1,\g^*)[-2]
\end{equation}
-- the cellular cochains and chains of $S^1$ with CW decomposition into one 0-cell $t=0$ and one 1-cell. We denote the cellular bases in cochains and chains $e_0, e_{S^1}$ and $e^0,e^{S^1}$, respectively. Thus, an infrared field is
\begin{equation}
    (\A'=e_0c_\zm+e_{S^1}A_\zm,\B'=e^0 c^+_\zm+ e^{S^1} A^+_\zm),
\end{equation}
with $c_\zm\in \g[1]$, $A_\zm\in\g$, $c^+_\zm\in \g^*[-2]$, $A^+_\zm\in \g^*[-1]$. The subscript zm stands for ``zero-mode.''

One has SDR for the theory on $S^1$ obtained from (\ref{1d BF interval SDR}) by gluing the endpoints of the interval,
\begin{equation}
    \SDR{(\F,d)}{(\F',0)}{\iota_\mr{WD}}{\pi_\mr{WD}}{\kappa_\mr{WD}}
\end{equation}
with
\begin{multline}
    \iota_\mr{WD}\colon  (e_0 c_\zm+ e_{S^1} A_\zm, e^{S^1} A^+_\zm+ e^0 c^+_\zm) \mapsto (c_\zm+\dr t\, A_\zm,\; A^+_\zm+\delta(t)\dr t\, c^+_\zm),\\
    \pi_\mr{WD}\colon (c+A,A^++c^+)\mapsto \Big(e_0 c(0)+e_{S^1} \oint A,\; e^{S^1} \oint \dr t A^+ + e^0 \oint c^+\Big),\\
    \kappa_\mr{WD}\colon (c+A,A^++c^+)\mapsto \Big(\oint (\theta(t-t')-t)A(t'),\;\oint (-\theta(t'-t)+t')c^+(t')\Big).
\end{multline}
Here $\oint$ is the integral over the entire $S^1$; subscript $\mr{WD}$ stands for ``Whitney--Dupont.''

The effective action of the model on the circle is
\begin{equation}
    S'=\underbrace{\langle c^+_\zm,\frac12[c_\zm,c_\zm] \rangle}_{S'^\cl}+ \langle A^+_\zm,[A_\zm,c_\zm] \rangle - \iii\hbar \underbrace{\mr{tr}_\g \mathbb{G}(\ad_{A_{S^1}})}_{S'^{\mr{1-loop}}}
\end{equation}
with $\mathbb{G}$ as in (\ref{G(x)}). The induced differential on $\F'$ is the Hamiltonian vector field generated by $S'^\cl$ or, equivalently, the cotangent lift of
\begin{equation}
    Q'_\A = \langle \frac12 [c_\zm,c_\zm],\frac{\dd}{\dd c_\zm}\rangle +
    \langle [A_\zm,c_\zm],\frac{\dd}{\dd A_\zm} \rangle.
\end{equation}

The vector field $H^\cl$ is again given by (\ref{1d BF Hhat classical}), and yields the following $L_\infty$ maps $\iota_{\mr{WD},\ii}^\cl,\pi_{\mr{WD},\ii}^\cl$:
\begin{multline}
    \iota_{\mr{WD},\ii}^\cl(e_0 c_\zm+e_{S^1}A_\zm, e^{S^1} A^+_\zm+ e^0 c^+_\zm)
    \\=
    c_\zm+\dr t\, A_\zm + \frac{\ad^*_{A_\zm}}{1-e^{-\ad^*_{A_\zm}}}e^{-t \ad^*_{A_\zm}}\circ A^+_\zm + \delta(t)\dr t\, c^+_\zm,
\end{multline}
\begin{equation}
    \pi_{\mr{WD},\ii}^\cl(c+A,A^++c^+)=e_0 c(0)-e_{S^1} \log P\exp \left(-\int_0^1 A\right) + \alpha,
\end{equation}
with $\alpha\in C_{-\bt}(S^1,\g^*)[-2]$ a chain with complicated dependence on $c,A,A^+,c^+$.

\subsubsection{Lorenz gauge (circle without a distinguished point)}
\label{sss 1d BF on S^1 Lorenz gauge}
One can identify infrared fields (\ref{1d BF on S^1, IR fields}) with doubled cohomology of the circle,
\begin{equation}\label{1d BF on S^1 IR fields as doubled cohomology}
    \F'=H^\bt(S^1,\g)[1]\oplus H^\bt(S^1,\g^*)[-1].
\end{equation}
Thus, we identify $e_0$ and $e^{S^1}$ as the generator $1$ of $H^0(S^1)$ and we identify $e_{S^1}$ and $e^0$ as the generator $[dt]$ of $H^1(S^1)$.

One can consider the following SDR
\begin{equation}
    \SDR{(\F,d)}{(\F',0)}{\iota_\mr{L}}{\pi_\mr{L}}{\kappa_\mr{L}}
\end{equation}
with
\begin{multline}
    \iota_\mr{L}\colon (e_0 c_\zm+e_{S^1} A_\zm,\; e^{S^1} A^+_\zm+e^0 c^+_\zm) \mapsto (c_\zm+\dr t\, A_\zm, A^+_\zm+\dr t\, c^+_\zm),\\
    \pi_\mr{L}\colon (c+A,A^++c^+)\mapsto \Big( e_0 \oint \dr t\, c + e_{S^1} \oint A,\; e^{S^1}\oint \dr t\, A^+ + e^0 \oint c^+\Big),\\
    \kappa_\mr{L} \colon (c+A,A^++c^+)\mapsto \Big(\oint (\theta(t-t')-t+t'-\frac12)A(t'),\; \oint (\theta(t-t')-t+t'-\frac12)c^+(t') \Big).
\end{multline}
Here $\mr{L}$ stands for ``Lorenz.'' One has a non-normalized chain homotopy compatible with this SDR (cf. Remark \ref{rem: non-normalized chain homotopy}), given by
\begin{equation}\label{1d BF on S^1 barkappa}
    \bar\kappa_\mr{L}\colon (\A,\B) \mapsto (\dr^* \A, \dr^* \B),
\end{equation}
with $\dr^*$ the Hodge codifferential associated with the metric $(\dr t)^2$ on $S^1$, hence the name ``Lorenz SDR'': gauge-fixing $\bar\kappa(\A,\B)=0$ is the usual Lorenz gauge condition.

The vector field $H^\cl$ associated to $\bar\kappa$ is
%the non-normalized chain homotopy 
\begin{multline}
    H^\cl=\oint \left\langle \dr^* \dr_A c+[c,\dr^*A],\frac{\delta}{\delta c} \right\rangle + \left\langle \dr_A \dr^* A ,\frac{\delta }{\delta A} \right\rangle
    +\\
    +\left\langle \dr^*(\dr_A A^+ +[c,c^+])+[c,\dr^*c^+],\frac{\delta}{\delta A^+} 
    \right\rangle + \left\langle \dr_A \dr^* c^+,\frac{\delta }{\delta c^+} \right\rangle.
\end{multline}
The corresponding $L_\infty$ maps $\iota_\mr{L,int}^\cl,\pi_\mr{L,int}^\cl$ are:
\begin{equation}
    \iota_\mr{L,int}^\cl(e_0 c_\zm+e_{S^1}A_\zm,e^{S^1}A^+_\zm+e^0 c^+_\zm)= (c_\zm+\dr t\, A_\zm,\; A^+_\zm+\dr t\, c^+_\zm),
\end{equation}
\begin{multline}\label{1d BF on S^1 pi_int in Lorenz gauge}
    \pi_\mr{L,int}^\cl(c+A,A^++c^+)\\
    = %e_0 \beta(A,c) -e_{S^1} U(A)\left(\log P\exp\left(-\int_0^1 A\right)\right) U(A)^{-1}+ e^{S^1}\gamma(c+A,A^++c^+)+e^0\delta(A,c^+).
    e_0 (\cdots) -e_{S^1} U(A)\left(\log P\exp\left(-\int_0^1 A\right)\right) U(A)^{-1}+ e^{S^1}(\cdots)+e^0(\cdots).
\end{multline}
Here $U(A)$ is an element of the group $G$ integrating $\g$ with complicated dependence on $A$; 
%$\beta\in \g$, $\gamma\in \g^*[-1]$ and $\delta \in \g^*[-2]$ are objects with complicated dependence on inputs.
coefficients $(\cdots)$ also have complicated dependence on the fields.

As a consequence of (\ref{1d BF on S^1 pi_int in Lorenz gauge}), if we have an observable $O'(A_\zm)$ given by a $\g$-invariant function of the infrared connection $A_\zm\in \g$, then
\begin{equation}\label{i_int^cl(O') for a class function on S^1}
i_\ii^\cl(O')(A)=((\pi_{\mr{L},\ii}^\cl)^*O')(A)=O'\left(-\log P\exp\left(-\int_0^1A\right)\right)
\end{equation}
-- the natural lift of a gauge-invariant observable. E.g., if $O'=\mr{tr}_R e^{-A_\zm} $, with $\mr{tr}_R$ the trace in some representation $R$ of $G$, then
\begin{equation}\label{i_int^cl(IR Wilson loop) on S^1}
    i_\ii^\cl(O')(A)= \mr{tr}_R P\exp\left(-\oint A\right).
\end{equation}

We summarize the differences between Whitney--Dupont gauge and Lorenz gauge:
\begin{itemize}
    \item in WD gauge, the point $t=0$ plays a distinguished role, in Lorenz gauge it does not.
    \item $\iota_\mr{WD}$ and $\iota_\mr{L}$ differ in the $c^+$ term. $\pi_\mr{WD}$ and $\pi_\mr{L}$ differ in the $c$ term.
    \item Lorenz SDR treats the $\A$ and $\B$ sectors symmetrically, whereas Whitney--Dupont SDR does not.
    \item $\iota_\mr{L}$ lands in smooth forms and $\kappa_\mr{L}$ maps smooth forms to smooth forms. On the other hand, $\iota_\mr{WD}$ lands in distributional forms and $\kappa_\mr{WD}$ maps smooth $c^+$ forms to 0-forms discontinuous at $t=0$.
    \item In Lorenz gauge, $\iota$ does not get undeformed by the interaction, while in WD gauge the $A^+$ component attains a deformation.
    \item $\pi$ gets deformed in both gauges, but in WD case the answer is more explicit (e.g. the $A$-component is known exactly, not up to an implicit conjugation).
\end{itemize}

% {\color{gray} To add: 
% %explicit SDR in $\mc{A}$-sector of 1d BF on the interval. 
% 1d BF on a circle - there one can use non-normalized homotopy $\bar\kappa=d^*$. Actually, with this gauge-fixing (not singling out a basepoint on the circle, one doesn't get $\log hol(A)$.. So maybe it is not worth adding. But $i_\ii^\cl$ of a \emph{$G$-invariant} IR observable $O'(A_{zm})$ is $O(A)=O'(-\log P \exp -\oint A)$ which is nice.
% %Also: cite Bandiera-Sch\"atz. 
% %Mention Magnus expansion
% }

\section{TQM as a 1d AKSZ theory}
\label{ss: TQM as 1d AKSZ}
%{\color{gray} Boundary (BV-BFV) problems for $i_\ii,p_\ii,K_\ii$. Feynman diagrams in $\tau$= cable diagrams for HPL series for these objects. Examples: $\F$ $n$-dim AKSZ theory $\Rightarrow$ $\tau$ is an $(n+1)$-dim AKSZ (of MQ/instantonic type?). Sub-examples: 1d BF $\ra$ 2d cohomological YM, 3d CS $\ra$ 4d BF+BB probably?}

Given an interacting BV theory $(\F,\omega,S=S_0+S_\ii)$, consider a 1d AKSZ theory $\tau$ on the interval $%I_T=
[0,T]$ with target $\N=T^*\F$.\footnote{This theory is a classical Lagrangian description of the topological quantum mechanics $\tau$ of Section \ref{ss: TQM perspective}, see Remark \ref{rem: 1d AKSZ can quantization}, hence we use the same label $\tau$ for it.}
%\marginpar{Are we assuming that $S$ does not depend on $\hbar$? What if it has quantum corrections?}

We will assume for simplicity that $S$ does not depend on $\hbar$. The quantum master equation for $S$ splits into
\begin{eqnarray}
    \{S,S\} &=& 0,\\
    \Delta S &=& 0. \label{Delta S=0}
\end{eqnarray}

We denote $x^i$ the coordinates on $\F$ and $p_i$ the dual coordinates on the cotangent fiber $\F^*$. The target $\N$ is equipped canonical with $0$-symplectic structure $\omega_\N=\langle \delta p,\delta x \rangle$ and AKSZ Hamiltonian
%\marginpar{$C^\infty(\N)$ or $C^{\infty,\formal}(\N)\simeq \wh{S}\N^*$?}
\begin{equation}\label{Theta_N}
\begin{aligned}
    \Theta_\N(x,p)&=\omega^{-1}\left(p,\frac{\dd S}{\dd x}\right) + \frac12 \omega^{-1}(p,p) \\
    &= \langle p, d_\F x \rangle+\underbrace{\omega^{-1}\left(p,\frac{\dd S_\ii}{\dd x}\right)}_{\Theta_\ii} + \frac12 \omega^{-1}(p,p)\;\; \in C^\infty(\N)_1
\end{aligned}
\end{equation}
of ghost number $1$.
In this subsection we denote the differential on $\F$ by $d_\F$ to avoid confusion with other differentials. The Hamiltonian vector field generated by $\Theta_\N$ is the cohomological vector field on $\N$
\begin{equation}\label{Q_N}
    Q_\N=Q^\mr{cot.\,lift}+\omega^{-1}\left(p,\frac{\dd}{\dd x}\right),
\end{equation}
where the first term is the cotangent lift of $Q=\{S,-\}$ from $\F$ to $\N$.\footnote{
One has an isomorphism $T^*\F\simeq T[1]\F$ induced fiberwise by $(\omega^\#)^{-1}$. Thus, the target $\N$ can be identified with $T[1]\F$ with base coordinates $x^i$ and shifted tangent fiber coordinates $\theta^i$. 
Under the identification $C^\infty(\N)=C^\infty(T[1]\F)\cong \Omega^\bt(\F)$, the target AKSZ Hamiltonian (\ref{Theta_N}) is $\Theta_\mc{N}=\mathsf{d}S+\omega$ -- a form on $\F$ of mixed degree 1 and 2. The cohomological vector field (\ref{Q_N}), as a derivation of $\Omega^\bt(\F)$, is $Q_\mc{N}=\LL_Q+\mathsf{d}$, with $\mathsf{d}$ the de Rham operator on forms on $\F$. The target $0$-symplectic form is $\omega_\N=\omega(\delta\theta,\delta x)$.
%In terms of $(x,\theta)$, one has 
%Denote the shifted tangent fiber coordinates on $T[1]\F$ by $\theta^i$, with $x^i$ the base coordinates. Then
}

Thus, the space of fields of the 1d AKSZ theory $\tau$ is
\begin{equation}\label{1d AKSZ F^tau}
    \F^\tau = \mr{Map}(T[1][0,T]
    ,T^*\F)=\Omega^\bt([0,T],\F)\oplus \Omega^\bt([0,T],\F^*).
\end{equation}
It is parametrized by two AKSZ superfields 
\begin{equation}
\til{x}=x+p^+ \in \Omega^\bt([0,T],\F),\quad
\til{p}=p+x^+ \in \Omega^\bt([0,T],\F^*)
\end{equation}
with $x,p$ being 0-forms along $[0,T]$ and $p^+,x^+$ being 1-forms. The BV action of the 1d AKSZ theory $\tau$ is
\begin{multline}\label{S^tau}
    S^\tau=\int_0^T \langle p, \dr_t x\rangle + \Theta_\N(\til{x},\til{p})
    \\=
    \int_0^T \langle p,\dr_t x \rangle + \langle \til{p},d_\F \til{x} \rangle +\frac12 \omega^{-1}(\til{p},\til{p})+\Theta_\ii(\til{x},\til{p}).
\end{multline}
The $(-1)$-symplectic form on $\F^\tau$ is 
\begin{equation}\label{omega^tau}
    \omega^\tau=\int_0^T \langle \delta \til{p} ,\delta\til{x} \rangle=
    \int_0^T \langle \delta x,\delta x^+ \rangle + \langle \delta p,\delta p^+\rangle.
\end{equation}
We consider the following gauge-fixing Lagrangian $\LL^\tau\subset \F^\tau$:
\begin{equation}\label{L^tau}
    \LL^\tau=\Big\{\Big(\til{x}=x+\dr t\, \kappa(x),\til{p}=p-\dr t\, \kappa^\vee(p)\Big)\;\Big|\; (x,p)\in \Omega^0([0,T],\F\oplus \F^*)\Big\},
\end{equation}
i.e., on $\LL^\tau$ the 0-form components of superfields are free and 1-form components are dependent on 0-form components.\footnote{
One can express $\LL^\tau$ %this Lagrangian 
by realizing $\F^\tau$ as the cotangent bundle of the ``trivial Lagrangian'' $\LL^\tau_0=\Omega^0([0,T],\F\oplus\F^*)$ (given by setting 1-forms to zero) and then deforming $\LL^\tau_0$ to the graph Lagrangian
$\LL^\tau=\mr{graph}(\delta\Psi)$ for the gauge-fixing fermion $\Psi=\int_0^T \dr t \,\langle p,\kappa(x)\rangle\in C^\infty(\LL^\tau_0)_{-1}$. 
}
Here $t$ is the coordinate on the interval $[0,T]$.

The gauge-fixed action is:
\begin{multline}\label{S^tau on L^tau}
    S^\tau|_{\LL^\tau} = \int_0^T \dr t \big( \langle p, (\dd_t-P'')x \rangle - \omega^{-1}(p,\kappa^\vee(p)) \big)+ \\
    + \Theta_\ii(x+\dr t\,\kappa(x),p-\dr t\, \kappa^\vee(p))\\
    = \int_0^T \dr t (\langle p,\dd_t x \rangle-\mathbb{H}(x,p)).
\end{multline}
Here $P''=\mr{id}-\iota\circ \pi$ is the projector onto the second summand in $\F=\F'\oplus \F''$ and
\begin{multline}\label{HH}
    \mathbb{H}(x,p)=\{\Theta_\N,\big\langle p,\kappa(x) \big\rangle\}_\N\\
    = \omega^{-1}\left(p,\kappa^\vee \frac{\dd S}{\dd x}\right)+
    \omega^{-1}\left(p,\Big\langle \kappa(x),\frac{\dd}{\dd x} \Big\rangle \frac{\dd S}{\dd x}\right)+\omega^{-1}(p,\kappa^\vee(p))\;\; \in C^\infty(\N)_0
\end{multline}
is the classical Hamiltonian.

\begin{rem}
    The phase space that the AKSZ theory $\tau$ assigns to a point is the target Hamiltonian dg manifold 
    \begin{equation}\label{1d AKSZ target quadruple}
    (\N=T^*\F,\omega_\N,\Theta_\N,Q_\N).
    \end{equation}
\end{rem}

\begin{rem}\label{rem: 1d AKSZ can quantization}
    The canonical quantization $x\mapsto x\cdot$, $p\mapsto -\iii\hbar \frac{\dd}{\dd x}$ prescription makes the following assignments:
    \begin{enumerate}[(i)]
        \item The AKSZ target $\N$ (as a symplectic manifold)
        %(or equivalently the phase space %$\Phi^\tau_\mr{pt}=T^*\F$ 
        %that the AKSZ theory $\tau$ assigns to a point)
        quantizes to $\fun$.
        \item The target AKSZ Hamiltonian $\Theta_\N$ quantizes to the BV differential $-\iii\hbar Q_\hbar$.\footnote{The two natural orderings -- ``$p$ to right of $x$'' vs. ``$p$ to the left of $x$'' yield the same quantization due to $\Delta S=0$ (\ref{Delta S=0}).}
        Thus, the target as Hamiltonian dg manifold (\ref{1d AKSZ target quadruple})
        %the phase space for a point (or equivalently the AKSZ target) of $\tau$, seen as a dg manifold, 
        quantizes to the BV complex $(\fun,Q_\hbar)$.
        \item The classical Hamiltonian $\mathbb{H}$ quantizes to $-\iii\hbar H$, with
        $H=\LL_E-\iii\hbar \eta + [\wh\kappa,Q_\ii]$, as in (\ref{Hhat}).\footnote{More specifically, the three terms in the second line in (\ref{HH}) quantize to $-\iii\hbar Q\wh\kappa$, $-\iii\hbar \wh\kappa Q$ and $(-\iii\hbar)^2\eta$.}
    \end{enumerate}
\end{rem}

\begin{rem} %\marginpar{Is this remark ok? }
    More appropriately, one should say that the target is a formal neighborhood of zero in $T^*\F$, $\Theta_\N$ and $\mathbb{H}$ are formal functions, $Q_\N$ is a formal vector field. To simplify the exposition we are suppressing the ``formal'' qualifier.
\end{rem}

\begin{rem}
    For gauge-fixing of the 1d AKSZ theory in (\ref{L^tau}), (\ref{S^tau on L^tau}), (\ref{HH}), one can
    use a non-normalized chain homotopy $\bar\kappa$ instead of a normalized one $\kappa$, cf. Remark \ref{rem: non-normalized chain homotopy}.
\end{rem}

%\marginpar{Add remark: formally integrating out $\til{p}$ yields $d_tS$ $\ra$ descends to the boundary $t=0,T$.}

\begin{rem}\label{rem: S^tau as bulk theory for S}
    One can rewrite the action (\ref{S^tau}) -- completing it to a full square in $\til{p}+\cdots$ -- as
    \begin{multline}\label{S^tau full square}
        S^\tau=\int_0^T \frac12 \omega^{-1}\left(\til{p}+\omega^\# \dr_t x + \frac{\dd S}{\dd x}(\til{x}), \til{p}+\omega^\# \dr_t x + \frac{\dd S}{\dd x}(\til{x})\right) \\
        -\frac12 \underbrace{\omega^{-1}\left(\frac{\dd S}{\dd x}(\til{x}),\frac{\dd S}{\dd x}(\til{x})\right)}_{0\;\mr{by\;CME}}-\dr_t S(\til{x})
    \end{multline}
    %\marginpar{\bl May 14 edit}
    Here the second term vanishes by classical master equation $\{S,S\}=0$. Integrating out $\til{p}$ we obtain a theory with the action 
    \begin{equation}\label{int d_t S}
    -\int_0^T \dr_t S(\til{x})=S(x|_{t=0})-S(x|_{t=T}). 
    \end{equation}
    Thus, fixing $x_\mr{in},x_\mr{out}\in \F$ and fixing some extension $\til{x}$ as a form on $[0,T]$ restricting to $x_\mr{in}$, $x_\mr{out}$ at $t=0,T$, we have
    \begin{equation}\label{rem: S^tau as bulk theory for S, eq1}
        \int
        \mc{D}\til{p}\; e^{\frac\iii\hbar S^\tau} =
        e^{\frac\iii\hbar S(x_\mr{in})} e^{-\frac{\iii}{\hbar}S(x_\mr{out})}.
    \end{equation}
    We stress that this is a different gauge-fixing than $\LL^\tau$ (\ref{L^tau}) used in the rest of this section.

\begin{comment}
    In more detail: consider for simplicity the case $\F'=0$ and a Lagrangian \begin{equation}
    \ol{\LL}^\tau=\{\til{x}\in \mr{im}(\kappa)\oplus \dr t\, \mr{im}(d),\; \til{p}\in \mr{im}(d^\vee)\oplus \dr t\, \mr{im}(\kappa^\vee)\}\quad \subset \F^\tau.
    \end{equation}
    It is chosen so that shifts of $\til{p}$ in (\ref{S^tau full square}) are parallel to $\ol{\LL}^\tau$. Then formally we have
    \begin{equation}
        \int_{\ol\LL^\tau} e^{\frac{\iii}{\hbar}S^\tau}= \int_\LL e^{\frac{\iii}{\hbar}S}\cdot \int_\LL e^{-\frac{\iii}{\hbar}S}=
        \left| \int_\LL e^{\frac{\iii}{\hbar}S}\right|^2.
    \end{equation}
    Here the l.h.s. is the gauge-fixed path integral for the theory $\tau$ and in the middle one has the product of boundary contributions at $t=0$ and $t=T$, cf. (\ref{int d_t S}) -- the product of the gauge-fixed integral of the original theory $(\F,\omega,S)$, with original gauge-fixing $\LL=\mr{im}(\kappa)$, and its complex conjugate.

    {\bl to finish/think through: (a) the integral over the 1-form part of $\til{x}$ seems degenerate, so $\ol{\LL}^\tau$ probably does not fully fix the gauge (but maybe that is a typical behavior in theories like $\int_{M^4} BF+BB \xra{\mr{int\;out\; B}} \int_{M^4} tr(F^2)=\int_{M^4} d(CS)=\int_{\dd M^4} CS$). (b) including $\F'$?}
\end{comment}
\end{rem}

\begin{expl}\label{example: toy interacting scalar: 1d AKSZ theory tau}
    Consider the toy interacting scalar field of Section \ref{sss example: toy interacting quantum scalar}. In this case, the 1d AKSZ theory $\tau$ has target 
    \begin{equation}
       \mc{N}=\underset{\pi,p}{T^*} (\underset{\xi}{T^*[-1]} \underset{x}{\RR}),
    \end{equation}
    with coordinates $x,\xi$ on $T^*[-1]\mathbb{R}$ of ghost degrees $0,-1$ and coordinates $p,\pi$ (of ghost degree $0,1$) in the cotangent fiber over $(x,\xi)$. The structure on $\mc{N}$ is:
    \begin{equation}
        \omega_\N=\delta p\, \delta x+ \delta \pi\, \delta\xi,\quad \Theta_N=\pi S'(x)+\pi p.
    \end{equation}
    The AKSZ superfields are 
    \begin{equation}
    \til{x}=\underset{0}{x}+\dr t\,\underset{-1}{p^+}, \; \til\xi=\underset{-1}{\xi}+\dr t\,\underset{-2}{\pi^+},\; \til{p}=\underset{0}{p}+\dr t\, \underset{-1}{x^+},\; \til\pi=\underset{1}{\pi}+\dr t\, \underset{0}{\xi^+}
    \end{equation}
    where we indicated the ghost degrees of the components.
    The AKSZ action is
    \begin{multline}
        S^\tau=\int_0^T \dr t (p\dot{x}-\pi \dot\xi+\Theta_\N(\til{x},\til{\xi},\til{p},\til{\pi}))\\
        =
        \int_0^T \dr t (p\dot{x}-\pi\dot\xi + \xi^+ S'(x)+p^+ \pi S''(x)+x^+ \pi+\xi^+ p)
    \end{multline}
    The gauge-fixing is
    \begin{equation}
        \LL^\tau\colon \til{x}=x+\dr t\,\xi,\; \til\xi=\xi,\; \til{p}=p,\; \til\pi=\pi+\dr t\, p.
    \end{equation}
    The gauge-fixed AKSZ action is therefore
    \begin{equation}\label{toy interacting scalar: S^tau|L^tau}
        S^\tau|_{\LL^\tau}=\int_0^T \dr t (p\dot{x}-\pi \dot\xi+
        \mathbb{H}(x,\xi,p,\pi)
        %\underbrace{pS'(x)+ \xi\pi S''(x)+p^2}_{\mathbb{H}(x,\xi,p,\pi)}
        )
    \end{equation}
    with 
    \begin{equation}
        \mathbb{H}(x,\xi,p,\pi)=\{\Theta_\N,p\xi\}_\N
        =pS'(x)+ \xi\pi S''(x)+p^2.
    \end{equation}
\end{expl}

\subsection{Path integral formulae for $i_\ii$, $p_\ii$, $K_\ii$ in terms of 1d AKSZ theory}

\begin{thm}\label{thm: i_int,p_int,K_int from 1d AKSZ}
One has the following path integral formulae for $i_\ii,p_\ii,K_\ii$ in terms of gauge-fixed 1d AKSZ theory (\ref{S^tau on L^tau}):
\begin{equation}\label{i_int as 1d AKSZ PI}
    i_\ii(O')(x_\mr{out})=\lim_{T\ra \infty} \int_{p(0)=0,\; x(T)=x_\mr{out}} \mc{D} [x(t)] \mc{D} [p(t)]\, O'(\pi(x(0))) \, e^{\frac{\iii}{\hbar}S^\tau|_{\LL^\tau}},
\end{equation}
\begin{equation}\label{p_int as 1d AKSZ PI}
    p_\ii(O)(x'_\mr{out})=\lim_{T\ra \infty} \int_{p(0)=0,\; x(T)=\iota(x'_\mr{out})} \mc{D} [x(t)] \mc{D} [p(t)]\, O(x(0)) \, e^{\frac{\iii}{\hbar}S^\tau|_{\LL^\tau}},
\end{equation}
\begin{multline}\label{K_int as 1d AKSZ PI}
    K_\ii(O)(x_\mr{out})\\
    =\int_0^\infty \int_{p(0)=0,\; x(T)=x_\mr{out}} 
    \mc{D} [x(t)] \mc{D} [p(t)]\,
   %\left. 
   \underbrace{O(\til{x}|_{t=0, \dr t=\dr T})}_{O\big(x(0)+\dr T\kappa(x(0))\big)} e^{\frac{\iii}{\hbar} S^\tau
   %(\til{x},\til{p})
   }%\right
   \Big|_{\LL^\tau},
\end{multline}
for any input observables $O\in \wh{S}\F^*$, $O'\in \wh{S}\F'^*$. %$x_\mr{out}\in \F$, $x'_\mr{out}\in \F'$.
The outer integral in (\ref{K_int as 1d AKSZ PI}) is in $T$ (one extracts the $\dr T$ component from the integrand, coming from the observable).
\end{thm}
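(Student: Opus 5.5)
The plan is to reduce Theorem \ref{thm: i_int,p_int,K_int from 1d AKSZ} to Theorem \ref{thm: interacting SDR via TQM}. The path integrals on the right-hand sides should be recognized as matrix elements, or integral kernels, of the evolution operator $Z_T=e^{-TH}$ (and of $\wh\kappa Z_T$) of the TQM $\tau$. The link between the two pictures is the canonical quantization dictionary of Remark \ref{rem: 1d AKSZ can quantization}: the gauge-fixed action (\ref{S^tau on L^tau}) has first-order form $\int_0^T \dr t\,(\langle p,\dd_t x\rangle-\mathbb{H})$, and $\mathbb{H}$ quantizes to $-\iii\hbar H$. Hence $e^{\frac\iii\hbar S^\tau|_{\LL^\tau}}$ is the phase-space path integral representation of $e^{-TH}$.

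First, I would establish the standard Hamiltonian path integral formula in the form suited to our boundary conditions. The condition $x(T)=x_\mr{out}$ fixes the output state in the position representation. The condition $p(0)=0$ corresponds to integrating the input against a plane wave of zero momentum, i.e.\ to inserting the function $O(x(0))$ with $x(0)$ free. Together these give
\begin{equation*}
\int_{p(0)=0,\;x(T)=x_\mr{out}}\mc{D}x\,\mc{D}p\;O(x(0))\,e^{\frac\iii\hbar S^\tau|_{\LL^\tau}}=(e^{-TH}O)(x_\mr{out}).
\end{equation*}
The ordering ambiguity of $\mathbb{H}$ is harmless: $\Delta S=0$ (\ref{Delta S=0}) and the footnote to Remark \ref{rem: 1d AKSZ can quantization} show that both natural orderings give $H$. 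I would make this rigorous perturbatively rather than via a measure. Treat $\langle p,(\dd_t-P'')x\rangle-\omega^{-1}(p,\kappa^\vee p)$ as the free part. Its propagator $\langle x(t)p(s)\rangle$ is $\theta(t-s)$ times the kernel of $e^{-(t-s)\LL_E}$, which is retarded because of the boundary conditions. The $\omega^{-1}(p,\kappa^\vee p)$ term gives the $\eta$-insertions, and $\Theta_\ii$ gives the $[Q_\ii,\wh\kappa]$ vertices. Because the propagators are retarded, the Feynman expansion becomes exactly the time-ordered Dyson series (\ref{thm 5.3 proof eq1}). This identification is the main obstacle. The difficult points are the equal-time contractions $\langle x(t)p(t)\rangle$, where $\theta(0)$ must be fixed, and checking that the ghost contributions from the $\kappa$-dependent 1-form components reproduce $[Q_\ii,\wh\kappa]$ rather than some other ordering. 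I would use the convention $\theta(0)=\frac12$, which matches the Weyl-symmetric ordering, and check this against the toy scalar of Example \ref{example: toy interacting scalar: 1d AKSZ theory tau}.

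Given this formula, the three statements follow. For (\ref{i_int as 1d AKSZ PI}), take $O=i(O')=O'\circ\pi$ and let $T\ra\infty$; this gives $\lim Z_T\circ i=i_\ii$ by (\ref{thm 5.3 i}). For (\ref{p_int as 1d AKSZ PI}), evaluate at $x_\mr{out}=\iota(x'_\mr{out})$, which means applying $p=\iota^*$ after $Z_T$; then (\ref{thm 5.3 p}) gives the claim. For (\ref{K_int as 1d AKSZ PI}), the insertion $O(x(0)+\dr T\,\kappa(x(0)))$ has $\dr T$-component $\wh\kappa O$, evaluated at the initial time. So the inner integral produces $Z_T\wh\kappa O$, and integrating over $T$ gives $\int_0^\infty Z_{T,\dr T}$. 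This requires $\wh\kappa e^{-TH}=e^{-TH}\wh\kappa$, which holds because $H=[Q_\hbar,\wh\kappa]$ commutes with $\wh\kappa$ by $\wh\kappa^2=0$. By (\ref{thm 5.3 K}), this integral is $K_\ii$.
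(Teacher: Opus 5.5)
Your proposal follows the paper's proof: the paper first proves (Lemma~\ref{lemma 5.20}) that the path integral with $p(0)=0$, $x(T)=x_\mr{out}$ equals $(e^{-TH}O)(x_\mr{out})$ by a Wick expansion, and then specializes via Theorem~\ref{thm: interacting SDR via TQM} exactly as you do (the commutation $[\wh\kappa,H]=0$ you invoke for $K_\ii$ is used there implicitly). The bookkeeping differs only mildly. The paper splits off $x_\mr{out}$ by a discontinuous extension, takes just $\langle p,\dd_t\xx\rangle$ as kinetic term, and normal-orders $\mathbb{H}$ (no self-contractions); this agrees with your $\theta(0)=\tfrac12$ choice because the divergence of $[Q,\wh\kappa]$ vanishes by $\Delta S=0$. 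Note, however, that with these boundary conditions the propagator is supported where the $p$-time exceeds the $x$-time, $\langle p(t)\,\xx(t')\rangle\propto\theta(t-t')$, which is the reverse of the direction you wrote.
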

Here the path integrals are understood perturbatively, writing $S^\tau|_{\LL^\tau}$ as the last line of (\ref{S^tau on L^tau}) and understanding $\mathbb{H}$ as perturbation.

To prove the theorem, first we need the following lemma.

\begin{lem}\label{lemma 5.20}
One has
    \begin{equation}\label{e^(-TH)) = 1d AKSZ PI}
    (e^{-TH}O)(x_\mr{out})=
           \int_{p(0)=0,\; x(T)=x_\mr{out}} \mc{D} [x(t)] \mc{D} [p(t)]\, O(x(0)) \, e^{\frac{\iii}{\hbar}S^\tau|_{\LL^\tau}},
    \end{equation}
    with any $T>0$, $O\in \wh{S}\F^*$ and with $H$ as in (\ref{Hhat}) -- the quantum Hamiltonian of the TQM.
\end{lem}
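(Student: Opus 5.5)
The plan is to read the right-hand side of (\ref{e^(-TH)) = 1d AKSZ PI}) as the standard phase-space path integral representation of the evolution operator $e^{-TH}$, using the canonical quantization dictionary of Remark \ref{rem: 1d AKSZ can quantization}. Write the gauge-fixed action as in the last line of (\ref{S^tau on L^tau}), $\int_0^T \dr t\,(\langle p,\dd_t x\rangle - \mathbb{H}(x,p))$. By item (iii) of that remark, $\mathbb{H}$ quantizes to $-\iii\hbar H$, so $e^{\frac{\iii}{\hbar}\int(-\mathbb{H})\dr t}$ formally corresponds to $e^{-TH}$. The boundary conditions $p(0)=0$ and $x(T)=x_\mr{out}$ say that the in-state is evaluated in the ``$x$-representation'' at $t=0$ against the observable $O(x(0))$, while the output is a function of $x_\mr{out}$. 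Rather than use a time-slicing limit, I would make this precise perturbatively, since the theorem says the integrals are understood that way.

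First I will compute the free propagators of the kinetic term $\int \langle p,\dd_t x\rangle$ with these boundary conditions. The only nonzero contraction is $\langle x(t)\,p(s)\rangle = -\iii\hbar\,\theta(s-t)$ (up to sign conventions), with $\langle xx\rangle = \langle pp\rangle = 0$, and the $x$-zero mode is fixed to $x_\mr{out}$. In particular, $x(t)$ with no contraction equals $x_\mr{out}$, and $p$ always contracts with an $x$ at an earlier time. Then I will expand $e^{-\frac{\iii}{\hbar}\int \mathbb{H}}$ in powers of $\mathbb{H}$. A term with $k$ insertions at ordered times $0<t_1<\dots<t_k<T$ becomes, after Wick contraction, the operator $\frac{(-1)^k}{k!}\cdot k!\int_{\Delta_k}$ of time-ordered products. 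In this product each $p$ at time $t_j$ differentiates the observable at $t=0$ or an insertion at an earlier time, acting as $-\iii\hbar\,\dd/\dd x$. This is exactly the quantization $p\mapsto -\iii\hbar\dd_x$ with $p$ ordered to the right, producing $\int_{\Delta_k}(-H)^k/k!\,$-type terms, i.e. $T^k(-H)^k/k!$ applied to $O$ and evaluated at $x_\mr{out}$. Summing over $k$ gives the Dyson series for $e^{-TH}O$.

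A cleaner alternative, which I would also try, is to differentiate both sides in $T$. Denote the right-hand side by $U_T O$. Cutting the path at $t=T$ shows $\dd_T U_T O = -H\,U_T O$, because the insertion at the final time, with its $p$'s contracted into the past, acts as the differential operator $\mathbb{H}(x_\mr{out},-\iii\hbar\dd_{x_\mr{out}})\cdot\frac{\iii}{\hbar}$ with normal ordering. Together with $U_0=\mr{id}$, this identifies $U_T=e^{-TH}$.

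I expect the main obstacle to be ordering and regularization issues at coincident times. The first is the $\theta(0)$ ambiguity when a single $\mathbb{H}$ vertex has $p$ and $x$ contracting with each other (tadpoles). The second is that the ``$p$ to the right'' ordering must reproduce exactly $H$ as in (\ref{Hhat}), including the $-\iii\hbar\eta$ term from $\omega^{-1}(p,\kappa^\vee p)$. I would handle self-contractions as follows: a self-contraction within the $\Theta$-part of $\mathbb{H}$ produces $\Delta S$-type terms, which vanish by (\ref{Delta S=0}), as in the footnote on ordering independence. A self-contraction in the term $\omega^{-1}(p,\langle\kappa(x),\dd_x\rangle \dd_x S)$ gives a trace of the form $\mr{str}(\kappa\,\cdot)$ against $\omega^{-1}$. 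I would argue that this vanishes by the side conditions on $\kappa$ and by the symmetry (\ref{SDR self-duality condition}), or otherwise fix the convention $\theta(0)=0$ consistently with the ordering of Remark \ref{rem: 1d AKSZ can quantization}. Signs from odd variables would be tracked but are routine.
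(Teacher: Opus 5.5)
Your proposal is correct and is essentially the paper's argument. The paper likewise takes $\int_0^T\langle p,\dd_t x\rangle$ as the free part, with propagator $\langle p(t)\,\xx(t')\rangle_0=-\iii\hbar\,\theta(t-t')$ and $\langle \xx\xx\rangle_0=\langle pp\rangle_0=0$. It implements $x(T)=x_\mr{out}$ by the shift $x=\bar{x}_\mr{out}+\xx$, which produces the source term $\langle p(T),x_\mr{out}\rangle$; this is equivalent to your rule that an $x$ not contracted into a later $p$ is replaced by $x_\mr{out}$. It then reads off $(e^{-TH}O)(x_\mr{out})$ from Wick's lemma with $\mathbb{H}$ normally ordered, which is exactly your fallback of forbidding self-contractions ($\theta(0)=0$).

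Your alternative justification that the tadpoles vanish on their own is off, though harmless given that fallback. The self-contraction in $\omega^{-1}(p,\kappa^\vee\frac{\dd S}{\dd x})$ gives $\pm h^{ij}\frac{\dd^2 S}{\dd x^i\dd x^j}\neq 0$, and so does the $p$--$\kappa(x)$ self-contraction in $\omega^{-1}(p,\langle\kappa(x),\frac{\dd}{\dd x}\rangle\frac{\dd S}{\dd x})$. For the toy scalar these are $\pm S''(x)$, the tadpoles of $pS'(x)$ and $\xi\pi S''(x)$. They cancel against each other rather than vanishing by $\Delta S=0$ or by the side conditions on $\kappa$. Only the contraction of $p$ into the Hessian in the second term is killed by $\Delta S=0$.
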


\begin{proof}
First, write 
\begin{equation}\label{x=xout+xhat}
x(t)=\ol{x}_\mr{out}(t)+\xx(t),
\end{equation}
where the first term is 
$\ol{x}_\mr{out}(t)=\begin{cases}
    x_\mr{out},\; t=T,\\
    0,\; t<T
\end{cases}$
-- the discontinuous extension of $x_\mr{out}$ by zero to $t\in [0,T)$.\footnote{Cf. discontinuous extension of boundary fields into the bulk in \cite[Section 2.4]{CMRpert}.
} The r.h.s. of (\ref{e^(-TH)) = 1d AKSZ PI}) then reads
\begin{equation}\label{lemma 5.20 eq1}
    \int_{p(0)=0, \, \xx(T)=0} \mc{D} [\xx(t)] \mc{D} [p(t)]\, O(\xx(0)) e^{\frac{\iii}{\hbar} \left(\langle p(T),x_\mr{out} \rangle+\int_0^T \dr t \left(\langle p,\dd_t \xx \rangle-\mathbb{H}(\xx,p)\right) \right)}
\end{equation}

Consider free expectation values 
\begin{equation}
        \langle f \rangle_0 = \int_{p(0)=0,\; \xx(T)=0} 
        \mc{D} [\xx(t)] \mc{D} [p(t)]\, e^{\frac{\iii}{\hbar}\int_0^T \dr t \langle p, \dd_t \xx\rangle}\cdot f,
\end{equation}
with $f$ a function on the space of paths $(\xx(t),p(t))$. %AKSZ fields.
One has propagators (two-point correlation functions)
\begin{equation}
        \langle p(t)\otimes \xx(t') \rangle_0=-\iii\hbar \theta(t-t')\otimes\mr{id},\quad
        \langle \xx(t)\otimes \xx(t')\rangle_0 = 0,\quad
        \langle p(t)\otimes p(t')\rangle_0 =0.
\end{equation}

Computing (\ref{lemma 5.20 eq1}) by Wick's lemma, we obtain
\begin{equation}
    \left| e^{\langle x_\mr{out}, \frac{\dd}{\dd \xx}\rangle} e^{-T H} O(\xx)  \right|_{\xx=0} = (e^{-TH}O)(x_\mr{out}),
\end{equation}
which proves the lemma.

Here we were understanding $\mathbb{H}$ as a normally-ordered expression, prohibiting self-contractions, which is tantamount to displacing $p$ in $\mathbb{H}(\xx,p)$ to a slightly earlier time $t$ than $\wh{x}$: 
\begin{equation}
:\mathbb{H}(\xx(t),p(t)):=\lim_{\epsilon\ra +0} \mathbb{H}(\xx(t),p(t-\epsilon)).
\end{equation}
\end{proof}

\begin{proof}[Proof of Theorem \ref{thm: i_int,p_int,K_int from 1d AKSZ}]
    Theorem \ref{thm: i_int,p_int,K_int from 1d AKSZ} follows from Theorem \ref{thm: interacting SDR via TQM} and Lemma \ref{lemma 5.20}. Indeed: 
    \begin{itemize}
        \item Set $O=i(O')$ in (\ref{e^(-TH)) = 1d AKSZ PI}) and take the limit $T\ra\infty$. 
        % \begin{equation}
        %     \lim_{T\ra\infty}(e^{-TH}i(O'))(x_\mr{out})=
        %    \lim_{T\ra\infty}\int_{p(0)=0,\; x(T)=x_\mr{out}} \mc{D} [x(t)] \mc{D} [p(t)]\, O'(\pi(x(0))) \, e^{\frac{\iii}{\hbar}S^\tau|_{\LL^\tau}},
        % \end{equation}
        Then the l.h.s. becomes the r.h.s. of (\ref{thm 5.3 i}) applied to $O'$ and evaluated at $x_\mr{out}$, and the r.h.s. becomes the r.h.s. of (\ref{i_int as 1d AKSZ PI}), which proves (\ref{i_int as 1d AKSZ PI}).
        \item Set $x_\mr{out}=\iota(x'_\mr{out})$ in (\ref{e^(-TH)) = 1d AKSZ PI}) and take the limit $T\ra \infty$. Then the l.h.s. becomes the r.h.s. of (\ref{thm 5.3 p}) applied to $O$ and evaluated at $x'_\mr{out}$, whereas the r.h.s. becomes the r.h.s. of (\ref{p_int as 1d AKSZ PI}).
        \item Set $O=\dr T\, \wh\kappa O$ in (\ref{e^(-TH)) = 1d AKSZ PI}) and integrate in $T$ from $T=0$ to $T=\infty$. Then the l.h.s. becomes the r.h.s. of (\ref{thm 5.3 K}) (applied to $O$ and evaluated at $x_\mr{out}$) while the r.h.s. becomes the r.h.s. of (\ref{K_int as 1d AKSZ PI}).
    \end{itemize}
\end{proof}

\begin{expl}
    For the toy interacting scalar field (Example \ref{example: toy interacting scalar: 1d AKSZ theory tau}), the path integral formula for the chain homotopy (\ref{K_int as 1d AKSZ PI}), evaluated on an observable $O(x)$, reads
    \begin{multline}
        K_\ii(O)(x_\mr{out},\xi_\mr{out})=\\
        \int_0^\infty \int_{p(0)=\pi(0)=0,\; x(T)=x_\mr{out},\;\xi(T)=\xi_\mr{out}} \mc{D} [x(t)] \mc{D} [\xi(t)] \mc{D} [p(t)] \mc{D} [\pi(t)] \; O(x(0)+\dr T\, \xi(0))\; e^{\frac\iii\hbar S^\tau|_{\LL^\tau}},
    \end{multline}
    with $S^\tau|_{\LL^\tau}$ as in (\ref{toy interacting scalar: S^tau|L^tau}).
\end{expl}

\subsection{Cable diagrams for $i_\ii$ as Feynman graphs for 1d AKSZ path integral} \label{sss: cable diagrams for i_int via 1d AKSZ}

Consider the path integral (\ref{e^(-TH)) = 1d AKSZ PI}):
\begin{multline}\label{1d AKSZ PI}
    (e^{-TH}O)(x_\mr{out})=
           \int_{p(0)=0,\; x(T)=x_\mr{out}} \mc{D} [x(t)] \mc{D} [p(t)]\, O(x(0)) \, e^{\frac{\iii}{\hbar}S^\tau|_{\LL^\tau}}\\
           = \int_{p(0)=0,\; \xx(T)=0}
           \mc{D} [\xx(t)] \mc{D}[p(t)] \, O(\xx(0))\cdot \exp\frac{\iii}{\hbar}\Big(
           \int_0^T \dr t\, \Big(
           \langle p,(\dd_t-P'')\xx\rangle -\\
           -\underbrace{\omega^{-1}\left(\kappa^\vee(p),\frac{\dd S}{\dd x}\right)}_{\mathbb{H}_a} -
           \underbrace{\omega^{-1}\left( p, \left\langle \kappa(x),\frac{\dd}{\dd x} \right\rangle \frac{\dd S}{\dd x}\right)}_{\mathbb{H}_b}
           -\underbrace{\omega^{-1}(p,\kappa^\vee(p))}_{\mathbb{H}_\eta}
           \Big)\Big|_{x\ra \xx}
           +\underbrace{\langle p(T),x_\mr{out}}_{\mathbb{B}} \rangle
           \Big).
\end{multline}
Here we used the splitting  $x(t)=\bar{x}_\mr{out}(t)+\xx(t)$ as in (\ref{x=xout+xhat}). Let us consider $\int_0^T \dr t\langle p,(\dd_t-P'')\wh{x} \rangle$ as the leading (kinetic) term and the rest of the exponential as a perturbation. We denote $\langle \cdots \rangle$ the Gaussian average with this kinetic term.

The propagator in this perturbation theory -- the line in Feynman graphs -- is
\begin{equation}\label{1d AKSZ propagator alpha}
    \alpha= \frac{\iii}{\hbar}\langle p(t)\otimes \xx(t')  \rangle = 
    %-\iii\hbar 
    \theta(t-t') e^{-(t-t')P''}\otimes \mr{id} \quad \in \Omega^0_\mr{distr}([0,T]^{\times 2},\F^*\otimes \F).
\end{equation}
We will draw it in Feynman graphs as an edge directed from $\xx$ to $p$.

Perturbations in (\ref{1d AKSZ PI}) correspond to vertices in Feynman graphs:

\begin{center}
\begin{tabular}{c|c}
   term in (\ref{1d AKSZ PI})  & vertex in Feynman graphs \\ \hline
    $\mathbb{H}_a\in \mr{Hom}(\F^*,S \F^*)$ & 
    $\vcenter{\hbox{\includegraphics[scale=0.8]{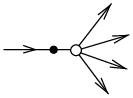}}}$\\
    $\mathbb{H}_b\in \mr{Hom}(\F^*,S \F^*)$ & 
    $\vcenter{\hbox{\includegraphics[scale=0.8]{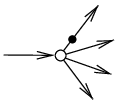}}}$ \\
    $\mathbb{H}_\eta\in \mr{Hom}(S^2\F^*,\mathbb{R})$ & 
    $\vcenter{\hbox{\includegraphics[scale=0.8]{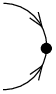}}}$ \\
    $\mathbb{B}\in \mr{Hom}(\F^*,\mathbb{R})$ & 
    $\vcenter{\hbox{\includegraphics[scale=0.8]{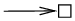}}}$ \\
    $O\in S\F^*$ & 
    $\vcenter{\hbox{\includegraphics[scale=0.8]{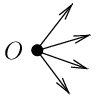}}}$
\end{tabular}
\end{center}
Here incoming half-edges are decorated by $p$ and outgoing ones by $\xx$. Black dot decorating one incident half-edge is a part of the graphical notation for $\mathbb{H}_a$, $\mathbb{H}_b$ (and corresponds to $\kappa^\vee$ or $\kappa$ hitting one $p$ or $x$ argument in the formulae for $\mathbb{H}_a$, $\mathbb{H}_b$); black dot in $\mathbb{H}_\eta$ vertex refers to $\kappa^\vee$ in the formula for $\mathbb{H}_\eta$.  We also adopt the (optional) graphical arrangement convention that outgoing half-edges go to the right, and incoming ones come from the left.

The perturbative formula for the path integral (\ref{1d AKSZ PI}) is then
\begin{equation}\label{1d AKSZ PI perturbative}
    (\ref{1d AKSZ PI})= \sum_\Gamma \frac{(-\iii\hbar)^{-\chi(\Gamma)}}{|\mr{Aut}(\Gamma)|} \Phi^\tau_\Gamma(O,x_\mr{out},T),
\end{equation}
where the sum is over oriented connected graphs $\Gamma$ 
%without short loops (edges connecting a vertex to itself), 
with a unique vertex $O$ and arbitrary finite numbers of $\HH_a$, $\HH_b$, $\HH_\eta$, $\mathbb{B}$ vertices (see Figure \ref{fig: 1d AKSZ Feynman graph}). 
\begin{figure}
    \centering
    \includegraphics[width=0.75\linewidth]{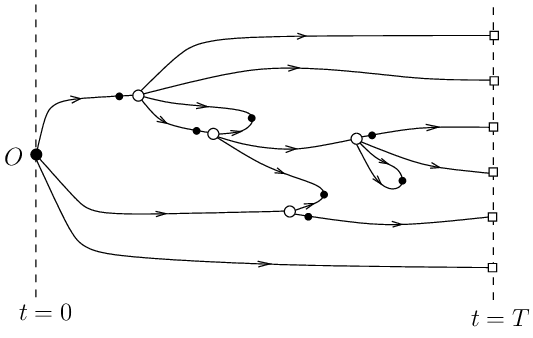}
    \caption{A typical Feynman graph contributing to the perturbative 1d AKSZ path integral (\ref{1d AKSZ PI perturbative}). 
    A particular configuration of times $t_v$ in the integrand in (\ref{Phi_Gamma^tau}) corresponds to a particular horizontal placement of vertices.
    }
    \label{fig: 1d AKSZ Feynman graph}
\end{figure}
We do not allow short loops (edges connecting a vertex to itself) in $\Gamma$. $\chi(\Gamma)$ is the Euler characteristic of the graph and $|\mr{Aut}(\Gamma)|$ is the order of the automorphism group. The weight of the graph is
%\marginpar{\bl May 10: $\pi^*_{uv}\alpha$ -- pullback of a distribution -- mathematically not very clean.. comment?}
\begin{multline}\label{Phi_Gamma^tau}
    \Phi_\Gamma^\tau(O,x_\mr{out},T) \\= \int_{[0,T]^{V_\mr{bulk}}} 
    \left.\left\langle \bigotimes_{v\in V_\mr{bulk}} \dr t_v \HH_{\lambda(v)} \otimes O\otimes \mathbb{B}^{\otimes V_\mathbb{B}},
    \bigotimes_{\mr{edges}\; e=(vu)} \pi^*_{uv}\alpha
    \right\rangle_\Gamma\right|_{t_{v_O}=0,t_{v_\mathbb{B}}=T}
\end{multline}
Here:
\begin{itemize}
    \item The set of vertices of $\Gamma$ is $V=V_\mr{bulk}\sqcup \{v_O\}\sqcup V_\mathbb{B}$ -- ``bulk'' vertices (i.e. those integrated in $t$) of types $\HH_a,\HH_b,\HH_\eta$, the unique vertex $O$ and the vertices $\mathbb{B}$; $\lambda(v)\in \{a,b,\eta\}$ is the type of a bulk vertex.
    \item $\pi_{uv}\colon [0,T]^V \ra [0,T]^{\times 2}$ is the map $(t_1,\ldots,t_V)\mapsto (t_u,t_v)$. 
    \item $\langle-,- \rangle_\Gamma$ is the canonical pairing 
    between $\F$ and $\F^*$ extended to tensor powers, with factors matched according to the combinatorics of $\Gamma$.
    %$(\F^*)^{\otimes \mr{HE}_x}\otimes \F^{\otimes \mr{HE}_p}$
    \item The integral is over the variables $t_v\in [0,T]$, $v\in V_\mr{bulk}$.
\end{itemize}
Note that, due to the form of the propagator (\ref{1d AKSZ propagator alpha}), the integrand in (\ref{Phi_Gamma^tau}) can be nonzero only if for every edge $\langle p(t) \xx(t') \rangle$ in the graph one has $t> t'$ (which is consistent with edges oriented left-to-right and $t$ increasing left-to-right in Figure \ref{fig: 1d AKSZ Feynman graph}).

\subsubsection{Specialization to $i_\ii$}
\label{sss: specialization to i_int}
Evaluating (\ref{1d AKSZ PI}), (\ref{1d AKSZ PI perturbative}) on an infrared observable $O=i(O')$ and taking the limit $T\ra \infty$, we obtain
\begin{equation}\label{1d AKSZ i_int Feynman graph expansion}
    i_\ii(O')(x_\mr{out})=\lim_{T\ra\infty} \sum_\Gamma \frac{(-\iii\hbar)^{-\chi(\Gamma)}}{|\mr{Aut}(\Gamma)|} \Phi^\tau_\Gamma(i(O),x_\mr{out},T)
\end{equation}
Here the contributing graphs -- see Figure \ref{fig:1d AKSZ i_int Feynman graph} -- are the same as in (\ref{1d AKSZ PI perturbative}), except $O$ is now replaced with $O'$, connected by edges $\frac{\iii}{\hbar}\langle p(t) \pi(\xx(0)) \rangle=P'$ to other vertices; we draw these ``infrared'' edges as dashed. 
\begin{figure}
    \centering
    \includegraphics[width=0.75\linewidth]{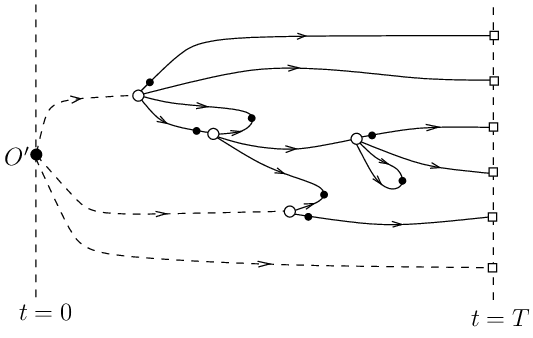}
    \caption{Typical Feynman graph contributing to the expansion of $i_\ii$ (\ref{1d AKSZ i_int Feynman graph expansion}.}
    \label{fig:1d AKSZ i_int Feynman graph}
\end{figure}
Note that $O'$ cannot be connected by edges to $\HH_a$ vertices or $\HH_\eta$ vertices, otherwise the graph vanishes trivially. 

Also note that, due to the exponential decay of the propagator (\ref{1d AKSZ propagator alpha}), the integral (\ref{Phi_Gamma^tau}) in the limit $T\ra\infty$ is supported on configurations $\{t_v\}$ where all $t_v$ are within $\mc{O}(1)$ of $T$. Thus, a typical configuration is where the dashed edges are very long and the other edges are finite.
%\marginpar{Is this written ok?}

Finally, in the Feynman graphs appearing in (\ref{1d AKSZ i_int Feynman graph expansion}) we immediately recognize the cable diagrams for $i_\ii$, Figure \ref{fig:i_int cable diagram}. Thick edges in the cable diagram correspond to edges without a black dot in Figure \ref{fig:1d AKSZ i_int Feynman graph}.

\begin{rem}
    One can also recover cable diagrams for $p_\ii$, by specializing (\ref{1d AKSZ PI}), (\ref{1d AKSZ PI perturbative}) to $x_\mr{out}=\iota(x'_\mr{out})$ and taking $T\ra\infty$:
    \begin{equation}
        (p_\ii O)(x'_\mr{out})=\lim_{T\ra\infty}\sum_\Gamma \frac{(-\iii\hbar)^{-\chi(\Gamma)}}{|\mr{Aut}(\Gamma)|} \Phi^\tau_\Gamma(O,\iota(x'_\mr{out}),T).
    \end{equation}
    In this case, one has dashed ``infrared'' edges $\frac{\iii}{\hbar}\langle \iota^\vee(p(T)) \xx(t)\rangle=P'$ connecting to the $\mathbb{B}$-vertices. In this case the graphs $\Gamma$ cannot contain $\HH_b$ vertices, otherwise the graph vanishes. See Figure \ref{fig:1d AKSZ Feynman graph for p_int} for a typical contributing graph. The integral (\ref{Phi_Gamma^tau}) in the limit $T\ra\infty$ is supported on configurations $t_v=\mc{O}(1)$, i.e., the dashed edges are long while the other edges are finite.
\begin{figure}
    \centering
    \includegraphics[width=0.75\linewidth]{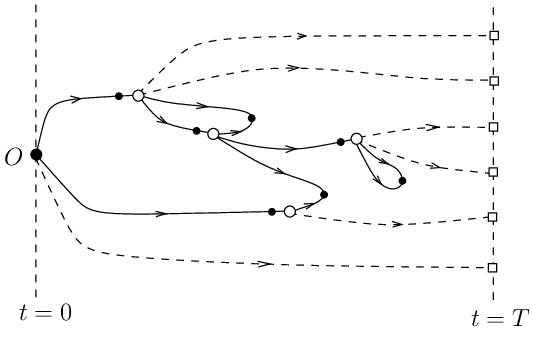}
    \caption{1d AKSZ Feynman graph for $p_\ii$.}
    \label{fig:1d AKSZ Feynman graph for p_int}
\end{figure}

Likewise, one recovers cable diagrams for $K_\ii$ by replacing $O\ra dT \wh\kappa (O)$ in (\ref{1d AKSZ PI}), (\ref{1d AKSZ PI perturbative}) and integrating in $T\in [0,\infty)$:
\begin{equation}
    (K_\ii O)(x_\mr{out})=\int_0^T  \sum_\Gamma \frac{(-\iii\hbar)^{-\chi(\Gamma)}}{|\mr{Aut}(\Gamma)|} \Phi^\tau_\Gamma(\dr T\, \wh\kappa(O),x_\mr{out},T).
\end{equation}
See Figure \ref{fig:1d AKSZ Feynman graph for K_int} for a typical contributing graph.
%\marginpar{\bl May 10: edit $dT\ra \dr T$ in figure}
\begin{figure}
    \centering
    \includegraphics[width=0.75\linewidth]{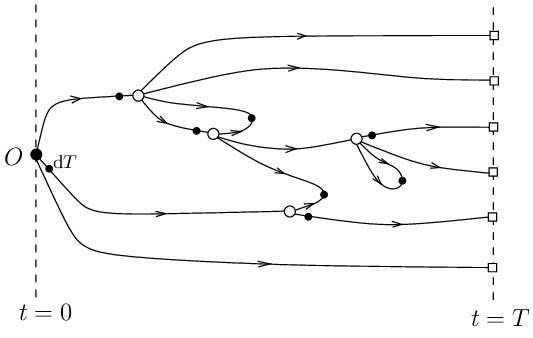}
    \caption{1d AKSZ Feynman graph for $K_\ii$. A new element compared to Figure \ref{fig: 1d AKSZ Feynman graph} is an extra black dot %(standing for $\kappa$ and coming with a $dT$ factor) 
    adjacent to $O$ (understood in this case as $\dr T\,\kappa$).}
    \label{fig:1d AKSZ Feynman graph for K_int}
\end{figure}
\end{rem}

%\marginpar{\bl rem added May 14}
\begin{rem}
    In the case $\F'=0$, we can consider the path integral (\ref{p_int as 1d AKSZ PI}) using the gauge-fixing of Remark \ref{rem: S^tau as bulk theory for S} instead of $\LL^\tau$:
    \begin{equation}\label{rem p_int PI with alternative gauge-fixing, eq1}
        \int \mc{D}x_\mr{in} \int \mc{D}\til{p} \; O(x_\mr{in})\; e^{\frac\iii\hbar S^\tau} \underset{\eqref{rem: S^tau as bulk theory for S, eq1}}{=}  \int \mc{D} x_\mr{in}\; O(x_\mr{in})\; e^{\frac{\iii}{\hbar} S(x_\mr{in})}.
    \end{equation}
    Here we understand that for each $x_\mr{in}$ we fix some extension $\til{x}$ of it from $t=0$ to the interval $[0,T]$, restricting to $x_\mr{out}=0$ at $t=T$. The integral over $\til{p}$ in the l.h.s.\ is with free boundary conditions. The integral in the r.h.s.\ still needs to be gauge-fixed, and then it yields the BV pushforward of $O$. This yields another (formal) proof of Theorem \ref{thm: Q' and p_int via BV pushforward} (ii) in the case $\F'=0$. Indeed, using the formal independence of the BV integral in theory $\tau$ on gauge-fixing and (\ref{p_int as 1d AKSZ PI}), we have that the r.h.s.\ (\ref{rem p_int PI with alternative gauge-fixing, eq1}) equals $p_\ii$ as defined by homological perturbation series.  Note that with this gauge-fixing the integral (\ref{rem p_int PI with alternative gauge-fixing, eq1}) does not depend on $T$ and taking the limit $T\ra\infty$ is a tautological operation.
\end{rem}

\subsection{A class of examples: the case when $(\F,\omega,S)$ is itself an AKSZ theory}
\label{ss: tau for F an AKSZ theory}

Let $(\F,\omega,S)$ be an AKSZ theory on a closed $n$-manifold $M$ with
\begin{equation}
    \F= \mr{Map}(T[1]M,Y)\cong \Omega^\bt(M,Y)
\end{equation}
with target $Y$ a graded vector space equipped with: 
\begin{itemize}
    \item 
    A constant symplectic form $\omega_Y$ of ghost degree $n-1$.
    \item A target Hamiltonian $\Theta_Y\in C^\infty(Y)$ of ghost degree $n$ satisfying the classical master equation $\{\Theta_Y,\Theta_Y\}_{\omega_Y}=0$.
\end{itemize}
%We denote $y^i$ the coordinates on $Y$ and $\bar{y}^i\in \Omega^\bt(M)$ the corresponding AKSZ superfields parametrizing $\F$.
We denote $y$ a point in $Y$ and $\bar{y}$ a point in $\F$ (an AKSZ superfield).

%Let $\mr{ev}\colon T[1] M\times \F \ra Y$ be the evaluation map.
The AKSZ action is
\begin{equation}
    S(\bar{y})=\int_M \frac12 \omega_Y(\bar{y},\dr_M \bar{y})+\Theta_Y(\bar{y})
    %\iota_{d_M^\mr{lifted}} \mr{ev}^* \alpha_Y +\mr{ev}^* \Theta_Y = \int_M \alpha_Y(\bar{y})_{ij}\bar{y}^i d_M\bar{y}^j + \Theta(\bar{y})
\end{equation}
and the $(-1)$-symplectic form on the space of fields is
\begin{equation}
    \omega= \int_M \frac12\omega_Y(\delta \bar{y},\delta \bar{y}).
    %\int_M \mr{ev}^* \omega_Y = 
    %\int_M \omega_Y(\bar{y})_{ij} \delta \bar{y}^i\wedge \delta \bar{y}^j.
\end{equation}
%Here $\alpha=\alpha_{ij}(y)y^i \delta y^j$, $\omega=\omega_{ij}(y)\delta y^i\wedge \delta y^j$ are the coordinate representation of the 1-form $\alpha_Y\in \Omega^1(Y)_{n-1}$ and the symplectic form $\omega_Y=\delta \alpha_Y \in \Omega^2(Y)_{n-1}$.

We can identify the cotangent bundle $T^*\F$ with\footnote{For $\F$ an infinite-dimensional graded vector space, we understand the dual  space $\F^*$ in the weak sense, i.e., as a graded vector space $W$ equipped with a degree zero bilinear form $b\colon \F\otimes W\ra \mathbb{R}$ such that the induced map $b^\#\colon \F\ra W^*$ is injective.}
\begin{equation}
    T^*\F\simeq \mr{Map}(T[1]M,T^*[n]Y)\cong \Omega^\bt(M,Y)\oplus \Omega^\bt(M,Y^*)[n] \;\; \ni (\bar{y},\bar\psi),
\end{equation}
where the canonical symplectic form on $T^*\F$ is 
$\int_M \langle \delta \bar\psi,\delta\bar{y} \rangle$. In our notations, $\psi$ is the cotangent fiber coordinate on $T^*[n]Y$. %\cong Y\oplus Y^*[n]$.
Note that, comparing with notations of (\ref{S^tau}), we have $x=\bar{y}$, $p=\bar\psi$.

The AKSZ theory $\tau$ then has the space of fields (\ref{1d AKSZ F^tau}):
\begin{multline}
    \F^\tau=\mr{Map}(T[1][0,T],\underbrace{\mr{Map}(T[1]M,T^*[n]Y)}_{T^*\F})
    \\ \cong \mr{Map}(T[1]([0,T]\times M), T^*[n]Y).
\end{multline}
In particular, it can be seen as an $(n+1)$-dimensional theory on a cylinder $N=[0,T]\times M$. The target AKSZ structure on $\mc{Y}=T^*[n]Y$ is: 
\begin{itemize}
    \item The canonical degree $n$ symplectic structure $\omega_\mc{Y}=\langle \delta\psi,\delta y \rangle$.
    \item The target AKSZ Hamiltonian 
    \begin{equation}
        \Theta_\mc{Y}=\omega_Y^{-1}\left(\psi,\frac{\dd \Theta_Y}{\dd y}\right)+\frac12 \omega_Y^{-1}(\psi,\psi)\quad \in C^\infty(\mc{Y})
    \end{equation}
    of ghost degree $n+1$.
\end{itemize}
Then the AKSZ action
\begin{equation}
    S^\tau(\til{y},\til\psi)=\int_{N} \langle \til\psi,\dr_N\til{y} \rangle + \Theta_\mc{Y}(\til{y},\til\psi)
\end{equation}
is the action (\ref{S^tau}). We denoted $\til{y},\til\psi$ the AKSZ superfields on $[0,T]\times M$ (as opposed to $\bar{y},\bar\psi$ -- forms % the AKSZ superfields 
on $M$). The $(-1)$-symplectic structure (\ref{omega^tau}) is
\begin{equation}
    \omega^\tau= \int_{N} \langle \delta\til\psi, \delta \til{y}\rangle.
\end{equation}

\begin{expl}
Let $(\F,\omega,S)$ be Chern--Simons theory on a 3-manifold $M$ with coefficients in a quadratic Lie algebra $\g$ (cf. Example \ref{example: Chern-Simons H^cl in Lorenz gauge}) -- an AKSZ theory with target $Y=\g[1]\ni y$, $\Theta_Y=\frac16\langle y,[y,y] \rangle$, $\omega_Y=\frac12\langle \delta y,\delta y \rangle$. Then $\tau$ is the 4d AKSZ theory on the cylinder $N=[0,T]\times M$ with target
\begin{equation}
\mc{Y}\cong \g[1]\oplus \g[2]\ni (y,\psi),
\end{equation}
with symplectic structure $\omega_\mc{Y}=\langle \delta\psi,\delta y\rangle$ and the Hamiltonian 
\begin{equation}
\Theta_\mc{Y}=\langle \psi,\frac12[y,y] \rangle+\frac12 \langle \psi,\psi \rangle.
\end{equation}
The AKSZ theory $\tau$ has the space of fields
\begin{equation}
    \F^\tau=\Omega^\bt(N,\g)[1]\oplus \Omega^\bt(N,\g)[2]\quad \ni (\A,\B)
\end{equation}
with $(-1)$-symplectic form $\int_N \langle \delta\B,\delta \A \rangle$ and the action
\begin{equation}
    S^\tau=\int_N \langle \B,\dr_N\A+\frac12[\A,\A] \rangle + \frac12 \langle \B,\B\rangle.
\end{equation}
One recognizes the 4d ``$BF+B^2$'' theory, or ``$BF$ theory with cosmological term.''
\end{expl}

\begin{expl}\label{example: theory tau for BF}
    Let $(\F,\omega,S)$ be $BF$ theory on an $n$-manifold $M$ with coefficients in a unimodular Lie algebra $\g$ -- an AKSZ theory with target $Y=T^*[n-1]\g[1]= \g[1]\oplus \g^*[n-2] \ni (y,p_y)$, with $\omega_Y=\langle \delta p_y, \delta y\rangle$ and $\Theta_Y=\langle p_y,\frac12[y,y]\rangle$. Then $\tau$ is the AKSZ theory on the $(n+1)$-dimensional cylinder $N=[0,T]\times M$ with target
    \begin{multline}
        \mc{Y}=T^*[n] T^*[n-1] \g[1] \\
        \cong (\underset{y}{\g[1]}\oplus \underset{P_\psi}{\g^*[n-2]})\oplus (\underset{P_y}{\g^*[n-1]}\oplus \underset{\psi}{\g[2]}) \cong T^*[n] (\underset{y}{\g[1]}\oplus \underset{\psi}{\g[2]}),
    \end{multline}
    where we indicated the names $y,\psi,P_y,P_\psi$ for the components of an element in $\mc{Y}$. The target $\mc{Y}$ is equipped with $n$-symplectic structure 
    \begin{equation}
        \omega_\mc{Y}=\langle \delta P_y, \delta y \rangle + \langle \delta P_\psi,\delta\psi \rangle
    \end{equation}
    and the Hamiltonian
    \begin{equation}
        \Theta_\mc{Y}=\langle P_y,\frac12[y,y]\rangle + \langle P_\psi,[y,\psi]\rangle +\langle P_y,\psi \rangle
    \end{equation}
    of degree $n+1$. AKSZ theory $\tau$ has the space of fields
\begin{equation}
    \F^\tau=\Omega^\bt(N,\g)[1] \oplus \Omega^\bt(N,\g)[2]\oplus \Omega^\bt(N,\g^*)[n-1] \oplus \Omega^\bt(N,\g^*)[n-2] \;\; \ni(\A,\alpha,\B,\beta)
\end{equation}
equipped with $(-1)$-symplectic form 
\begin{equation}
    \omega^\tau=\int_N \langle \delta\B,\delta\A \rangle + \langle \delta \beta,\delta \alpha\rangle
\end{equation}
and the action
\begin{equation}
    S^\tau=\int_N \langle \B, \dr_N \A+\frac12[\A,\A] \rangle + \langle \beta,\dr_N\alpha+[\A,\alpha]\rangle + \langle \B,\alpha \rangle.
\end{equation}
This is $BF$ theory on $N$ with coefficients in $\mathfrak{h}=\g\oplus \g[1]$ -- the Lie algebra $\g$ extended by the (shifted) adjoint module, seen as a \emph{differential graded} Lie algebra with differential $\g \xleftarrow{\mr{id}} \g[1]$.
\end{expl}

%\marginpar{\bl added May 14}
\begin{expl}\label{example: lifting IR Wilson loop via 2d theory tau}
    Consider Example \ref{example: theory tau for BF} with $M=S^1$, i.e., $(\F,\omega,S)$ is $BF$ theory on a circle. Consider the Lorenz gauge of Section \ref{sss 1d BF on S^1 Lorenz gauge} and an infrared observable $O'(A_\zm)$ given by a $G$-invariant function on $\g$. In this case, $\tau$ is the 2d $BF$ theory on a cylinder $N=S^1\times [0,T]$ with coefficients in the dg Lie algebra $\mathfrak{h}=\g\oplus \g[1]$. Formula (\ref{i_int as 1d AKSZ PI}) becomes
    \begin{equation}\label{i_int(O')(A_out) as integral in BF theory on the cylinder}
        i_\ii(O')(A_\mr{out})=\lim_{T\ra\infty}\int 
        %\mc{D}\mc{A}\mc{D}\mc{B} \mc{D}\alpha \mc{D}\beta\; 
        O'\left(\oint_{S^1} A\big|_{t=0}\right)\; e^{\frac\iii\hbar S^\tau|_{\LL^\tau}}.
    \end{equation}
    Here the outer integral is over fields $\A,\B,\alpha,\beta$ on the cylinder $N$, subject to boundary conditions $\A|_{t=T}=A_\mr{out}, \beta|_{t=T}=0$, $\alpha|_{t=0}=\mc{B}|_{t=0}=0$, and restricted to the gauge-fixing Lagrangian $\LL^\tau$ (which fixes the components of fields which are 1-forms along $[0,T]$).\footnote{
    Due to a conflict between notations for fields convenient for 2d theory vs.\ 1d theory, the fields of 1d $BF$ theory on $S^1$ in this example are $\mc{A},\beta$ instead of $\mc{A},\mc{B}$ as in Section \ref{sss 1d BF on S^1 Lorenz gauge}.
    }

    The equations of motion generated by the gauge-fixed action $S^\tau|_{\LL^\tau}$ extend $A_\mr{out}$ to a flat connection in the cylinder\footnote{
    More explicitly: let $s$ be the coordinate on $S^1$ and $t$ be the coordinate on $[0,T]$. Decomposing the $\g$-valued 1-form on the cylinder as $A=A_s ds + A_t dt$, flatness reads $\partial_t A_s = \partial_s A_t + [A_s,A_t]$. The equation of $\LL^\tau$ (\ref{L^tau}), using the non-normalized chain homotopy $\bar\kappa=\dr^*_s$ associated to Lorenz gauge (\ref{1d BF on S^1 barkappa}), 
    %of Section \ref{sss 1d BF on S^1 Lorenz gauge}, 
    yields $A_t=-\partial_s A_s$. Thus, we have a non-linear 
    %perturbation of the 
    heat equation for $A_s$:  ${\partial_t A_s= -\partial^2_s A_s -[A_s,\partial_s A_s]}$.
    } (so the conjugacy class of its holonomy is constant for each cross-section of the cylinder). In the asymptotics $T\ra\infty$, this extension restricts to a constant connection on the in-circle $S^1\times \{t=0\}$, which proves that semi-classically one has the result (\ref{i_int^cl(O') for a class function on S^1}):
    \begin{equation}
        i_\ii^\cl(O')(A_\mr{out})=O'(-\log P \exp \left(-\oint_{S^1} A_\mr{out}\right)).
    \end{equation}
    \begin{figure}[h]
        \centering
        \includegraphics[width=0.5\linewidth]{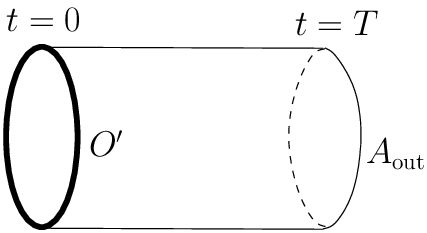}
        \caption{Illustration for Example \ref{example: lifting IR Wilson loop via 2d theory tau}: $BF$ theory on a cylinder with boundary condition for $A$ on one side ($t=T$) and observable $O'\left(A_\zm=\oint_{S^1\times \{0\}}A\right)$ on the other side yields in the limit $T\ra\infty$ the lift of the observable $i_\ii(O')(A_\mr{out})$, cf. (\ref{i_int(O')(A_out) as integral in BF theory on the cylinder}).
        }
        \label{fig:placeholder}
    \end{figure}
\end{expl}

\end{document}